\newcommand{\lap}{\mathcal{L}}
\newcommand{\mlap}{\lap}
\newcommand{\Rset}{\mathbb{R}}
\newcommand{\phideriv}{\mathbf{\Psi}}
\newcommand{\pseudo}{{\dagger}}
\newcommand{\otilde}{\widetilde{O}}
\newcommand{\constb}{k}
\newcommand{\oracle}{\mathtt{MOracle}}
\newcommand{\calM}{\mathcal{M}}
\newenvironment{fminipage}%
  {\begin{Sbox}\begin{minipage}}%
  {\end{minipage}\end{Sbox}\fbox{\TheSbox}}
\newcommand{\lowerb}{\gamma}
\newcommand{\given}{\lowerb \mi \preceq \sum_{i=1}^{d}w_i \mm_{i} \preceq \mi}
\newcommand{\out}{(1- O( \epsilon) ) \lowerb \mi \preceq \sum_{i=1}^{d}w'_i \mm_{i} \preceq  \mi}
\newcommand{\rt}{\tilde{O}\left(\frac{1}{\gamma^2\error^{O(1)}}\left(\runtime_{\mb} + \runtime_{\mb^{-1}} + \runtime_{MV} + \runtime_{QF}\right) \right)}
\newcommand{\trt}{\tilde{O}\left(\frac{1}{(\epsilon\cdot \lowerb)^{O(1)}}\left(\runtime_{\mb} + \runtime_{\mb^{-1}} + \runtime_{MV} + \runtime_{QF}\right)\right)}
\newcommand{\fnf}{\mathbf{f}}
\newcommand{\speed}{\frac{1}{2}}
\newcommand{\glap}{\mathcal{L}_G}
\newcommand{\remove}[1]{}
\newcommand{\R}{\mathbb{R}}
\newcommand{\E}{\mathbb{E}}
\newcommand{\rot}{\intercal}
\newcommand{\tr}{\mathrm{tr}}
\newcommand{\cp}{\mathbf{C}_{+}}
\newcommand{\cn}{\mathbf{C}_{-}}
\newcommand{\cs}{\mathbf{C}}
\renewcommand{\leq}{\leqslant}
\renewcommand{\geq}{\geqslant}
\renewcommand{\le}{\leqslant}
\renewcommand{\tilde}{\widetilde}
\renewcommand{\epsilon}{\varepsilon}
\newcommand{\nnz}{\text{nnz}}
\newcommand{\mvar}[1]{\textbf{#1}}
\newcommand{\mi}{\mvar{I}}
\newcommand{\ma}{\mvar{A}}
\newcommand{\mb}{\mvar{B}}
\newcommand{\mc}{\mvar{C}}
\newcommand{\md}{\mathbf{D}}
\newcommand{\mzero}{\mathbf{0}}
\newcommand{\mm}{\mvar{M}}
\newcommand{\mDelta}{\mathbf{\Delta}}
\newcommand{\mPi}{\mathbf{\Pi}}
\newtheorem{thm}{Theorem}[section]  
\newtheorem{theorem}[thm]{Theorem}
\renewcommand{\tilde}{\widetilde}
\newaliascnt{lemma}{thm}
\newtheorem{lemma}[lemma]{Lemma}
\newaliascnt{lem}{thm}
\newtheorem{lem}[lemma]{Lemma}
\newaliascnt{defi}{thm}
\newtheorem{defi}[defi]{Definition}
\newaliascnt{prob}{thm}
\def\equationautorefname~#1\null{Equation~(#1)\null}	
\numberwithin{equation}{section}
\renewcommand{\vec}[1]{#1}
\newcommand{\runtime}{\mathcal{T}}
\newcommand{\defeq}{\stackrel{\scriptscriptstyle{\text{def}}}{=}}
\newcommand{\onesVec}{\vec{1}}
\newcommand{\zeroVec}{\vec{0}}
\newcommand{\vones}{\onesVec}
\newcommand{\MTPtwo}{{\mathrm{MTP}_2}}
\newcommand{\error}{\rho}
\newcommand{\mH}{\mvar{H}}
\newcommand{\eye}{\mvar{I}}
\newcommand{\vvar}[1]{\vec{#1}}
\newcommand{\vb}{\vvar{b}}
\newcommand{\vx}{\vvar{x}}
\newcommand{\ve}{\vvar{e}}
\newcommand{\vzeros}{\vvar{0}}
\newcommand{\vv}{\vvar{v}}
\newcommand{\mx}{\mvar{X}}
\begin{document}

\title{Efficient Structured Matrix Recovery and
Nearly-Linear Time Algorithms for Solving Inverse Symmetric $M$-Matrices
}

\author{Arun Jambulapati \\
Stanford University\\
jmblpati@stanford.edu
\and
Kirankumar Shiragur \\
Stanford University\\
shiragur@stanford.edu
\and 
Aaron Sidford \\
Stanford University\\
sidford@stanford.edu
}

\date{}

\maketitle

\begin{abstract}
In this paper we show how to recover a spectral approximations to broad classes of structured matrices using only a polylogarithmic number of adaptive linear measurements to either the matrix or its inverse. Leveraging this result we obtain faster algorithms for variety of linear algebraic problems. Key results include:
\begin{itemize}
\item A nearly linear time algorithm for solving the inverse of symmetric $M$-matrices, a strict superset of Laplacians and SDD matrices.
\item An $\tilde{O}(n^2)$ time algorithm for solving $n \times n$ linear systems that are constant spectral approximations of Laplacians or more generally, SDD matrices.
\item An $\tilde{O}(n^2)$ algorithm to recover a spectral approximation of a $n$-vertex graph using only $\tilde{O}(1)$ matrix-vector multiplies with its Laplacian matrix.
\end{itemize}
The previous best results for each problem either used a trivial number of queries to exactly recover the matrix or a trivial $O(n^\omega)$ running time, where $\omega$ is the matrix multiplication constant.

We achieve these results by generalizing recent semidefinite programming based linear sized sparsifier results of Lee and Sun (2017) and providing iterative methods inspired by the semistreaming sparsification results of Kapralov, Lee, Musco, Musco and Sidford (2014) and input sparsity time linear system solving results of Li, Miller, and Peng (2013). 
We hope that by initiating study of these natural problems, expanding the robustness and scope of recent nearly linear time linear system solving research, and providing general matrix recovery machinery this work may serve as a stepping stone for faster algorithms. 

\end{abstract}

\newpage

\section{Introduction}

Given a $n$ vertex undirected graph $G$ with non-negative edge weights its Laplacian $\lap \in \R^{n \times n}$ is defined as $\mlap = \md - \ma$ where $\md$ is its weighted degree matrix and $\ma$ its weighted adjacency matrix. This matrix is fundamental for modeling large graphs and the problem of solving Laplacian systems, i.e. $\lap x = b$, encompasses a wide breadth of problems including computing electric current in a resistor network, simulating random walks on undirected graphs, and projecting onto the space of circulations in a graph (see \cite{Teng10, Vishnoi13} for surveys). Correspondingly, the problem of solving Laplacian systems is incredibly well-studied and in a celebrated result of Spielman and Teng in 2004 \cite{SpielmanT04} it was shown that Laplacian systems can be solved in nearly linear time. 

Over the past decade nearly linear time Laplacian system solving has emerged as an incredibly powerful hammer for improving the asymptotic running time of solving a wide variety of problems \cite{DaitchS08,KyngRSS15,CohenMSV17,CohenMTV17}. The fastest known algorithms for solving a variety of problems including, maximum flow \cite{ChristianoKMST11,LeeRS13,KLOS14,LS14,Madry16}, sampling random spanning trees \cite{KelnerM09,DurfeeKPRS17,Schild18}, and graph clustering \cite{SpielmanT04,OrecchiaV11,OrecchiaSV12} all use the ability to solve Laplacian systems in nearly linear time. Moreover, beyond the direct black-box use of Laplacian systems solvers, there has been extensive work on developing new algorithms for provably solving Laplacian systems in nearly linear time \cite{KoutisMP10,KoutisMP11,KOSZ13,LS13,CohenKMPPRX14,PengS14,KyngLPSS16,KyngS16}, many of which have had broad algorithmic implications. 

However, despite the prevalence of Laplacians systems and the numerous algorithmic approaches for solving them the class of linear systems solvable by this approach is in some ways brittle. Though there have been interesting extensions to solving block diagonally dominant systems \cite{KyngLPSS16}, M-matrices \cite{DaitchS08,ajss18}, and directed Laplacians \cite{CohenKPPSV16,CohenKPPRSV17} there are still simple classes of matrices closely related to Laplacians for which our best running times are achieved by ignoring the graphical structure of the problem and using fast matrix multiplication (FMM) black box. 

\paragraph{Laplacian Pseudoinverses:} One natural instance of this is solving linear systems in the \emph{Laplacian psuedoinverse}. Here we are given a matrix $\mm \in \R^{n \times n}$ which we are promised is the psuedoinverse of a Laplacian, i.e. $\mm = \lap^\dagger$ for some Laplacian $\mlap \in \R^{n \times n}$, and wish to solve the linear system $\mm x = b$ or recover a spectral sparsifier of the associated graph. This problem arises when trying to fit a graph to data or recover a graph from effective resistances, a natural distance measure (see \cite{effResRecArx18} for motivation and discussion of related problems.) More broadly, the problem of solving linear systems in inverse symmetric M-matrices is prevalent and corresponds to statistical inference problems involving distributions that are multivariate totally positive of order 2 ($\MTPtwo$) \cite{KARLIN1983419,mm14,fallat2017}.

\paragraph{Perturbed Laplacians:} Another natural instance of this is solving linear systems in spectral approximations of Laplacians or small perturbations of Laplacians. Here we are given a matrix $\mm \in \R^{n \times n}$ which we are promised is close spectrally to a Laplacian, i.e. $\gamma\mlap \preceq \mm \preceq  \mlap$ for Laplacian $\mlap \in \R^{n \times n}$ and $\gamma>0$, and wish to solve the linear system $\mm x = b$. Such systems could occur naturally in numerical settings where Laplacian matrices are used as approximations of physical phenomena \cite{BomanHV08}.\\
\\
Despite the abundant graph structure in these problems and wealth of machinery for provably solving Laplacian systems, the best known running times for each of these problems are achieved by ignoring this structure and solving them as arbitrary linear systems using FMM, yielding a naive running time of $O(n^\omega)$ where $\omega < 2.373$ is the matrix multiplication constant \cite{Williams12}. 

In this paper we initiate a theoretical investigation into these natural linear system solving problems and provide improved $\tilde{O}(n^2)$ time algorithms for solving inverse M-matrices, Laplacian pseudo-inverses and perturbed Laplacians (when $\gamma \in \Omega(1/\mathrm{poly}(\log n))$). Consequently, our algorithms run in nearly linear time whenever the input matrix is dense. This is typically the case for Laplacian psuedoinverses and can easily happen for perturbed Laplacians. Furthermore, we show that the inverse of symmetric M-matrices are either block separable or dense  and therefore we can solve them in nearly linear time (see \autoref{sec:inv_dense}). 

To obtain these results we provide more general results on \emph{matrix recovery} that we believe are surprising and of independent interest. For example, we prove that given an oracle which can make black-box matrix vector queries to a matrix $\mm \in \R^{n \times n}$ such that $\mm = \sum_{i \in [d]} \alpha_i \mm_i$ where each $\alpha_i \geq 0$ and $\mm_i$ is symmetric positive semidefinite (PSD) we can recover a spectral approximation to $\mm$ using only $\otilde(1)$ queries. 
This result is analogous (but not quite comparable) to a result in \cite{KapralovLMMS14} providing streaming spectral sparsifiers. 
The result is perhaps surprising as the more general problem of recovering an arbitrary PSD matrix from $\otilde(1)$ linear measurements is impossible \cite{AndoniCKQWZ16}.

We achieve our results through a careful combination and adaptation of several previous algorithmic frameworks sparsification and solving linear systems. In particular we show how to extend the semidefinite programming and sparsification framework of Lee and Sun \cite{ls17} and apply it iteratively through linear system solving machinery introduced in the work of Li, Miller, and Peng \cite{LiMP13} similar to how it was applied in Kapralov, Lee, Musco, Musco and Sidford \cite{KapralovLMMS14}. While further insights are needed to adapt this framework for solving Laplacian psuedoinverses and inverse M-matrices, we believe the primary strength of this work is demonstrating that these natural (but overlooked) linear system solving problems can be solved easily and by providing general and powerful frameworks to solve them.

We remark that in \cite{CG18} there was also a result similar to \autoref{thm:recmain}. This result too followed from a careful interpretation of \cite{ls17} and was done in the case of graphs for a different applications. Our \autoref{thm:recmain} slightly generalizes both \cite{ls17, CG18} in our ability to handle arbitrary $\gamma > 0$ and emphasis on working in restricted oracle models.

Our work is a crucial step towards answer the following fundamental question, \emph{when is applying a matrix to a vector no harder (up to polylogarithmic factors) than solving a linear system in that matrix or approximately recovering its quadratic form}? Our work provides a positive answers in terms of query complexity and running time for a broad class of systems. We hope this work can serve as a stepping stone to further expand the scope of provably efficient linear systems solving.

\paragraph{Paper Organization} 
In \autoref{sec:prelim} we cover preliminaries and notation used through the paper. In \autoref{sec:results} we provide a formal statement of the main results of this paper and compare to previous work. In \autoref{sec:solvers} we 
present the proof of our main results given the generalization of \cite{ls17} and in \autoref{sec:recovery} we prove the generalization of \cite{ls17}. In \autoref{sec:square_root} we provide details about approximating matrix square roots we use in our algorithm. In \autoref{sec:JL}
 we prove additional details about how to efficiently implement the algorithm in \autoref{sec:recovery} and in \autoref{sec:inv_dense} we provide additional details about how to solve symmetric inverse M-matrices in nearly linear time.

\newcommand{\kernel}{\mathrm{ker}}
\newcommand{\vecx}{x}
\newcommand{\MM}{\mm}
\newcommand{\M}[1]{\textbf{#1}}
\newcommand{\OO}[1]{\otilde(#1)}
\newcommand{\Half}{\frac{1}{2}}
\section{Preliminaries}
\label{sec:prelim}

Here we provide an overview of the notation and basic mathematical facts we use throughout. Given our extensive use machinery from \cite{ls17} we choose similar notation.

\subsection{Notation}

\textbf{Basics}: We let $[n] \defeq {1,...,n}$, $\onesVec$ denote the all ones vector, $\zeroVec$ denote the all zeros vector, $\mi$ to denote the identity matrix, and $\mzero$ to denote the all zeros matrix. We let $\nnz(\ma)$, $\nnz(v)$ denote number of non-zero entries of matrix $\ma$ and vector $v$.
We use $\otilde(\cdot)$ to hide terms logarithmic in $n, m, \epsilon$ and condition number of PSD matrices. 

\textbf{Structured Matrices}: We call a matrix $\ma$ a \emph{Z-matrix} if $\ma_{ij} \leq 0$ for all $i \neq j$. We call a matrix $\mlap$ a \emph{Laplacian} if it is a symmetric Z-matrix with $\mlap \onesVec = \zeroVec$. We call a matrix $\ma$ \emph{diagonally dominant (DD)} if $\ma_{ii} \geq \sum_{j \neq i} \ma_{ij}$ for all $i$ and \emph{symmetric diagonally dominant (SDD)} if it is symmetric and DD. Further, we call $\mm$ an invertible $M$-matrix if $\mm = s \mi- \ma$ where $s > 0$, $\ma \in \R^{n \times n}_{\geq 0}$ and $\rho(\ma) < s$ where $\rho(\ma)$ is the spectral radius of $\ma$.

\textbf{Graph Laplacians}: For an undirected graph $G = (V, E)$ with non-negative edge weights $w \in \R^E_{> 0}$ its Laplacian $\lap \in \R^{V \times V}$ is defined for all $i,j \in V$ by $\lap_{i,j} = -w_{i,j}$ if $i \neq j$ and $\{i, j\} \in E$, $\lap_{i, j} = 0$ if $i \neq j$ and $\{i, j\} \notin E$, and $\lap_{i,j} = \sum_{k \neq i} \lap_{i,k}$ if $i = j$. Further we let $\mlap_{K_n} \defeq n \mi - \vones \vones^\top$ denote the Laplacian of the unit weight complete graph. Note that all Laplacians are SDD

\textbf{Spectrum}: For symmetric matrix $\ma \in \R^{n \times  n}$ we let $\lambda_1(\ma) \leq \lambda_2(\ma) \leq ... \leq \lambda_{n}(\ma)$ denote the eigenvalues of $\ma$ and let $\lambda_{\min} \defeq \min_{i \in [n], \lambda_{i} \neq 0} \lambda_i(\ma)$ and $\lambda_{\max} \defeq \lambda_n(\ma)$. Further, we let $\kappa(\ma) \defeq \lambda_{\max}(\ma) / \lambda_{\min}(\ma)$ denote the condition number of $\ma$ and $\kernel(\ma)$ denote the kernel of $\ma$.

\textbf{Positive Semidefiniteness}: We call a symmetric matrix $\ma \in \R^{n \times n}$ positive semi-definite (PSD) if and only if $x^\top \ma x \geq 0$ for all $ x \in \R^{n}$ or equivalently all the eigenvalues of $\ma$ $\lambda_i(\ma) \geq 0 $ for all $i \in [n]$. We use notation $\ma \succeq 0$ to say matrix $\ma$ is PSD, $\ma \succeq \mb$ to denote the condition $\ma-\mb \succeq \mzero$, and define $\preceq$, $\succ$, and $\prec$ analogously. We call a symmetric matrix $\ma$ positive definite (PD) iff $\ma \succ 0$. We further use $\ma \approx_{\epsilon} \mb$ to denote $(1-\epsilon) \ma \preceq \mb \preceq (1+\epsilon) \ma$. We call a \emph{Z-matrix} $\ma$ SDDM if it is DD and PD.

\textbf{Functions of PSD Matrices}: For any matrix $\ma \succeq 0$ and a positive real valued function $\fnf$, we use $\fnf(\ma)$ to denote the matrix with eigenvalues $\fnf(\lambda_i(\ma))$ for all $i \in [n]$ and eigenvectors unchanged. For a PSD matrix $\ma$, its pseudo-inverse, denoted  $\ma^{\dagger}$, is defined as the matrix with same kernel and non-zero eigenvalues inverted while keeping all  eigenvectors unchanged. 

\textbf{Norm}: For any vector $\vecx \in \R^{n}$, matrix $\ma \in \R^{n \times n}$ and a symmetric PSD matrix $\mH \in \R^{n \times n}$, we define $\|\vecx\|_{\mH}\defeq\sqrt{\vecx^{\top}\mH \vecx}$ and $\|\ma\|_{\mH}\defeq\max_{\vecx \neq \zeroVec} \frac{\|\ma\vecx\|_{\mH}}{\|\vecx\|_{\mH}}$. Note that $\|\ma\|_{\mH} = \|\mH^{1/2} \ma \mH^{-1/2}\|_2$

$(1 - \epsilon)$-\textbf{Spectral Approximation}: We say a matrix $\ma$ is \emph{$(1-\epsilon)$-spectral approximation} to matrix $\mb$ if and only if $(1-\epsilon)\mb \preceq \ma \preceq (1+\epsilon)\mb $. We use notation $\ma \approx_{\epsilon} \mb$ to denote that  $\ma$ is $\epsilon$-spectral approximation to $\mb$.

\textbf{Matrix Operations}: For matrices $\ma$ and $\mb$ we let $\ma \bullet \mb \defeq \tr(\mb^\top \ma) = \sum_{i,j} \ma_{ij} \mb_{ij}$. Often we use $\runtime_{\ma}$ denote the time required to compute $\ma x$ for any vector $x$.

\subsection{Linear System Solving}

In this paper we make extensive use of preconditioned iterative methods to solve linear systems. In particular we use the following theorem which is a special case of Lemma 4.2 in \cite{CohenKPPRSV17}.

\begin{thm}[Preconditioned Richardson \cite{CohenKPPRSV17}]
\label{thm:precon_rich}
Let $\ma, \mb, \mH \in \R^{n \times n}$ be symmetric PSD matrices such that $\kernel(\mH) \subseteq \kernel(\ma)=\kernel(\mb)$ and let $c \defeq  \|\eye - \eta \mb^{-1} \ma\|_{\mH}$ for $\eta > 0$.  Then  for all $b \in \R^{n}$ and $t > 0$ the point $x_t$ in \autoref{alg:precon_rich} satisfies 
$
\| x_t - \ma^\pseudo b\|_{\mH} \leq c^t \|\ma^\pseudo b\|_{\mH}    
$.

In particular if $\frac{1}{r}\ma \preceq \mb \preceq  \ma$ and $\kappa=\kappa(\mb)$ then \autoref{alg:precon_rich}  yields 
$
  \|x_t - \ma^\pseudo b\|_2 \leq \epsilon \|\ma^\pseudo b\|_2
$
for  $t = \Omega(r \log(\kappa/\epsilon))$ for  $\eta = 1/r$ and $\mH=\mb$. 
\end{thm}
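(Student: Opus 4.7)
The plan is to analyze the standard Preconditioned Richardson update
\[
x_{t+1} = x_t + \eta \mb^{\pseudo}(b - \ma x_t), \qquad x_0 = \vzero,
\]
by tracking the error $e_t \defeq x_t - \ma^{\pseudo} b$ and showing it contracts by a factor of $c$ in the $\mH$-norm at each step. First I would decompose $b = \ma \ma^{\pseudo} b + b_0$ where $b_0$ is the projection of $b$ onto $\kernel(\ma)$; the hypothesis $\kernel(\mb) = \kernel(\ma)$ forces $b_0 \in \kernel(\mb)$, so $\mb^{\pseudo} b_0 = \vzero$ and the update simplifies to $x_{t+1} = x_t + \eta \mb^{\pseudo}\ma(\ma^{\pseudo}b - x_t)$. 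Subtracting $\ma^{\pseudo} b$ from both sides yields the clean recurrence
\[
e_{t+1} = (\eye - \eta \mb^{\pseudo}\ma)\, e_t,
\]
from which $\|e_{t+1}\|_\mH \leq c\,\|e_t\|_\mH$ follows directly from the definition of the $\mH$-operator norm. Iterating and using $e_0 = -\ma^{\pseudo}b$ gives the first conclusion.

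For the second statement I would specialize to $\eta = 1/r$ and $\mH = \mb$ and bound $c$ by symmetrizing. The identity $\|\eye - \eta\, \mb^{\pseudo}\ma\|_\mb = \|\eye - \eta\, \mb^{\pseudoRoot} \ma\, \mb^{\pseudoRoot}\|_2$ (read on $\Range(\mb)$) reduces the task to controlling the eigenvalues of $N \defeq \mb^{\pseudoRoot} \ma\, \mb^{\pseudoRoot}$. Conjugating the sandwich $\tfrac{1}{r}\ma \preceq \mb \preceq \ma$ by $\mb^{\pseudoRoot}$ and restricting to $\Range(\mb) = \Range(\ma)$ gives $\eye \preceq N \preceq r\,\eye$, so every eigenvalue of $\eye - \tfrac{1}{r}N$ lies in $[0, 1 - 1/r]$ and hence $c \leq 1 - 1/r$.

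Finally, to convert the $\mb$-norm decay into the claimed $\ell_2$ guarantee I would use that every iterate lies in $\Range(\mb^{\pseudo}) = \Range(\ma)$, as does $\ma^{\pseudo}b$, so the same holds for each $e_t$. On this subspace $\mb$ has eigenvalues in $[\lambda_{\min}(\mb), \lambda_{\max}(\mb)]$, yielding
\[
\|e_t\|_2 \;\leq\; \tfrac{1}{\sqrt{\lambda_{\min}(\mb)}}\,\|e_t\|_\mb \;\leq\; c^t\,\sqrt{\kappa(\mb)}\,\|\ma^{\pseudo}b\|_2,
\]
and $\log(1/(1-1/r)) \geq 1/r$ shows that $t = \Omega(r\log(\kappa/\epsilon))$ suffices. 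The only real piece of care is the kernel bookkeeping: the hypothesis $\kernel(\mH) \subseteq \kernel(\ma) = \kernel(\mb)$ is used both to kill the component $b_0$ when deriving the recurrence and to guarantee that the $\mH$-seminorm is a genuine norm on the subspace where the iterates live. I do not expect a substantive obstacle beyond this uniform handling of pseudoinverses on the correct range.
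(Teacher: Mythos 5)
Your proof is correct and is the standard contraction argument for preconditioned Richardson; the paper itself gives no proof of this theorem, citing it as a special case of Lemma~4.2 of \cite{CohenKPPRSV17}, and the argument there is essentially the one you give (derive $e_{t+1} = (\eye - \eta\mb^{\pseudo}\ma)e_t$, bound the operator norm by conjugating with $\mb^{\pseudoRoot}$, then pay a $\sqrt{\kappa}$ factor to pass from the $\mb$-norm to $\ell_2$). Your kernel bookkeeping — that all errors live in $\Range(\ma)=\Range(\mb)\subseteq\Range(\mH)$, where the $\mH$-seminorm is a genuine norm — is exactly the care the statement's hypotheses are designed to permit, so there is nothing to flag.
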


\begin{algorithm}
\begin{algorithmic}[1]
\caption{$\mathtt{PrecondRich}$($\ma \in \R^{n \times n}$, $b \in \R^n$, $\mb$, $\eta > 0$)}
    \State $x_0 = 0$;
    \State $i = 0$;
    \While{not converged}
    \State $x_{i+1} = x_i - \eta \mb^{-1}[\ma x_i - b]$;
    \State $i = i+1$
    \EndWhile
\label{alg:precon_rich}
\end{algorithmic}
\end{algorithm}

This analysis of Preconditioned Richardson Iteration (\autoref{thm:precon_rich}) yields that if we want to solve linear systems in $\ma$ but can only efficiently solve linear systems in $\mb$ that is spectrally close to $\ma$, then we can solve linear systems in $\ma$ to any desired precision $\epsilon$ by performing only $O(\log(1/\epsilon))$ many matrix vector products with $\ma$ and linear system solves on $\mb$. To simplify our narrative and proofs throughout this paper, we assume that all such linearly convergent algorithms return an exact solution in $\otilde(1)$ such operations. Further, if the cost of each iteration of such a method is $\runtime$ we will simply write that we can solve linear systems in $\ma$ in $\otilde(\runtime)$. This assumption is standard and can be easily removed at the cost of logarithmic factors by standard (albeit tedious) techniques.

\subsection{Matrix Square Roots}

In this paper we also make use of a slightly more general version of a result for SDD matrices from \cite{CCLPT14} which helps to prove our \autoref{thm:recovery2}. The proof for this lemma is deferred to \autoref{sec:square_root} and is exactly the same as in \cite{CCLPT14} with slightly generalization. 

\begin{restatable}[Approximating Square Root]{lemma}{lemprecond}
\label{lem:precond}
For any symmetric PD matrix $\MM$, any square matrix $\M{Z}$ such that $\alpha \M{Z} \M{Z}^{\top} \preceq \MM^{-1} \preceq \M{Z} \M{Z}^{\top}$ for $\alpha \in [0,1]$, and any error tolerance $\epsilon > 0$, there exists a linear operator $\tilde{\M{C}}$ which is an $O{(\frac{1}{\alpha}\log(1/(\epsilon\alpha))}$-degree  polynomial of $\MM$ and $\M{Z}$, such that $ \MM^{-1} \approx_{\epsilon} \tilde{\M{C}} \tilde{\M{C}}^{\top}$.
\end{restatable}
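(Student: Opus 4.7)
The plan is to construct $\tilde{\M{C}}$ via a truncated Neumann-style series, reducing the problem to approximating $\mathbf{Y}^{-1/2}$ where $\mathbf{Y} := \M{Z}^\top \MM \M{Z}$. First, since $\MM \succ 0$ and $\MM^{-1} \preceq \M{Z}\M{Z}^\top$, we have $\M{Z}\M{Z}^\top \succ 0$, so $\M{Z}$ (being square) is invertible. Conjugating the sandwich $\alpha\M{Z}\M{Z}^\top \preceq \MM^{-1} \preceq \M{Z}\M{Z}^\top$ by $\MM^{1/2}$ yields $\mi \preceq \MM^{1/2}\M{Z}\M{Z}^\top\MM^{1/2} \preceq (1/\alpha)\mi$, and since $\M{Z}^\top\MM\M{Z}$ and $\MM^{1/2}\M{Z}\M{Z}^\top\MM^{1/2}$ share their nonzero spectrum, $\mathbf{Y}$ is PD with eigenvalues in $[1, 1/\alpha]$. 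Invertibility of $\M{Z}$ then gives the identity $\MM^{-1} = \M{Z}\,\mathbf{Y}^{-1}\,\M{Z}^\top$, so it suffices to construct a polynomial $q$ with $q(\mathbf{Y})^2 \approx_\epsilon \mathbf{Y}^{-1}$ and set $\tilde{\M{C}} := \M{Z}\, q(\mathbf{Y})$: since conjugation preserves $\approx_\epsilon$, we will then have $\tilde{\M{C}}\tilde{\M{C}}^\top = \M{Z}\,q(\mathbf{Y})^2\,\M{Z}^\top \approx_\epsilon \M{Z}\,\mathbf{Y}^{-1}\,\M{Z}^\top = \MM^{-1}$.

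Next I would construct $q$ by rescaling and expanding the Taylor series of $(1-x)^{-1/2}$. Setting $c := (1 + 1/\alpha)/2$ and $\mathbf{H} := \mi - \mathbf{Y}/c$, the spectrum of $\mathbf{H}$ lies in $[-r, r]$ for $r := (1-\alpha)/(1+\alpha)$, so $\mathbf{Y}^{-1/2} = c^{-1/2}\sum_{k \geq 0}\binom{2k}{k}4^{-k}\mathbf{H}^k$ converges absolutely. I define $q(\mathbf{Y})$ as the degree $N-1$ truncation of this series. Since $\binom{2k}{k}4^{-k} \leq 1$ and $\|\mathbf{H}\|_2 \leq r$, the tail is bounded in operator norm by $c^{-1/2}\,r^N/(1-r) = O(r^N/\sqrt{\alpha})$, so $\|q(\mathbf{Y}) - \mathbf{Y}^{-1/2}\|_2 \leq \delta$ with $\delta = O(r^N/\sqrt{\alpha})$.

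To convert this into the desired spectral bound, I would write $q(\mathbf{Y})^2 - \mathbf{Y}^{-1} = (q(\mathbf{Y}) - \mathbf{Y}^{-1/2})q(\mathbf{Y}) + \mathbf{Y}^{-1/2}(q(\mathbf{Y}) - \mathbf{Y}^{-1/2})$ and, using $\|\mathbf{Y}^{-1/2}\|_2 \leq 1$, bound $\|q(\mathbf{Y})^2 - \mathbf{Y}^{-1}\|_2 \leq 2\delta + \delta^2$. Because $\mathbf{Y}^{-1} \succeq \alpha\mi$, an absolute operator-norm error of at most $\epsilon\alpha$ already yields $q(\mathbf{Y})^2 \approx_\epsilon \mathbf{Y}^{-1}$. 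Since $\log(1/r) = \Theta(\alpha)$ as $\alpha \to 0$, the choice $N = \Theta\!\left(\tfrac{1}{\alpha}\log(1/(\epsilon\alpha))\right)$ drives $r^N/\sqrt{\alpha}$ below $O(\epsilon\alpha)$, completing the approximation. Finally, $\mathbf{Y}$ has monomial degree $3$ in $(\MM, \M{Z})$ and $q$ has degree $N-1$, so $\tilde{\M{C}} = \M{Z}\,q(\mathbf{Y})$ is a polynomial in $\MM$ and $\M{Z}$ of total degree $O(N)$, as required.

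The main obstacle I anticipate is the $\alpha$ loss when passing from an operator-norm estimate on $\mathbf{Y}^{-1/2}$ to a relative spectral approximation of $\mathbf{Y}^{-1}$: since $\lambda_{\min}(\mathbf{Y}^{-1}) = \alpha$, relative precision $\epsilon$ requires absolute precision $\epsilon\alpha$, and this (together with the $c^{-1/2}/(1-r) = \Theta(1/\sqrt{\alpha})$ prefactor from the tail bound) is precisely what forces $\log(1/(\epsilon\alpha))$ rather than $\log(1/\epsilon)$ in the degree. Tracking this dependence carefully, rather than attempting Chebyshev acceleration (which would sharpen $1/\alpha$ to $1/\sqrt{\alpha}$ but is unnecessary for the stated bound), is the only delicate point in the argument.
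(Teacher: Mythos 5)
Your proposal is correct and follows essentially the same route as the paper: reduce to $\mathbf{Y} = \M{Z}^{\top}\MM\M{Z}$ with spectrum in $[1,1/\alpha]$, approximate $\mathbf{Y}^{-1/2}$ by a truncated Taylor polynomial about the midpoint of the (rescaled) spectrum, and conjugate back by $\M{Z}$ to obtain $\tilde{\M{C}}$. The only difference is that you prove the scalar polynomial-approximation step from scratch via the binomial series and an explicit tail bound, whereas the paper invokes it as a black box (Lemma 4.1 of [CCLPT14]).
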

\newcommand{\edit}{\gamma}
\renewcommand{\given}{(1-\epsilon)\mi \preceq \sum_{i=1}^{d}w_i \mm_{i} \preceq \mi}
\renewcommand{\out}{(1- O( \epsilon) ) \mi \preceq \sum_{i=1}^{d}w'_i \mm_{i} \preceq  \mi}
\renewcommand{\rt}{\tilde{O}\left(\epsilon^{-O(1)}\left(\runtime_{\mb} + \runtime_{\mb^{-1/2}} + \runtime_{MV} + \runtime_{QF}\right) \right)}
\renewcommand{\trt}{\tilde{O}\left((\edit\epsilon)^{-O(1)}\left(\runtime_{\mb} + \runtime_{\mb^{-1}}+ \runtime_{SQ} + \runtime_{MV} + \runtime_{QF}\right)\right)}

\section{Overview of Results}
\label{sec:results}

We achieve the results of our paper by reducing each problem, e.g. solving Laplacian pseudo-inverse, inverse $M$-matrices, perturbed Laplacians, etc. to solving the following general and fundamental problem about recovering spectral approximations to PSD matrices.

\begin{restatable}[\textbf{Spectral Approximation Problem}]{prob}{probrecovery}
\label{prob:recovery} 
Let $\mathcal{M}=\{\mm_{i}\}_{i=1}^{d}$ be a set of PSD matrices such that there exists $w \in \R^d_{\geq 0}$ with $\edit\mb \preceq \sum_{i \in [d]} w_i \mm_{i} \preceq \mb$ and $\mb$ is a PD matrix. Suppose for any vector $x$ and vector $\alpha \in \R^{d}$ we can only access these parameters performing the following operations:
\begin{itemize}
\item Computing $\mb x$ in time $\runtime_{\mb}$
\item Computing $\mb^{-1} x$ in time $\runtime_{\mb^{-1}}$
\item Computing $\mc x$ for matrix $\mc$ such that $\mc\mc^{\top}=\mb$ in time $\runtime_{SQ}$
\item Computing $\sum_{i \in [d]} \alpha_i \mm_i x$ in time  $\runtime_{MV}$
\item Computing $x^\top \mm_i x$ for all $i \in [d]$ in time $\runtime_{QF}$.
\end{itemize} 
Given this restricted access to the input this problem asks to compute $w' \in \R^{d}_{\geq 0}$ such that:
$$(1- O( \epsilon))\edit \mb \preceq \sum_{i \in [d]} w'_i \mm_{i} \preceq  \mb$$ 
\end{restatable}

To solve the above problem, we show that the sparsification algorithm of \cite{ls17} carefully modified can be shown to solve this problem surprisingly efficiently. In particular, in \autoref{sec:recovery} we show that the insights from \cite{ls17} expanded and adapted suffice to prove the following theorem.

\begin{restatable}[\textbf{Spectral Approximation}]{thm}{thmrecmain}
\label{thm:recmain}
For any $\epsilon \in (0, 1/20)$ and $\gamma > 0$, the Spectral Approximation Problem (\autoref{prob:recovery}) can be solved in time 
\[
\trt ~.
\]
\end{restatable}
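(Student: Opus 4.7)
The plan is to adapt the SDP-based sparsification framework of Lee and Sun \cite{ls17} to handle arbitrary $\gamma$ and to respect the restricted oracle model. First I would conjugate away $\mb$: set $\tilde\mm_i := \mc^{-1} \mm_i \mc^{-\top}$, so that the target spectral sandwich becomes $(1-O(\epsilon))\gamma \mi \preceq \sum w'_i \tilde\mm_i \preceq \mi$. Although $\mc^{-1}$ is not an oracle, the identity $\mc^{-\top} = \mb^{-1}\mc$ lets us apply both $\mc^{-\top}$ and its transpose in time $\runtime_\mb + \runtime_{\mb^{-1}} + \runtime_{SQ}$, so matrix-vector products against $\sum w_i \tilde\mm_i$, as well as the ensemble $\{x^\top \tilde\mm_i x\}_i$ (which reduces to computing $y = \mc^{-\top} x$ once and then invoking $\runtime_{QF}$), become simulable at constant multiplicative overhead.

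In the normalized basis I would run the Lee-Sun twin-barrier process, maintaining weights $w^{(t)}$, running sum $\ma^{(t)} := \sum_i w_i^{(t)} \tilde\mm_i$, and potentials
\[
\Phi_+(\ma) := \tr\bigl((u_t \mi - \ma)^{-q}\bigr), \qquad
\Phi_-(\ma) := \tr\bigl((\ma - \ell_t \mi)^{-q}\bigr),
\]
with slowly-moving barrier positions $\ell_t \uparrow \gamma$ and $u_t \downarrow 1$. Each iteration chooses a nonnegative increment $\Delta w$ (supported on $\tilde O(1)$ coordinates) that pushes $\ell_t$ up and $u_t$ down while keeping both barriers bounded; feasibility at every step is certified by the hypothesized $w$, whose spectral certificate supplies a dual direction of improvement. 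An SDP-duality analysis in the spirit of \cite{ls17}, tracked carefully in $\gamma$, would show that $\poly(1/(\gamma\epsilon))$ iterations suffice — the $\gamma$-dependence enters because the separation between $\ell_t$ and $u_t$ controls the barrier condition number and hence the polynomial-approximation degree required below.

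Each iteration must evaluate $\tilde\mm_i \bullet \Phi'_\pm(\ma^{(t)})$ for every $i$. I would approximate the matrix powers $(u_t \mi - \ma^{(t)})^{-(q+1)}$ and $(\ma^{(t)} - \ell_t \mi)^{-(q+1)}$ by degree-$\poly(1/(\gamma\epsilon))$ Chebyshev polynomials in $\ma^{(t)}$ (applied via $\runtime_{MV}$) and then use Johnson-Lindenstrauss sketching, deferred to \autoref{sec:JL}: draw $\tilde O(1)$ Gaussian vectors $g$, form $h_g$ via the approximate square root provided by \autoref{lem:precond} (this is where $\runtime_{SQ}$ reappears), and read out $\sum_g \|h_g\|_{\tilde\mm_i}^2$ through a single $\runtime_{QF}$ call per $g$ as an unbiased estimate of each $\tilde\mm_i \bullet \Phi'_\pm(\ma^{(t)})$.

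The main obstacle is showing that the Lee-Sun update rule remains sound under the multiplicative errors introduced by polynomial approximation and JL sketching while retaining the $\poly(1/(\gamma\epsilon))$ iteration count. Concretely, one must give the barriers enough slack so that an $O(\epsilon/\poly(\gamma))$-accurate gradient estimate still yields provable progress, and one must verify that the resulting movement in $\ell_t, u_t$ stays within the Lee-Sun analysis budget; this is the principal \emph{expanded and adapted} content beyond \cite{ls17}. Multiplying the iteration count by the per-iteration cost then yields the runtime $\trt$ claimed in the statement.
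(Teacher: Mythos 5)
Your high-level blueprint matches the paper's: conjugate by a square root of $\mb$ to reduce to the identity case, run a two-sided barrier argument in the style of Lee--Sun, estimate the per-iteration trace quantities by low-degree polynomial approximation plus Johnson--Lindenstrauss sketching, and pay $\poly(1/(\gamma\epsilon))$ iterations. However, there are two concrete gaps. First, your barrier dynamics are wrong as stated: you have $u_t$ \emph{decreasing} toward $1$ while PSD increments are added to $\ma^{(t)}$, which would drive the upper potential $\tr((u_t\mi-\ma^{(t)})^{-q})$ to blow up on both counts. In the paper's algorithm \emph{both} barriers drift upward, with $\delta_u$ slightly larger than $\delta_\ell$, starting from $u_0=1/4$, $\ell_0=-1/4$ and terminating when $u_j-\ell_j\ge 1$, at which point $u_j=\Omega(1/\epsilon)$ and the ratio $u_j/\ell_j$ is $1+O(\epsilon)$; the parameter $\gamma$ is absorbed not by moving $\ell_t$ toward $\gamma$ but by rescaling the argument of the lower barrier to $\gamma^{-1}\ma-\ell\mi$ (and the paper uses the exponential barrier $\tr\exp((\cdot)^{-1})$ of \cite{ls17}, not the power barrier, redoing Lemmas~\ref{lem:delta}--\ref{lem:potential2} for this $\gamma$-rescaled version). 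Your alternative targeting could conceivably be made to work, but you give no analysis that it closes the gap in $\poly(1/(\gamma\epsilon))$ steps, and as literally written the update direction of $u_t$ is inconsistent with any barrier argument of this type.

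Second, and more importantly, you never say how the increment $\Delta w$ is actually \emph{computed} from the restricted oracles. Appealing to ``SDP duality in the spirit of \cite{ls17}'' names the existence proof (the unknown $w$ witnesses a feasible direction with large trace objective, as in Lemma~\ref{lem:sdp}) but not the algorithm. The paper's mechanism is to formulate each iteration as a packing SDP, $\max_{\alpha\ge 0}\cs\bullet\sum_i\alpha_i\mm_i$ subject to $\sum_i\alpha_i\mm_i\preceq \constb\mi$, and solve it with the positive-SDP solver of \autoref{thm:solve_SDP}, whose oracle is in turn implemented with the Taylor/JL machinery; this is the content of \autoref{lem:onesided}. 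Relatedly, your insistence that $\Delta w$ be supported on $\tilde O(1)$ coordinates imports a sparsity constraint from the sparsification setting that the paper explicitly drops (there is no restriction on $\nnz(\alpha)$ here), and the packing-SDP output is generically dense; keeping that constraint would force you back to a randomized single-coordinate-selection analysis that you have not supplied. Until the update-finding step is specified as an efficiently solvable optimization in this oracle model, the proof is incomplete.
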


Leveraging this result, in \autoref{sec:solvers} we show surprisingly that we can solve the \emph{Spectral Approximation Problem} (\autoref{prob:recovery}) using only matrix-vector product access to $\mb$ (and not $\mb^{-1/2}$).  
Although it may seem that, no longer using the ability to compute matrix vector products with $\mb^{-1/2}$ may imply more measurements with respect to $\mb$ are required, we show we only need a logarithmic more measurements which depends upon the ratio of $\mb$ to some crude given $\sum_i \beta_i \mm_i$. Further our running time increases only increases by a logarithmic factor provided we can efficiently solve linear systems and compute square roots of (known) linear combinations of $\{\mm_i\}_{i=1}^{d}$. Formally we show the following theorem.

\begin{restatable}[\textbf{Matrix Recovery Theorem}]{thm}{thmmatrecmain}
\label{thm:recovery2}
For any $\epsilon \in (0, 1/20)$, the Spectral Approximation Problem (\autoref{prob:recovery}) with additional access to: 
\begin{itemize}
\item An algorithm that computes $(\sum_{i \in [d]} \alpha_i \mm_i)^{-1} x$ for any  $\alpha \in \R_{\geq 0}^{d}$ and $x \in \R^{n}$ in time $\runtime_{solve}$ 
\item An algorithm that computes $\mc x$ for matrix $\mc$ such that $\mc \mc^{T}=\sum_{i \in [d]} \alpha_i \mm_i$ for any $\alpha \in \R_{\geq 0}^{d}$ that satisfies $\sum_{i \in [d]} \alpha_i \mm_i \succ 0$ and $x \in \R^{n}$ in time $\runtime_{root}$.
\item Vector $\beta \in \R_{\geq 0}^{d}$ such that $\frac{1}{\lambda} \edit \mb \preceq \sum_i \beta_i \mm_i \preceq \frac{1}{\mu} \mb$.  
\end{itemize}
can be solved in time
$$ \tilde{O}\left( (\epsilon \edit)^{-O(1)} ( \runtime_\mb + \runtime_{QF} + \runtime_{MV} +\runtime_{solve}+\runtime_{root}) \right)~.$$
\end{restatable}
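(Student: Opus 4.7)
The plan is to reduce \autoref{thm:recovery2} to \autoref{thm:recmain} by iteratively simulating the missing $\mb^{-1}$ and $\mc$ (with $\mc\mc^\top = \mb$) accesses via an adaptively built preconditioner $\mb^{(k)}$ that is always maintained as a known non-negative combination of $\{\mm_i\}_{i=1}^d$. I would initialize $\mb^{(0)} := \mu \sum_i \beta_i \mm_i$, which by hypothesis satisfies $(\gamma\mu/\lambda)\mb \preceq \mb^{(0)} \preceq \mb$, so the initial preconditioner quality is $C_0 = \lambda/(\gamma\mu)$. Because $\mb^{(k)}$ is always a known non-negative combination of the $\mm_i$, I automatically have $\mb^{(k)}x$ available via $\runtime_{MV}$, $(\mb^{(k)})^{-1}x$ via $\runtime_{solve}$, and a factor $\mc_k$ with $\mc_k \mc_k^\top = \mb^{(k)}$ via $\runtime_{root}$.

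Given a current $\mb^{(k)}$ satisfying $\frac{1}{C_k}\mb \preceq \mb^{(k)} \preceq \mb$, I would simulate $\mb^{-1}x$ using \autoref{thm:precon_rich} with $\mb^{(k)}$ as the preconditioner (optionally Chebyshev-accelerated), and I would build the forward square root $\mc$ by extending \autoref{lem:precond} to the forward direction: writing $\mb = \mc_k \mx_k \mc_k^\top$ where $\mx_k := \mc_k^{-1}\mb\mc_k^{-\top}$ is symmetric with spectrum in $[1,C_k]$, I would approximate $\mx_k^{1/2}$ by a Chebyshev polynomial $q$ of appropriate degree and return $\tilde\mc := \mc_k \, q(\mx_k)$, so that $\tilde\mc \tilde\mc^\top \approx \mb$. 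Each such simulated apply costs a number of inner iterations depending on $C_k$, each iteration consisting of one multiply by $\mb$, one solve in $\mb^{(k)}$, and, in the square-root case, one apply of $\mc_k$ and one of $\mc_k^{-1}$. Plugging these simulated operators into \autoref{thm:recmain} returns the next refined preconditioner $\mb^{(k+1)}$, again a known non-negative combination of the $\mm_i$.

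To avoid a $\sqrt{\lambda/\mu}$ bottleneck from performing one full-precision call at the initial scale $C_0$, I would stage the procedure as a bootstrap with $K = O(\log(\lambda/\mu))$ outer calls to \autoref{thm:recmain}, each demanding only a constant-factor reduction of $C_k$ rather than a jump to the final target. Because the per-call target is modest, the inner simulations need to supply their oracles only to a correspondingly coarse precision, and can therefore run in only $\otilde(1)$ inner steps per apply, giving an outer-iteration cost of $\otilde((\eps\gamma)^{-O(1)})(\runtime_\mb + \runtime_{QF} + \runtime_{MV} + \runtime_{solve} + \runtime_{root})$ independent of $C_k$. Summing over $K$ stages absorbs the $\log(\lambda/\mu)$ factor into $\otilde(\cdot)$ under the paper's convention that logarithms of condition numbers are hidden, exactly matching the claimed runtime.

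The main obstacle is the robustness argument that makes this bootstrap schedule go through: I must show that \autoref{thm:recmain} (or more precisely the Lee--Sun SDP-based sparsification algorithm underlying it) remains correct when its $\mb^{-1}$ and $\mc$ oracles are replaced by simulated operators that are only $O(1)$-multiplicatively accurate, and that the resulting $\mb^{(k+1)}$ still improves $C_k$ by a constant factor. This will require inspecting how the inner algorithm actually uses these two oracles (in particular how the queries $x^\top \mm_i x$ are formed under the relevant preconditioning) and a perturbation analysis showing that $O(1)$ multiplicative errors in each oracle apply compound only to a constant degradation in the final approximation ratio; once that is in place, the telescoped cost of the $K$ bootstrap stages yields the stated bound.
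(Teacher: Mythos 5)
There is a genuine gap in the middle of your argument, at the step ``because the per-call target is modest, the inner simulations need to supply their oracles only to a correspondingly coarse precision, and can therefore run in only $\otilde(1)$ inner steps per apply, giving an outer-iteration cost independent of $C_k$.'' This conflates two different parameters. The cost of simulating $\mb^{-1}x$ (or a factor $\mc$ with $\mc\mc^\top=\mb$) by preconditioned Richardson or Chebyshev with preconditioner $\mb^{(k)}$ satisfying $\frac{1}{C_k}\mb \preceq \mb^{(k)} \preceq \mb$ is $\Omega(\mathrm{poly}(C_k))$ iterations \emph{even to constant multiplicative accuracy}; relaxing the target precision only removes a $\log(1/\delta)$ factor, never the $C_k$ factor. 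Since $C_0 = \lambda/(\gamma\mu)$ can be polynomially large and is not hidden by the paper's $\otilde(\cdot)$ convention, your very first call is already too expensive, and no later stage repairs this. Moreover, \autoref{thm:recmain} has no knob that outputs a ``constant-factor improvement over $C_k$'': applied with target $\mb$ it either solves \autoref{prob:recovery} outright (at a per-oracle cost governed by $C_k$) or it is not being applied to a well-posed instance, so the object $\mb^{(k+1)}$ your bootstrap is supposed to produce is not something the theorem delivers.

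The missing idea is to change the \emph{target matrix} at each stage rather than the precision: set $\md \defeq \sum_j \beta_j\mm_j$ and run \autoref{thm:recmain} on the regularized matrices $\mb_i \defeq \frac{2^i}{\lambda}\mb + \md$ for $i=0,\dots,u$ with $u=\Theta(\log(\lambda/\mu))$. Each $\mb_i$ genuinely satisfies the promise of \autoref{prob:recovery} (a non-negative combination of the $\mm_j$ is sandwiched between $\gamma\mb_i$ and $\mb_i$), $\mH_0=\md$ is already a $\Theta(\gamma)$-quality preconditioner for $\mb_0$, and consecutive targets satisfy $\frac12\mb_{i+1}\preceq\mb_i\preceq\mb_{i+1}$, so the constant-quality approximation $\mH_i$ recovered at stage $i$ is automatically a $\frac{\gamma}{40}$-preconditioner for $\mb_{i+1}$. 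Consequently every Richardson solve and every square-root application (via \autoref{lem:precond}, with the factor of $\mH_i$ supplied by $\runtime_{root}$) costs only $\otilde(1/\gamma)$ inner steps, uniformly in $i$, and only the last stage ($\mu\mb_u$ within a factor $2$ of $\mb$) is run to accuracy $\epsilon$. Your reduction-to-\autoref{thm:recmain} framing, the use of preconditioned Richardson, and the polynomial square-root construction are all in the right spirit, but without this homotopy over regularized targets the claimed $\otilde(1)$ per-stage oracle cost does not hold.
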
 

As a corollary of this result, we show that we can solve linear systems and recovery problem in matrices that are spectrally close to Laplacians, i.e. perturbed Laplacians, in $\otilde (\edit^{-O(1)}n^2)$ time.

\begin{restatable}[\textbf{Perturbed Laplacian Solver}]{cor}{corperturbedsolver}
\label{cor:perturbedsolver}
Let $\ma \succeq \mzero \in \mathbb{R}^{n \times n}$ be a matrix such that there exists some (unknown) Laplacian $\lap$ where $\edit \ma \preceq \lap \preceq \ma$. Then for any $\epsilon \in (0, 1/20)$ there exists an algorithm that can solve linear systems in $\ma$ in $\otilde(\edit^{-O(1)} n^2)$ time.
 Further if $\lap$ is the Laplacian of a connected graph then there is an algorithm that can recover a Laplacian $\lap'$ such that $(1-O(\epsilon))\edit \ma \preceq \lap' \preceq \ma$ in 
$\otilde((\epsilon \edit)^{-O(1)} n^2)$ time.
    \end{restatable}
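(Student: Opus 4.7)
The plan is to instantiate \autoref{thm:recovery2} with $\mb = \ma$ and $\{\mm_i\}_{i=1}^{\binom{n}{2}}$ the edge Laplacians of the complete graph $K_n$, i.e.\ $\mm_{\{u,v\}} = (e_u - e_v)(e_u - e_v)^\top$. The hidden Laplacian $\lap$ is by definition a nonnegative combination $\sum_i w_i \mm_i$ of these edge Laplacians, so the hypothesis $\edit \ma \preceq \lap \preceq \ma$ already supplies a valid PSD decomposition witnessing \autoref{prob:recovery}. Moreover $\lap \mathbf{1} = 0$, $\edit \ma \preceq \lap$, and $\ma \succeq \mzero$ together force $\ma \mathbf{1} = 0$, so $\ma$ and every $\sum_i \alpha_i \mm_i$ share the kernel $\mathrm{span}(\mathbf{1})$; everything below is understood on $\mathbf{1}^{\perp}$, where $\ma$ is PD.

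I would then verify that every oracle demanded by \autoref{thm:recovery2} admits an $\otilde(n^2)$ implementation. Matrix-vector products with $\ma$ are $O(n^2)$ trivially; the $\binom{n}{2}$ quadratic forms $x^\top \mm_{\{u,v\}} x = (x_u - x_v)^2$ are jointly computable in $O(n^2)$; any $\sum_i \alpha_i \mm_i$ is a weighted Laplacian and applies in $O(n^2)$; linear systems in $\sum_i \alpha_i \mm_i$ are solvable in $\otilde(n^2)$ by a standard nearly-linear SDD solver; and an explicit square root is given by $\mc = \mathbf{B}^\top \mathrm{diag}(\alpha)^{1/2}$ where $\mathbf{B}$ is the signed edge-vertex incidence matrix of $K_n$, which also applies in $O(n^2)$. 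For the warm-start $\beta$ I would take $\beta_i \equiv 1$, giving $\sum_i \beta_i \mm_i = \kn = n\mi - \mathbf{1}\mathbf{1}^\top$, which acts as $n\mi$ on $\mathbf{1}^{\perp}$; a power-iteration estimate of $\lambda_{\max}(\ma)$ together with a polynomial lower bound on $\lambda_{\min}(\ma|_{\mathbf{1}^\perp})$ (available since $\lap$ is a connected-graph Laplacian, giving an a priori floor on the smallest nonzero eigenvalue) produce valid $\lambda, \mu$ with $\lambda/\mu = \poly(n,1/\edit)$, entering the final runtime only through the polylogarithmic factors hidden by $\otilde$.

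Applying \autoref{thm:recovery2} then returns weights $w' \in \R_{\geq 0}^{\binom{n}{2}}$ with $(1-O(\epsilon))\edit \ma \preceq \lap' \preceq \ma$ where $\lap' := \sum_i w'_i \mm_i$ is itself a Laplacian (being a nonnegative combination of edge Laplacians), all in time $\otilde((\epsilon\edit)^{-O(1)} n^2)$; this directly gives the recovery conclusion. For the linear-system conclusion, I would use $\lap'$ as a preconditioner inside Preconditioned Richardson (\autoref{thm:precon_rich}): the spectral sandwich above yields $\kappa(\lap'^{\pseudoRoot} \ma \lap'^{\pseudoRoot}) = O(1/\edit)$ on $\mathbf{1}^{\perp}$, so $\otilde(1/\edit)$ iterations suffice, each costing $O(n^2)$ for an $\ma$-apply plus $\otilde(n^2)$ for a $\lap'$-solve, for a total of $\otilde(\edit^{-O(1)} n^2)$.

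The main subtlety I expect is consistent handling of the common kernel $\mathrm{span}(\mathbf{1})$, since \autoref{thm:recovery2} is stated for PD $\mb$ while $\ma$ and $\lap'$ are only PSD. Working throughout on $\mathbf{1}^{\perp}$ --- equivalently, explicitly re-orthogonalizing iterates against $\mathbf{1}$, as in standard nearly-linear Laplacian system solvers --- resolves this cleanly and is compatible with the kernel condition $\kernel(\mH) \subseteq \kernel(\ma) = \kernel(\mb)$ demanded by \autoref{thm:precon_rich}. A secondary mild obstacle is producing the warm-start $\beta$ with a polynomially-bounded ratio $\lambda/\mu$ without already having a solver for $\ma$; however $\lambda_{\max}(\ma)$ is recoverable in $\otilde(n^2)$ by power iteration, and the lower bound on $\lambda_{\min}(\ma|_{\mathbf{1}^\perp})$ follows from the connectedness promise already used in the second claim.
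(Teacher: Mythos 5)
Your overall strategy --- reduce to \autoref{thm:recovery2} over an edge-Laplacian basis, implement the oracles in $\otilde(n^2)$ via SDD solvers and incidence-matrix square roots, then precondition --- is the same as the paper's, and your treatment of the second (connected) claim essentially matches the paper up to one implementation detail: rather than ``working on $\vones^{\perp}$,'' the paper makes everything genuinely positive definite by passing to $\mc = \ma + \frac{1}{n}\vones\vones^{\top}$ and adjoining $\vones\vones^{\top}$ to the basis, then subtracting $\alpha \vones\vones^{\top}$ at the end. That device is not cosmetic: \autoref{thm:recovery2} and the machinery behind it (the potential $\tr(\exp((u\mi-\ma)^{-1}))$, the square-root oracle, the JL sketching) are stated for PD matrices, so ``restrict to $\vones^{\perp}$'' is not something you can invoke for free without re-deriving that machinery on a subspace.

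The genuine gap is in the first claim. It is stated for an arbitrary, possibly disconnected, Laplacian $\lap$, but your argument needs $\ma$ to be PD on $\vones^{\perp}$ and needs a warm start with $\lambda/\mu = \poly(n,1/\edit)$; both fail when $\lap$ is disconnected, since then $\kernel(\ma)=\kernel(\lap)$ is strictly larger than $\mathrm{span}(\vones)$ and $\ma$ has further zero eigenvalues on $\vones^{\perp}$. You also cannot recover the components from the sparsity pattern of $\ma$ (for instance $\ma = \lap + \epsilon v v^{\top}$ with $v\in\Range(\lap)$ supported on two components satisfies the hypotheses but is not block diagonal), and computing $\kernel(\ma)$ is not one of the $\otilde(n^2)$ primitives you have. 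Moreover a nonnegative combination of edge Laplacians alone can never spectrally dominate a PD matrix, so your basis is inadequate once you try to regularize. The paper's fix is to set $\mc = \ma + \lambda_{\min}\eye$, enlarge the basis to $\{\lap_{ij}\}\cup\{\ve_i\ve_i^{\top}\}$ so the recovered matrix is SDD rather than a Laplacian (with warm start $\sum_i \ve_i\ve_i^{\top}=\eye$), and then show that a solver for the PD matrix $\mc$ still solves $\ma \vx=\vb$ for $\vb\in\Range(\ma)$: in the Richardson iteration $\vx_{k+1}=\vx_k - \mc^{-1}(\ma\vx_k-\vb)$ every eigenvector of $\mc$ not orthogonal to $\vb$ has eigenvalue at least $2\lambda_{\min}$, giving geometric convergence. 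You would need to add this regularization-plus-projection argument for the first claim to go through.
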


We also show how to use these results, \autoref{thm:recmain} and \autoref{thm:recovery2}, black box to derive further recovery and solving results. For instance, using our earlier results we give nearly linear time solvers for inverse M-matrices and Laplacian pseudo-inverse systems which are summarized below.

\begin{restatable}[\textbf{$M$-Matrix Recovery and Inverse $M$-Matrix Solver}]{thm}{thmmmatrix}
\label{thm:mmatrix}
Let $\ma$ be the inverse of some unknown invertible symmetric $M$-matrix and let $\kappa$ be the ratio of the largest and smallest entries of $\ma \vones$. For any $\epsilon \in (0, 1/20)$ \autoref{alg:algommat} recovers an $(1-O(\epsilon))$-spectral approximator for $\ma^{-1}$ in 
$
\otilde (\epsilon^{-O(1)} n^2 ) 
$ time and consequently we can solve linear systems in $\ma$ in 
$\otilde(n^2)$ time.
\end{restatable}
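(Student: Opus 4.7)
The plan is to reduce recovery of the unknown $M$-matrix $\mm = \ma^{-1}$ to sparsifying the Laplacian of an augmented graph built from $\ma\vones$, and then feed that sparsification into \autoref{thm:recovery2}. First I would compute $\d \defeq \ma \vones$ with one matrix-vector product in $O(n^2)$ time and set $\md \defeq \mathrm{diag}(\d)$. Since $\mm$ is a symmetric invertible $M$-matrix, $\ma \geq 0$ entrywise and $\d > \vzero$ (the latter being where the ratio $\kappa$ is well-defined), and rescaling gives $\mm' \defeq \md \mm \md$ with $\mm' \vones = \md \mm \d = \md \vones = \d$; together with the non-positive off-diagonals inherited from $\mm$, this makes $\mm'$ SDDM. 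Any $(1-\epsilon)$-spectral approximation of $\mm'$ rescales to one of $\mm$, so it suffices to recover $\mm'$. Critically, $(\mm')^{-1} = \md^{-1} \ma \md^{-1}$, so we have $O(n^2)$-time oracle access to $(\mm')^{-1}$.

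Next I would embed $\mm'$ as the principal minor of the Laplacian of the augmented graph $\tilde G$ on $n+1$ vertices, where the extra ``ground'' vertex is joined to each $i \in [n]$ by an edge of weight $d_i$. A direct computation using $\mm \d = \vones$ shows the augmented Laplacian is
\[
\mlap_{\tilde G} \;=\; \begin{pmatrix} \mm' & -\d \\ -\d^\top & \vones^\top \d \end{pmatrix}
\]
with kernel $\vones_{n+1}$, and by Sherman--Morrison, for any $v \perp \vones_{n+1}$ one can compute $\mlap_{\tilde G}^\dagger v$ with $O(1)$ applications of $(\mm')^{-1}$ plus a rank-one correction, in $O(n^2)$ time. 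I would then apply \autoref{thm:recovery2} to recover $\mlap_{\tilde G}$ with $\{\mm_i\}$ the edge Laplacians $(e_i-e_j)(e_i-e_j)^\top$ over all pairs in the complete graph on $n+1$ vertices, $\mb = \mlap_{\tilde G}$, and $\gamma = 1$ certified by the true edge weights of $\tilde G$. All oracles except for $\mb\cdot x$ are easy: quadratic forms are $(x_i-x_j)^2$; $\sum \alpha_i \mm_i$ is a sparse Laplacian admitting nearly-linear-time multiplication, solves, and approximate square roots (via \lemref{precond}); a scaled complete-graph Laplacian provides the initial $\beta$ with spectral ratio $\otilde(\kappa)$, contributing only polylogarithmic factors absorbed into $\otilde$. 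The remaining obstacle is applying $\mb = \mlap_{\tilde G}$ itself, which we do \emph{not} have; only $\mlap_{\tilde G}^\dagger$ is accessible. I would resolve this by adapting the recovery algorithm along the same lines the paper uses for Laplacian pseudoinverses, so that it probes $\mlap_{\tilde G}$ only through $\mlap_{\tilde G}^\dagger$ on $\vones$-orthogonal vectors (equivalently, bootstrapping an improving sparse preconditioner $\tilde\mm^{(t)}$ and using Preconditioned Richardson to approximate $\mb$ via $\mb^{-1}$).

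Combining these steps produces $\tilde{\mlap}_{\tilde G} \approx_{O(\epsilon)} \mlap_{\tilde G}$ in $\otilde(\epsilon^{-O(1)} n^2)$ time; reading off the principal $[n]\times[n]$ minor and rescaling yields $\tilde\mm \defeq \md^{-1}\tilde\mm'\md^{-1}$ with $\tilde\mm \approx_{O(\epsilon)} \mm = \ma^{-1}$, which is the claimed spectral approximator. To solve linear systems $\ma x = b$ in $\otilde(n^2)$, I would then invoke Preconditioned Richardson (\autoref{thm:precon_rich}) with $\ma$ as the target and the sparse $\tilde\mm$ as preconditioner: each iteration costs one $\ma$-vector multiply in $O(n^2)$ time plus one sparse $\tilde\mm$ multiply in $\otilde(n)$, and $O(\log(1/\epsilon))$ iterations suffice. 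The hardest step is the one flagged above: driving the recovery algorithm with only inverse-oracle access to $\mb$, which is the essential new technical content of \autoref{thm:mmatrix} beyond the matrix-vector-accessible setting of \autoref{cor:perturbedsolver}.
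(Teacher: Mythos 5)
Your proposal is correct in outline and reaches the result by a genuinely different, though closely related, route. The paper's proof stays in the SDD world: it rescales to $\mx\mm\mx$ (your $\md\mm\md$; this is \autoref{lemma:mfact1}), takes the basis set to be the edge Laplacians $\lap_{ij}$ \emph{together with} the diagonal matrices $\ve_i\ve_i^\top$, and runs an explicit homotopy $\mb_i = \frac{2^i}{\lambda_{\max}\kappa}\mx^{-1}\ma\mx^{-1}+\eye$, invoking \autoref{thm:recovery2} at each scale with the previously recovered matrix as preconditioner (\autoref{alg:algommat}). You instead ground the SDDM matrix into an $(n+1)$-vertex Laplacian $\mlap_{\tilde G}$ and reduce to the Laplacian-pseudoinverse recovery problem, i.e.\ to the paper's \autoref{thm:laplacian}; your observation that spectral approximation passes to principal minors (restrict quadratic forms to vectors vanishing on the ground vertex) makes the reduction clean, and the access to $\mlap_{\tilde G}^\dagger$ via $(\md\mm\md)^{-1}=\md^{-1}\ma\md^{-1}$ is standard. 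What your route buys is uniformity --- one recovery primitive over pure edge Laplacians instead of two basis sets --- at the cost of an extra vertex and inheriting the weight-boundedness hypothesis of \autoref{thm:laplacian} (harmless here, since the paper's $\otilde$ already hides condition numbers and the theorem's own $\kappa$).

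The one load-bearing step you defer rather than prove is the closure lemma behind the bootstrapping. ``Driving the recovery with only inverse-oracle access'' works only if every intermediate regularized target is itself a feasible instance of \autoref{prob:recovery} over the chosen basis: in the paper's route this is \autoref{lemma:mfact2} (for invertible SDD $\ma$ with nonpositive off-diagonal, $(\ma^{-1}+\alpha\eye)^{-1}$ is again SDD with nonpositive off-diagonal), and in your route it is the lemma preceding \autoref{alg:algolap} showing that $\bigl(\frac{2^i}{\lambda_{\max}}\ma + \frac{1}{n}\lap_{K_n}\bigr)^\pseudo$ is a Laplacian. Without such a statement the intermediate recovery calls have no feasible $w$ and \autoref{thm:recovery2} simply does not apply. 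Since you explicitly land in the Laplacian-pseudoinverse setting, citing \autoref{thm:laplacian} as a black box discharges this; but you should say so explicitly, because as written the sketch treats the hardest step as an implementation detail when it is precisely where the $M$-matrix structure (entrywise nonnegativity of inverse $M$-matrices, via the Woodbury identity) is actually used.
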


\begin{restatable}[\textbf{Laplacian Recovery and Laplacian Pseudoinverse Solver}]{thm}{thminvlap}
\label{thm:laplacian}
Let $\ma$ be the pseudo-inverse of some unknown graph Laplacian $\glap$ and assume edge weights of graph $G$ are polynomially-bounded in $n$. For any $\epsilon \in (0, 1/20)$ \autoref{alg:algolap} recovers an $(1-O(\epsilon))$-spectral approximator for $\ma^{\dagger}$ in time 
$
\tilde{O} ( \epsilon^{-O(1)} n^2 ).  
$ and consequently we can solve linear systems in $\ma$ in $\otilde(n^2)$ time.
\end{restatable}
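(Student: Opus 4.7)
The plan is to invoke the Matrix Recovery Theorem (\autoref{thm:recovery2}) with $\mb = \glap$ and basis $\mathcal{M} = \{\mm_{ij} = (\ve_i - \ve_j)(\ve_i - \ve_j)^{\top} : 1 \leq i < j \leq n\}$. Since $\glap = \sum_{i<j} w^*_{ij}\mm_{ij}$ exactly for some unknown $w^* \geq 0$, we get $\gamma = 1$ in the Spectral Approximation Problem, so a single successful application of \autoref{thm:recovery2} with precision $\epsilon$ returns weights $w'$ with $(1 - O(\epsilon))\glap \preceq \sum w'_{ij}\mm_{ij} \preceq \glap$, i.e.\ an $(1-O(\epsilon))$-spectral approximator of $\ma^\pseudo = \glap$. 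We work on $\vones^\perp$ so $\mb$ is PD; if $G$ is disconnected, we first identify connected components from the zero pattern of $\ma$ and run the argument per component.

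Most oracles demanded by \autoref{thm:recovery2} are direct. All $\binom{n}{2}$ quadratic forms $x^\top \mm_{ij} x = (x_i - x_j)^2$ are computed in $O(n^2)$ time; linear combinations $\sum_{i<j}\alpha_{ij}\mm_{ij}x$ are dense weighted Laplacian multiplies costing $O(n^2)$; inverses $(\sum_{i<j}\alpha_{ij}\mm_{ij})^{-1}x$ are standard Laplacian systems solvable in $\otilde(n^2)$ by nearly-linear time solvers; square roots $\mc\mc^\top = \sum\alpha_{ij}\mm_{ij}$ come from \autoref{lem:precond} using a crude low-stretch-tree preconditioner for each such Laplacian. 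The operator $\mb^\pseudo x = \glap^\pseudo x = \ma x$ is a single dense $O(n^2)$ matrix-vector product with the input, and $\mb^{\pseudoRoot} x$ follows via \autoref{lem:precond} applied to $\ma$ using only $\ma$-applies.

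The genuinely new ingredient is implementing $\mb x = \glap x$, which amounts to solving the dense system $\ma y = x$ on $\vones^\perp$. I would handle this by bootstrapping: maintain a sparse Laplacian $\tilde\glap_k$ that is a $c_k$-spectral approximation of $\glap$, so that $\tilde\glap_k^\pseudo$ is a $c_k$-approximation of $\ma$. Using $\tilde\glap_k^\pseudo$ as a preconditioner for \autoref{thm:precon_rich} on $\ma$ (or its Chebyshev/CG acceleration), each step costs one $O(n^2)$ $\ma$-multiply plus one $\otilde(n)$ solve in the sparse $\tilde\glap_k$, for $\runtime_\mb = \otilde(\sqrt{c_k}\,n^2)$. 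The initial $\tilde\glap_0$ is a suitably scaled $\kn$, which by the polynomial edge-weight bound is a $c_0 = \mathrm{poly}(n)$ approximation. We then call \autoref{thm:recovery2} with $\epsilon_k = \Theta(1)$ to obtain a sparse $\tilde\glap_{k+1}$ with $c_{k+1} = O(1)$; one final call with the target $\epsilon$ yields the $(1 - O(\epsilon))$-spectral approximator in time $\otilde(\epsilon^{-O(1)} n^2)$. The stated linear system solver follows by preconditioning solves in $\ma$ with the pseudo-inverse of this final sparse approximator via \autoref{thm:precon_rich}, giving $\otilde(1)$ iterations of $O(n^2) + \otilde(n)$ cost each.

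The main obstacle will be making the bootstrap telescope to $\otilde(n^2)$: the first invocation of \autoref{thm:recovery2} is charged $\runtime_\mb = \otilde(\sqrt{c_0}\,n^2)$, which is superlinear in $n^2$ for $c_0 = \mathrm{poly}(n)$. The resolution is to run a sequence of $O(\log\log n)$ intermediate recoveries in which the per-stage target accuracy $\epsilon_k$ is tuned so that the product $\sqrt{c_k}\,\epsilon_k^{-O(1)}$ decays geometrically, using the fact (following \cite{ls17,LiMP13,KapralovLMMS14}) that every $\mb$-multiply inside \autoref{thm:recovery2} need only be computed to precision sufficient for the algorithm's correctness; additionally, one may extract a substantially better-than-$\mathrm{poly}(n)$ initial preconditioner by exploiting the effective resistances $\ma_{ii} + \ma_{jj} - 2\ma_{ij}$ directly readable from $\ma$ in $O(n^2)$ time. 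Once the chain reaches a constant-factor preconditioner, the final call has $\runtime_\mb = \otilde(n^2)$ and the bound in the theorem follows.
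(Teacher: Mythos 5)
Your high-level reduction is the right one (recover $\glap$ over the edge-Laplacian basis via \autoref{thm:recovery2}, with $\runtime_{\mb^{-1}}$ a single dense multiply by $\ma$ and the remaining basis oracles implemented by standard Laplacian machinery), and your handling of the disconnected case matches the paper. The genuine gap is in your implementation of $\runtime_{\mb}$, i.e.\ the multiply by $\glap$, which is a solve in $\ma$. Your bootstrap starts from a $c_0=\poly(n)$-approximate preconditioner ($\kn$ scaled), so the very first invocation of \autoref{thm:recovery2} must pay $\otilde(\sqrt{c_0}\,n^2)=n^{2+\Omega(1)}$ for \emph{each} $\mb$-multiply, and that theorem performs at least $\otilde(1)$ such multiplies no matter how you set $\epsilon_0$. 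Tuning the per-stage accuracies $\epsilon_k$ cannot help: one call already collapses $c_{k+1}$ to $O(1)$, so there is no geometric sequence of costs to telescope --- the first term dominates and is superquadratic. Lowering the precision of the inner solves does not help either, since even a constant-factor error reduction with a $c_0$-approximate preconditioner requires $\Omega(\sqrt{c_0})$ accelerated iterations. The effective-resistance remark does not rescue this: you can read $\ma_{ii}+\ma_{jj}-2\ma_{ij}$ off $\ma$, but leverage-score sampling needs $w_e\cdot R_{\mathrm{eff}}(e)$ and the weights $w_e$ (indeed the edge set) are exactly what is unknown. The same bottleneck silently reappears in your $\runtime_{root}$ oracle for $\glap$ itself, since the polynomial in \autoref{lem:precond} has degree $O(1/\alpha)$ for an $\alpha$-crude starting point.

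The paper's resolution is to move the homotopy onto the \emph{target} rather than onto the preconditioner for a fixed target: it sets $\ma_i=\frac{2^i}{\lambda_{\max}}\ma+\frac1n\lap_{K_n}$ and $\mb_i=\ma_i+\frac1n\vones\vones^\top$, and recovers approximations to $\mb_i^{-1}$ for $i=1,\dots,u$ with $u\in\Theta(\log(\lambda_{\max}/\lambda_{\min}))$. Because $\tfrac12\mb_{i+1}\preceq\mb_i\preceq\mb_{i+1}$, the matrix $\mH_{i}$ recovered at stage $i$ is automatically a constant-factor preconditioner for the stage-$(i{+}1)$ solve, so every $\runtime_\mb$ oracle costs $\otilde(n^2)$ (note $\mb_i x$ itself needs only multiplies by $\ma$). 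This requires one structural ingredient your proposal has no analogue of: a Woodbury computation showing $\ma_i^\pseudo$ is itself a Laplacian, so that each intermediate target $\mb_i^{-1}$ genuinely lies in the cone spanned by $\lap_{ij}$ and $\vones\vones^\top$ and \autoref{thm:recovery2} is applicable at every stage. Without that lemma, or an equivalent substitute, your intermediate recovery problems are not well-posed even if the telescoping were repaired.
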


A key tool we leverage to prove these results are interesting structural facts about the inverse of SDD matrices and M-matrices proved in \autoref{sec:appproofs}.

\section{Solvers and Structured Recovery Algorithms}
\label{sec:solvers}

In this section we prove our main results, leveraging our algorithm for the spectral approximation problem (\autoref{prob:recovery}). In particular we show how to use \autoref{thm:recmain} to recover spectral approximations of non-negative combinations of PSD matrices with a nearly-constant number of adaptive linear measurements, and further we use these to obtain faster algorithms for solving several new classes of matrices.

We divide our results as follows: In \autoref{subsec:solver1} we prove \autoref{thm:recovery2} and provide linear system solver and solve recovery problem for perturbed Laplacians. In \autoref{subsec:solver2} we demonstrate that in the special case where the matrix is a symmetric $M$-matrix or a Laplacian we can solve the sparse recovery problem given only access to the matrice's \emph{inverse}. As a corollary, we also obtain faster algorithms for solving linear systems in the inverses of Laplacians and M-matrices. 

\newcommand{\logcond}{ }
\subsection{Matrix Compressed Sensing}\label{subsec:solver1}
The main result of this subsection is the following:
\thmmatrecmain*

We prove this result by leveraging a simple but powerful technique used throughout  previous literature \cite{LiMP13,KapralovLMMS14,ajss18}  on solving linear system by solving a sequence of regularized linear systems. This technique is leveraged in our algorithm for proving the result, \autoref{algo2}. Let $\mb_{i} \defeq \frac{2^i}{\lambda} \mb + \md$, where $\md \defeq \sum_{j=1}^{d} \beta_j \mm_j$. In each iteration $i$, this algorithm makes a call to \autoref{thm:recmain} to recovery a matrix $\mH_i$ that is a linear combination of $\{ \mm_{j}\}_{j=1}^{d}$ and satisfies $1/20 \edit \mb_{i}\preceq \mH_{i} \preceq \mb_i$. Further the matrix $\mH_i$ also satisfies $1/40 \edit \mb_{i+1}\preceq \mH_{i} \preceq \mb_{i+1}$ and can be used as a preconditioner to implement a linear system solver for $\mb_{i+1}$ needed for the next iteration. Using \autoref{lem:precond} at each iteration $i+1$, we implement $\runtime_{SQ}$ using $\otilde(\frac{1}{\edit})$ calls to both $\runtime_{root}$ and linear system solves in $\mb_{i+1}$. This algorithm runs $u$ number of iterations, where $u$ is such that $2^{u}=\lambda/\mu$ and $u \in  \Theta(\log(\lambda / \mu))$ and recovers a matrix $\mH_u$ that satisfies $1/20 \edit \mb_{u}\preceq \mH_{u} \preceq \mb_u$. Since the number of iterations $u$ is $\Theta(\log(\lambda / \mu))$ and $2^u=\lambda/\mu$, $\mu \mb_{u}$ satisfies $ \mb \preceq \mu \mb_{u} \preceq 2 \mb$. Combining both $\mu \mH_{u}$ satisfies $1/20 \edit \mb\preceq \mu \mH_{u} \preceq 2 \mb$ and ${\mu} \mH_{u}$ acts as a preconditioner for solving linear system in $\mb$. In the final step, using a linear system solver for $\mb$ we again invoke \autoref{thm:recmain} to recover the final desired matrix $\mH$ that is a linear combination of $\{ \mm_{j}\}_{j=1}^{d}$ and satisfies $(1-O(\epsilon)) \edit \mb\preceq \mH \preceq \mb$. Below we make this argument formal and prove \autoref{thm:recovery2}.

\begin{algorithm}[h!] \begin{algorithmic}[1]
\caption{$\mathtt{SpectralApproximate}$($\mathcal{M},\mb,\beta,\lambda,\mu,\epsilon$)}
    \State $u \in \Theta(\log(\lambda / \mu))$;
    \State $\md = \sum_{j=1}^{d} \beta_j \mm_j$;
    \State $\mb_i = \frac{2^i}{\lambda} \mb + \md$;
    \State $\mH_0 = \md$;

    \For{$i = 1, 2, ..., u$}
    \State $\mb_i = \frac{2^i}{\lambda} \mb + \md$;
    \State $f_i(x) \gets \mb_i x$ access oracle for $\mb_i$;
    \State $g_i(x) \gets$ linear system solver for $\mb_i$ by using $\mH_{i-1}$ as a preconditioner;
     \State $h_i(x) \gets$ routine computing $\mc x$ for matrix $\mc$ such that $\mc \mc^{\top}=\mb_{i}$;
    \State $\mH_i \gets$ matrix that satisfies $1/20 \edit \mb_i \preceq \mH_{i} \preceq  \mb_i$ via \autoref{thm:recmain} with $f_i, g_i, h_i$;
    \EndFor
    \State $f(x) \gets \mb x$ access oracle for $\mb$;
    \State $g(x) \gets$ linear system solver for $\mb$ using ${\mu} \mH_u$ as a preconditioner;
     \State $h_i(x) \gets$ routine  computing $\mc x$ for matrix $\mc$ such that $\mc \mc^{\top}=\mb$;
    \State $\mH \gets$ matrix that satisfies $(1-O(\epsilon)) \edit \mb\preceq  \mH \preceq  \mb$ via \autoref{thm:recmain} with $f,g,h$;
    \State \textbf{Return} $\mH$.
\label{algo2}
\end{algorithmic}
\end{algorithm}

\begin{proof}[Proof of \autoref{thm:recovery2}]
We first prove by induction that for all $i \geq 0$ it is the case that iteration $i$ of \autoref{algo2} can be implemented to run in $\tilde{O}(\edit^{-O(1)} (\runtime_{\mb} + \runtime_{solve}+\runtime_{root} + \runtime_{MV} + \runtime_{QF}))$ time and that this iteration results in a matrix $\mH_{i}$ that is a non-negative linear combinations of $\{\mm_{j}\}_{j=1}^{d}$ and satisfies $1/20 \edit \mb_i\preceq \mH_{i} \preceq  \mb_i$.

For the base case, note that $\mH_0=\md$ is a non-negative linear combination of $\{\mm_{j}\}_{j=1}^{d}$ and satisfies $ \frac{1}{2}\edit \mb_0 \preceq \mH_0 \preceq \mb_{0}$ as desired.

Now suppose $\mH_{i}$ satisfies $\frac{1}{20}\edit \mb_{i} \preceq \mH_{i} \preceq \mb_{i}$, for some $i \geq 0$. We show this suffices to prove the inductive hypothesis for $i + 1$. First note that by design, $\frac{1}{2}\mb_{i+1} \preceq \mb_i \preceq \mb_{i+1}$ and therefore $\frac{1}{40}\edit \mb_{i+1} \preceq \mH_{i} \preceq \mb_{i+1}$. Using preconditioned Richardson \autoref{thm:precon_rich} we can implement $g_{i+1}$ using $\otilde(1/\edit)$\footnote{Recall from preliminaries, we use $\otilde$ notation to hide terms logarithm in condition number of PSD matrices.} calls to both $f_{i+1}$ and linear system solver in $\mH_{i}$. Further implementing $f_{i+1}$ only requires one matrix vector product access to $\mb$ and $\md$ that takes $\tilde{O}(\runtime_{\mb} + \runtime_{MV})$ time. Using \autoref{lem:precond}, we can implement $h_{i+1}$, meaning compute $\mc x$ for matrix $\mc$ such that $\mc \mc^{\top}=\mb_{i+1}$ using $\otilde(1/\edit)$ calls to linear system solver in $\mb_{i+1}$ and $\runtime_{root}$. Combining all this analysis with \autoref{thm:recmain}, we can recover a matrix $\mH_{i+1}$ that is a linear combination of $\{ \mm_{j}\}_{j=1}^{d}$ and satisfies $\frac{1}{20}\edit \mb_{i+1} \preceq \mH_{i+1} \preceq \mb_{i+1}$ in time $\tilde{O}(\edit^{-O(1)}(\runtime_{\mb} + \runtime_{solve}+\runtime_{root} + \runtime_{MV} + \runtime_{QF}))$. 

Consequently, by induction the inductive hypothesis holds for all $i \geq 0$. This nearly completes the proof as in the final step of \autoref{algo2} we invoke \autoref{thm:recmain} with $\epsilon$ accuracy and recover a matrix $\mH$ that is a linear combination of $\{ \mm_{j}\}_{j=1}^{d}$ and satisfies $(1-O(\epsilon))\edit \mb \preceq \mH \preceq \mb$. All that remains is to formally bound the running time which in turn depends on the quality of $\mH_u$ as a preconditioner for $\mb$.

To bound the running time, note that in each iteration we spend at most  $\tilde{O}(\edit^{-O(1)} (\runtime_{\mb} + \runtime_{MV} + \runtime_{QF} + \runtime_{solve}+\runtime_{root}))$ time and there are only $u \in \Theta(\log(\frac{\lambda}{\mu}))$ iterations. Therefore in $\tilde{O}(\edit^{-O(1)} (\runtime_{\mb} + \runtime_{MV} + \runtime_{QF} + \runtime_{solve}))$ time (hiding $\log(\lambda/\mu)$ iteration term) \autoref{algo2} constructs matrix $\mH_u$, that satisfies $\frac{1}{20}\edit \mb_{u} \preceq \mH_{u} \preceq \mb_{u}$ and further $\mu \mH_u$ satisfies $\frac{1}{20}\edit \mb \preceq \mu \mH_{u} \preceq 2 \mb$. Similar to other iterations, we can implement the final iteration in time $\tilde{O}((\edit \epsilon)^{-O(1)} (\runtime_{\mb} + \runtime_{MV} + \runtime_{QF} + \runtime_{solve}+\runtime_{root}))$. Therefore, we can recover the desired matrix $\mH$ that is a linear combination of $\{ \mm_{j}\}_{j=1}^{d}$ and satisfies $(1-O(\epsilon))\edit \mb \preceq \mH \preceq \mb$ in $\tilde{O}((\edit \epsilon)^{-O(1)} (\runtime_{\mb} + \runtime_{MV} + \runtime_{QF} + \runtime_{solve}+\runtime_{root}))$ time.
\end{proof}

Using \autoref{thm:recovery2} we obtain our main result on solving perturbed Laplacians proved below. 
\corperturbedsolver*
\begin{proof}

Let $\mc = \ma + \lambda_{\min} \eye$ and observe that $\edit  \mc\preceq (\lap + \lambda_{\min} \eye) \preceq \mc$. Further, note that $\lap + \lambda_{\min} \eye$ is a positive linear combination of edge Laplacians $\lap_{ij} = (\ve_i - \ve_j) (\ve_i - \ve_j)^\top$ and diagonal matrices $\ve_i \ve_i^\top$. Further $\frac{1}{\lambda_{\min}} \gamma \mc \preceq \sum_{i}\ve_i \ve_i^\top \preceq \frac{1}{\lambda_{\max}} \mc$ and by applying \autoref{thm:recovery2} with $\epsilon = \frac{1}{2}$, we can recover a matrix $\md$ which is a nonnegative linear combination of $\lap_{ij}$ and $\ve_i \ve_i^\top$ and satisfies $\frac{1}{2}\edit \mc \preceq \md \preceq \mc$ in time $\tilde{O}( \logcond \edit^{-O(1)}(\runtime_\mc + \runtime_{QF} + \runtime_{MV} +\runtime_{solve}+\runtime_{root}))$.

We claim that all five of these oracles can be implemented in $\tilde{O}(n^2)$ time. Trivially, computing matrix vector products with $\mc$ takes $O(n^2)$ time (i.e. $\runtime_{\mc} = O(n^2)$ and multiplying a linear combination of $\lap_{ij}$ {and $\ve_i \ve_i^\top$} by a vector can also be implemented in $O(n^2)$ time (i.e. $\runtime_{MV} = O(n^2)$). It is easy to see that for any $\vx$ computing $\vx^\top \lap_{ij} \vx = (\vx(i) - \vx(j))^2$ for all $i,j$ and and $\vx^\top \ve_i \ve_i^\top \vx = \vx(i)^2$ for all $i$ can be carried out in $O(n^2)$ time ($\runtime_{QF} = O(n^2))$. 

As nonnegative linear combinations of $\lap_{ij}$ {and $\ve_i \ve_i^\top$} is clearly symmetric diagonally-dominant (SDD), we can solve linear systems in SDD matrices in $\otilde(n^2)$ by \cite{SpielmanT04} ($\runtime_{solve} = \otilde(n^2)$)). Further for any non-negative vector $\alpha$, such that $\mb=\sum_{ij} \alpha_{ij} \lap_{ij}+\sum_{i} \alpha_i \ve_i \ve_i^\top$ is a PD matrix also implies $\mb$ is a SDDM matrix and using Theorem 3.4 in \cite{CCLPT14} we can implement $\runtime_{root}$ for SDDM matrices in $\otilde(n^2)$ time.

Substituting these runtimes gives our desired claim that $\md$ can be found in $\otilde{(\logcond\edit^{-O(1)} n^2)}$ time. Further, by \autoref{thm:precon_rich} using $\md$ as a preconditioner we can solve linear systems in $\mc$ in only $\otilde(\logcond\edit^{-O(1)}n^2)$ time.

We now consider using the linear system solver in $\mc$ to solve the equation $\ma \vx = \vb$ provided there exists a solution. Just as in \autoref{thm:precon_rich}, we consider the performance of the preconditioned Richardson iteration. Let us initialize $\vx_0 = \vzeros$ and run the iteration $\vx_{k+1} = \vx_k - \mc^{-1} (\ma \vx_k - \vb)$ for all $k \geq 0$. We observe that for any $k$ we have $\ma \vx_{k+1} - \vb = (\mi - \ma \mc^{-1} )(\ma \vx_k - \vb)$. Now, we observe
\begin{align*}
\ma \vx_{k} - \vb = (\mi - \ma \mc^{-1}) (\ma \vx_{k-1} - \vb) = (\mi - \ma \mc^{-1})^k (\ma \vx_{0} - \vb) = (\mi - \ma \mc^{-1})^k (-\vb).
\end{align*}
by inducting over $k$. As $\vb$ lies inside the range of $\ma$, we note that for any vector $\vv$ where $\mc \vv = \lambda_{\min} \vv$ we have $\ma \vv = \vzeros$ and therefore $\vb^\top \vv = 0$. Thus if $\vv$ is an eigenvector of $\mc$ that is not orthogonal to $\vb$, then it must correspond to an eigenvalue of at least $2 \lambda_{\min}$ (because minimum non-zero eigenvalue of $\ma$ is at least $\lambda_{\min}$). The previous statement implies, if $\vv$ is an eigenvector of $\mi - \ma \mc^{-1}$ that is not orthogonal to $\vb$, then it must correspond to an eigenvalue of at most $1/2$. With this insight, it is not hard to see that 
\[
\|\ma \vx_{k} - \vb \|_2 = \|(\mi - \ma \mc^{-1})^k \vb\|_2  \leq \frac{1}{2^k} \|\vb\|_2.
\]
Thus, preconditioned Richardson applied to $\ma \vx = \vb$ yields a linearly convergent algorithm, i.e. by \autoref{thm:precon_rich} this means that we can solve linear systems in $\ma$ in $\otilde(1)$ linear solves against $\mc$ (which we argued cost $\otilde(\logcond\edit^{-O(1)}n^2)$ earlier) and $\otilde(1)$ matrix-vector products with $\ma$. Therefore, we can solve the linear system in $\otilde(\logcond\edit^{-O(1)}n^2)$ time and the first claim follows.

To show our second claim in this lemma, let $\lap$ be the Laplacian of some connected graph. In this case, define $\mc = \ma + \frac{1}{n}\vones\vones^{\top}$ and observe that $\edit  \mc\preceq (\lap  + \frac{1}{n}\vones\vones^{\top}) \preceq \mc$. Note that $\lap + \frac{1}{n}\vones\vones^{\top}$ is a non-negative linear combination of edge Laplacians $\lap_{ij}$ and $\vones\vones^{\top}$ and further $\frac{1}{\lambda_{\min}} \gamma \mc \preceq \frac{1}{n}\lap_{K_n}+\frac{1}{n}\vones\vones^{\top} \preceq \frac{1}{\lambda_{\max}} \mc$. Applying \autoref{thm:recovery2} with $\epsilon$, we can recover a matrix $\md$ which is a nonnegative linear combination of $\lap_{ij}$ and $\vones\vones^{\top}$ and satisfies $(1-O(\epsilon))\edit \mc \preceq \md \preceq \mc$ in time $\tilde{O}\left( \logcond (\edit \epsilon)^{-O(1)}(\runtime_\mc + \runtime_{QF} + \runtime_{MV} +\runtime_{solve}+\runtime_{root})\right)$. Note that $\lap'=\md-\alpha \vones\vones^{\top}$, for $\alpha$ such that $\vones\vones^{\top}$ belongs to kernel of $\lap'$ satisfies $(1-O(\epsilon))\edit \ma \preceq \lap' \preceq \ma$ and it is easy to see that such an $\alpha$ can be computed in $O(\runtime_{MV})$ time. Using similar analysis as the first claim, we can implement oracles $\runtime_\mc$, $\runtime_{QF}$ and $ \runtime_{MV}$ in $\tilde{O}(n^2)$ time. 

Further, implementing $\runtime_{solve}$ only requires $\otilde(n^2)$ time because for any matrix $\mb= \sum_{ij}\alpha_{ij} \lap_{ij}+\beta \vones\vones^{\top}$ ($\alpha_{ij},\beta \in\R_{\geq 0}$), $\mb$ is a sum of $\lap$ (Laplacian) and $\beta \vones\vones^{\top}$ that are orthogonal to each other. We can solve linear system $\mb x=b$, by solving $\lap x=b'$ and $\sqrt{\beta}\vones\vones^{\top} x =b''$, where $b''$ and $b'$ are orthogonal projections of $b$ onto $\vones$ and orthogonal subspace to $\vones$ respectively. Solving $\lap x=b'$ only takes $\otilde(n^2)$ time by $\cite{SpielmanT04}$. Also implementing $\runtime_{root}$, meaning computing $\mH x$ for matrix $\mH$ such that $\mH \mH^{\top}=\mc$ is equivalent to computing $(\mH_1+\frac{1}{\sqrt{n}} \vones \vones^{\top}) x$ for matrix $\mH_1$ such that $\mH_1 \mH_1^{\top}=\ma$ and carrying out $\mH_1x$ is a central result in \cite{CCLPT14} because $\ma$ is a SDD matrix, see Theorem 3.4 which implements $\runtime_{root}$ for SDDM matrices in $\otilde(n^2)$ and Lemma A.1 shows implementing $\runtime_{root}$ for SDD matrix, can be reduced to SDDM matrix that is twice as large and together give this result.
\end{proof}

\subsection{Inverse Symmetric $M$-Matrix Solver}\label{subsec:solver2}

In the previous subsection, we gave an algorithm to solve linear systems and recover spectral approximation to a PSD matrix $\ma$ using only $\tilde{O}(\epsilon^{-O(1)})$ linear measurements for the case of perturbed Laplacians. In this section, we achieve analogous results to different types of structured matrices. More specifically, we show that if $\ma$ is an invertible symmetric $M$-matrix or a Laplacian, we can recover a spectral approximation to it in $\tilde{O}(\epsilon^{-O(1)})$ linear measurements of its pseudo-inverse. We next formally state these two main results. These results also provide linear system solvers for Inverse $M$-matrices and Laplacian pseudo-inverses.
\thmmmatrix*
\thminvlap*

To prove our first theorems of this section, we need the following two useful lemmas about $M$-matrices. Their proofs are deferred to \autoref{sec:appproofs}.

\begin{restatable}{lemma}{lemmamfactone}
\label{lemma:mfact1}
Let $\mm$ be an invertible symmetric $M$-matrix. Let $\vx = \mm^{-1} \vones$ and define $\mx$ to be the diagonal matrix with $\vx$ on its diagonal. Then $\mx \mm \mx$ is a SDD matrix with nonpositive off-diagonal. 
\end{restatable}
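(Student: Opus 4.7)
The plan is to exploit two structural facts about invertible symmetric $M$-matrices: (i) the off-diagonal entries of $\mm$ are nonpositive by definition ($\mm = s\mi - \ma$ with $\ma \geq 0$ entrywise), and (ii) $\mm^{-1}$ is entrywise nonnegative. Fact (ii) follows from the Neumann series $\mm^{-1} = s^{-1}\sum_{k\geq 0}(s^{-1}\ma)^k$, which converges because $\rho(\ma) < s$ and whose every term is entrywise nonnegative. In particular, the $k=0$ summand is $s^{-1}\mi$, so the diagonal of $\mm^{-1}$ is strictly positive; therefore each coordinate of $\vx = \mm^{-1}\vones$ is strictly positive, and $\mx$ is a well-defined diagonal matrix with positive diagonal.

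Next I would read off the off-diagonal sign pattern of $\mx\mm\mx$. For $i \neq j$ we have $(\mx\mm\mx)_{ij} = x_i x_j \mm_{ij}$, and since $x_i, x_j > 0$ and $\mm_{ij} \leq 0$, this is nonpositive, giving the Z-matrix / nonpositive off-diagonal claim. Symmetry of $\mx\mm\mx$ is inherited from the symmetry of $\mm$ and the fact that $\mx$ is diagonal.

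The last step is diagonal dominance, and this is where the choice $\vx = \mm^{-1}\vones$ becomes essential. By construction $\mm\vx = \vones$, so for each row $i$,
\[
\mm_{ii} x_i + \sum_{j \neq i} \mm_{ij} x_j \;=\; 1.
\]
Multiplying by $x_i$ (which is positive) and rearranging using $\mm_{ij} \leq 0$ for $j \neq i$ yields
\[
(\mx\mm\mx)_{ii} \;=\; x_i^2 \mm_{ii} \;=\; x_i + \sum_{j \neq i} (-\mm_{ij}) x_i x_j \;=\; x_i + \sum_{j \neq i} |(\mx\mm\mx)_{ij}|,
\]
so the $i$-th diagonal strictly exceeds the sum of absolute values of its off-diagonals by exactly $x_i > 0$. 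This establishes (strict) diagonal dominance and completes the lemma.

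I do not expect any real obstacle; the only subtlety is justifying $\vx > 0$ (as opposed to merely $\vx \geq 0$), which is needed so that $\mx$ is invertible-friendly and so that the diagonal dominance inequality is not vacuous. I would handle this once at the start via the Neumann-series argument and then treat the rest as a direct entrywise computation driven by the identity $\mm\vx = \vones$.
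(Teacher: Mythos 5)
Your proof is correct and follows essentially the same route as the paper's: the off-diagonal sign claim comes from $(\mx\mm\mx)_{ij}=x_ix_j\mm_{ij}$ with $x_i\geq 0$ and $\mm_{ij}\leq 0$, and your row-wise computation from $\mm\vx=\vones$ is exactly the paper's identity $\mx\mm\mx\vones=\mm^{-1}\vones\geq 0$ unpacked entrywise. Your Neumann-series justification that $\vx$ is \emph{strictly} positive is a slight refinement over the paper, which only invokes nonnegativity of $\mm^{-1}$, but it does not change the argument.
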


\begin{restatable}{lemma}{lemmamfacttwo}
\label{lemma:mfact2}
Let $\ma$ be an invertible SDD matrix with nonpositive off-diagonal. For any $\alpha \geq 0$, the matrix $\mb = (\ma^{-1} + \alpha \eye)^{-1}$ is also a SDD matrix with nonpositive off-diagonal. 
\end{restatable}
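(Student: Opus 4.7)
My plan is to derive a clean closed-form expression for $\mb$ that exposes its structure. First, I would handle the trivial case $\alpha = 0$, where $\mb = \ma$ is SDD with nonpositive off-diagonal by hypothesis. For $\alpha > 0$, I would push the inverse inside using the identity $\mb = (\mi + \alpha \ma)^{-1} \ma$, which follows from $\mb(\ma^{-1}+\alpha\mi) = \mi$ by right-multiplying by $\ma$ to get $\mb + \alpha \mb \ma = \ma$, then using that $\mb$ commutes with $\ma$ (as a rational function of $\ma$). Factoring $\alpha\ma = (\mi+\alpha\ma) - \mi$ then yields the key identity
\[
\mb \;=\; \tfrac{1}{\alpha}\bigl(\mi - (\mi + \alpha \ma)^{-1}\bigr),
\]
which reduces the problem to understanding entries of $(\mi+\alpha\ma)^{-1}$.

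Next, I would observe that $\mm := \mi + \alpha \ma$ is a symmetric Z-matrix (inheriting nonpositive off-diagonal from $\ma$) and, being invertible with positive diagonal, is a symmetric invertible M-matrix. The standard inverse M-matrix property then gives that $\mm^{-1}$ is entrywise non-negative. Reading off the identity, the off-diagonal entries of $\mb$ are $\mb_{ij} = -\tfrac{1}{\alpha}[\mm^{-1}]_{ij} \leq 0$ for $i \neq j$, which establishes the nonpositive off-diagonal claim.

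For SDD, since $\mb$ has nonpositive off-diagonal it suffices to show $\mb\vones \geq 0$ componentwise, for then $\mb_{ii} \geq -\sum_{j\neq i}\mb_{ij} = \sum_{j\neq i}|\mb_{ij}|$. Using the identity, $\mb\vones = \tfrac{1}{\alpha}(\vones - \mm^{-1}\vones)$, so I need $\mm^{-1}\vones \leq \vones$ entrywise. The SDD hypothesis on $\ma$ together with its nonpositive off-diagonal gives $\ma \vones \geq 0$, so $\mm\vones = \vones + \alpha \ma \vones \geq \vones$; multiplying by the entrywise-nonnegative matrix $\mm^{-1}$ preserves the inequality, yielding $\vones \geq \mm^{-1}\vones$ as needed.

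The main conceptual step is recognizing the identity $\mb = \tfrac{1}{\alpha}(\mi - (\mi+\alpha\ma)^{-1})$; after that, every subsequent step reduces to standard M-matrix facts. I do not anticipate a significant obstacle, though a small care is needed to confirm that $\mi + \alpha \ma$ is genuinely an invertible M-matrix (as opposed to merely a Z-matrix), which relies on the positive diagonal of $\ma$ and hence the positivity of the diagonal of $\mm$.
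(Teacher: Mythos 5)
Your proof is correct and follows essentially the same route as the paper: your identity $\mb = \tfrac{1}{\alpha}\bigl(\mi - (\mi+\alpha\ma)^{-1}\bigr)$ is exactly the paper's Woodbury expression $\alpha^{-1}\mi - \alpha^{-2}(\ma+\alpha^{-1}\mi)^{-1}$ after rescaling, and both arguments then reduce the off-diagonal and row-sum claims to the entrywise nonnegativity of the inverse of the M-matrix $\mi+\alpha\ma$ together with $\ma\vones \ge 0$ (your monotonicity step $\mm\vones\ge\vones \Rightarrow \mm^{-1}\vones\le\vones$ is the same computation as the paper's $\mb\vones = \alpha^{-1}(\alpha^{-1}\mi+\ma)^{-1}\ma\vones \ge 0$). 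One tiny point of hygiene: to certify that $\mi+\alpha\ma$ is an invertible M-matrix you should appeal to its positive definiteness (which follows since $\ma$, being symmetric DD with nonnegative diagonal, is PSD), not merely to invertibility plus a positive diagonal.
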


With these facts, we sketch out the proof of the first of our two claimed theorems. We design our algorithm with a similar blueprint to the one we described in the previous section. Since we have access to $\ma = \mm^{-1}$ for some invertible symmetric $M$-matrix $\mm$, we can trivially compute $\vx = \mm^{-1} \vones = \ma \vones$. By \autoref{lemma:mfact1}, $\mx \mm \mx$ is a symmetric diagonally-dominant matrix with nonpositive off-diagonal, we observe that $\ma = \mx \mx^{-1} \mm^{-1} \mx^{-1} \mx = \mx (\mx \mm \mx)^{-1} \mx$ and  $\mx^{-1} \ma \mx^{-1}=(\mx \mm \mx)^{-1}$ is the inverse of a symmetric diagonally-dominant matrix with nonpositive off-diagonal. Now if we were able to compute a spectral approximation $\mb$ to $\mx^{-1} \ma \mx^{-1}$, then clearly $\mx \mb \mx$ is a spectral approximation to $\ma$. 

Define $u  \in \Theta( \log(\lambda_{\max} \kappa/\lambda_{\min}))$, such that $2^{u}=\frac{\lambda_{\max} \kappa^2}{\lambda_{\min}}$. Our algorithm will maintain a series of matrices $\mb_i = \frac{2^i}{\lambda_{\max} \kappa} \mx^{-1} \ma \mx^{-1} + \eye$. We observe by \autoref{lemma:mfact2} that for each $i$, $\mb_i$ is the inverse of a symmetric diagonally dominant matrix with nonpositive off-diagonal. Thus, we have that for any $i$, $\mb_i^{-1}$ is a nonnegative linear combination of edge Laplacians $\lap_{ij}$ and diagonal matrices $\ve_i \ve_i^\top$. Let $\mH_{0}=\mi$ and note that $\mH_0$ is a $1/20$-spectral approximation to $\mb_{0}^{-1}$ and further combined with $1/2\mb_{1} \preceq\mb_{0} \preceq \mb_{1}$ (by design) implies $\mH_{0}$ is a $1/40$-spectral approximation to $\mb_{1}^{-1}$. Using \autoref{thm:precon_rich}, we can solve linear systems in $\mb_1$ using $\otilde(1)$ matrix vector product calls to $\mb_1$ and $\mH_{0}$. Invoking \autoref{thm:recovery2} we can recover matrix $\mH_{1}$ that is a $1/20$-spectral approximation to $\mb_1^{-1}$.
Repeating this argument eventually gives us $\mH_u$, a $1/20$-spectral approximation to $\mb_u^{-1}$. As $\frac{\lambda_{\min}}{\kappa} \mb_u$ is spectrally within a factor of $2$ of $\mx \ma^{-1} \mx$, using \autoref{thm:precon_rich} we can solve linear systems in $\mx^{-1} \ma \mx^{-1}$ with only $\otilde(1)$ matrix vector product calls to $\mx^{-1} \ma \mx^{-1}$ and $\mH_{u}$: another application of \autoref{thm:recovery2} yields our desired spectral approximation to $(\mx^{-1} \ma \mx^{-1})^{-1}$ and consequently a spectral approximation to $\ma^{-1}$. Formally, we now prove \autoref{thm:mmatrix}:

\begin{algorithm} \begin{algorithmic}[1]
    \caption{$\mathtt{MMatrixInvSolver}$($\ma,y$)}
        \State $u \in \Theta(\log(\lambda_{\max} \kappa/ \lambda_{\min}))$;
        \State $\vx = \ma \vones$;
        \State $\mb = \mx^{-1} \ma \mx^{-1}$;
        \State $\mH_{0} = \eye$;
        \For{$i = 1, 2, ..., u$}
        \State $\mb_i = \frac{2^i}{\lambda_{\max} \kappa} \mb + \eye$;
        \State $g_i(x) \gets$ linear system solver for $\mb_i$ using $\mH_{i-1}^{-1}$ as a preconditioner;
        \State $\mH_i \gets \frac{1}{20}$-spectral approximator for $\mb_i^{-1}$ via \autoref{thm:recovery2} on basis set $\lap_{ij}$, $\ve_i \ve_i^\top$ using $g_i$;
        \EndFor
        \State $g(x) \gets$ linear system solver for $\mb$ using $\lambda_{\min} \mH_u^{-1}$ as a preconditioner;
        \State $\mH \gets (1-\epsilon)$-spectral approximator for $\mb^{-1}$ constructed via \autoref{thm:recovery2} on basis set $\lap_{ij}$, $\ve_i \ve_i^\top$ using $g$;
        \State \textbf{Return} $\mx^{-1} \mH \mx^{-1}$.
    \label{alg:algommat}
    \end{algorithmic}
    \end{algorithm}

\begin{proof}[Proof of \autoref{thm:mmatrix}]
We first realize the work required to compute $\mx$ is clearly $\otilde(n^2)$, as it can be found in a single matrix-vector product.

Just as in our algorithm for \autoref{thm:recovery2}, at every step $i$ we inductively maintain $\mH_i$, a $1/20$-spectral approximation of $\mb_i^{-1}$ and hence a $1/40$-spectral approximation to $\mb_{i+1}^{-1}$. By \autoref{thm:precon_rich} at every step $i$, we can implement $g_i$ using $\otilde(1)$ matrix vector product calls to $\mb_{i}$ and $\mH_{i-1}$ which can be implemented in time $\otilde(\runtime_{\mb}+\runtime_{MV}) \in \otilde(n^2)$. Since we can compute quadratic forms in $\lap_{ij}$ for all $i,j$ and $\ve_i \ve_i^\top$ for all $i$ in $\otilde(n^2)$ time, we can implement the $\runtime_{QF}$ oracle at every step in $\otilde(n^2)$ time. Further, matrix vector product with any nonnegative linear combinations of $\lap_{ij}$ and $\ve_i \ve_i^\top$ ($\runtime_{MV}$) can be trivially computed in $\otilde(n^2)$ time. As before, nonnegative linear combinations of $\lap_{ij}$ and $\ve_i \ve_i^\top$ is a SDD matrix, therefore $\runtime_{solve}$ can be performed in $\otilde(n^2)$ time by \cite{SpielmanT04}. Further for any non-negative vector $\alpha$, such that $\mb=\sum_{ij} \alpha_{ij} \lap_{ij}+\sum_{i} \alpha_i \ve_i \ve_i^\top$ is a PD matrix also implies $\mb$ is a SDDM matrix and using Theorem 3.4 in \cite{CCLPT14} we can implement $\runtime_{root}$ in $\otilde(n^2)$ time. Combining all the analysis we can implement each iteration of our algorithm in $\otilde(n^2)$ time.

As our algorithm performs the above for $u \in \Theta(\log(\lambda_{\max} \kappa /\lambda_{\min}))$ different $i$, we can compute $\mH_u$ in $\otilde(u n^2)$ time. For similar reasons to the above, we can then recover $\mH$ in time
$
\tilde{O} ( \epsilon^{-O(1)} n^2 )
$ (hiding the iteration term). Computing the returned $\mx^{-1} \mH \mx^{-1}$ takes only $\otilde(n^2)$ more time-- this is our claimed spectral approximator, and preconditioning linear system solves in $\ma$ with this for $\epsilon = 1/20$ gives our claimed running time for solving in $\ma$. 
\end{proof}

Similar to the result above, in the remainder of this section we show how to solve linear systems in Laplacian pseudo-inverses. First we briefly describe our approach. Let $\ma$ be our matrix, and let us make the simplifying assumption that $\ma$ is the laplacian pseudo-inverse of a connected graph. Let us define $\ma_i = \frac{2^i}{\lambda_{\max}} \ma + 1/n \lap_{K_n}$, where $\lap_{K_n}$ is the Laplacian of the complete graph on $n$ nodes. We observe the following useful fact about $\ma_i$:

\begin{lemma}
    For any $i$, $\ma_i^\pseudo$ is a Laplacian matrix.
\end{lemma}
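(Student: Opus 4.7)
The plan is to exploit that both $\ma = \lap^\dagger$ and $\lap_{K_n}$ share the kernel spanned by $\vones$ (using that $G$ is connected), so $\ma_i$ has kernel exactly $\vones$ as well. Setting $\mu = 2^i/\lambda_{\max}$, on the orthogonal complement of $\vones$ the matrix $\lap_{K_n}/n$ acts as the identity, so $\ma_i$ restricts to $\mu \lap^{-1} + \mi$ there. I would then guess a closed form for the pseudo-inverse and verify it; specifically, I claim
\[
\ma_i^{\dagger} \;=\; \lap\,(\mu\mi + \lap)^{-1}.
\]
To verify this, note first that $(\mu\mi+\lap)\vones = \mu\vones$, so $\lap(\mu\mi+\lap)^{-1}\vones = \frac{1}{\mu}\lap\vones = \vzeros$, giving the correct kernel; second, on the orthogonal complement $\lap$ is invertible and $\lap(\mu\mi+\lap)^{-1} = (\lap^{-1}(\mu\mi+\lap))^{-1} = (\mu\lap^{-1} + \mi)^{-1}$, which matches $(\ma_i|_{\vones^\perp})^{-1}$. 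Since both $\lap(\mu\mi+\lap)^{-1}$ and $\ma_i^\dagger$ are symmetric with the same kernel and agree on its orthogonal complement, they coincide.

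The next step is to rewrite
\[
\ma_i^{\dagger} \;=\; \lap(\mu\mi+\lap)^{-1} \;=\; \mi - \mu(\mu\mi + \lap)^{-1},
\]
and to observe that $\mu\mi + \lap$ is an SDDM matrix (it is SDD and strictly positive definite since $\mu>0$). The key classical fact I would invoke is that the inverse of an SDDM matrix is entrywise nonnegative; this immediately makes every off-diagonal entry of $-\mu(\mu\mi+\lap)^{-1}$ nonpositive, and subtracting it from $\mi$ does not change off-diagonal entries. Hence $\ma_i^\dagger$ has nonpositive off-diagonal entries.

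Finally, I would combine the two structural properties established so far: $\ma_i^\dagger$ is symmetric with $\ma_i^\dagger \vones = \vzeros$ (the row sums vanish) and with nonpositive off-diagonal entries. Any symmetric matrix with these two properties is automatically a Laplacian, because the zero row-sum condition forces each diagonal entry to equal the sum of absolute values of the off-diagonal entries in that row, and these edge weights $-[\ma_i^\dagger]_{jk} \geq 0$ realize $\ma_i^\dagger$ as a graph Laplacian. No step here looks like a serious obstacle; the main subtlety is just pinning down the closed form of $\ma_i^\dagger$ correctly using the shared kernel, after which the SDDM-inverse nonnegativity delivers the Laplacian structure for free.
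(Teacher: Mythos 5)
Your proof is correct and is essentially the same as the paper's: you arrive at the identical closed form $\ma_i^\dagger = \mi - (\mi + \tfrac{1}{\mu}\lap)^{-1}$ and invoke the same key fact that the inverse of an SDDM matrix is entrywise nonnegative. The only cosmetic difference is that you derive the identity by guess-and-verify on $\ker(\ma_i)$ and its orthogonal complement, whereas the paper obtains it via a chain of Woodbury-formula manipulations.
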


\begin{proof}
$\ma_i^\pseudo$ is obviously symmetric and positive semidefinite. Further, it is also clear that it has the all-ones vector in its kernel. By standard algebraic manipulation and the Woodbury Matrix formula we have,
\begin{align*}
\ma_i^\pseudo &= \left(\frac{2^i}{\lambda_{\max}} \ma + \frac{1}{n}\lap_{K_n}\right)^\pseudo = \left(\frac{2^i}{\lambda_{\max}} \ma + \frac{1}{n}\lap_{K_n} + \frac{2}{n} \vones \vones^\top\right)^{-1} - \frac{1}{2n} \vones \vones^\top \\
  &= \left(\frac{2^i}{\lambda_{\max}} \ma + \frac{1}{n} \vones \vones^\top + \eye\right)^{-1} - \frac{1}{2n} \vones \vones^\top = \left(\left(\frac{\lambda_{\max}}{2^i} \ma^\pseudo + \frac{1}{n} \vones \vones^\top\right)^{-1} + \eye\right)^{-1} - \frac{1}{2n} \vones \vones^\top \\
              &= \eye -  \left(\frac{\lambda_{\max}}{2^i} \ma^\pseudo + \frac{1}{n} \vones \vones^\top + \eye \right)^{-1} - \frac{1}{2n} \vones \vones^\top = \eye -  \left(\frac{\lambda_{\max}}{2^i} \ma^\pseudo + \eye \right)^{-1}.
\end{align*}
We observe that $\frac{\lambda_{\max}}{2^i} \ma^\pseudo + \eye$ is a positive definite symmetric diagonally-dominant matrix: its inverse is entry wise nonnegative and thus $\ma_i^\pseudo$ has nonpositive off-diagonal. Combining these facts we can conclude that $\ma_i^\pseudo$ is a Laplacian.
\end{proof}

Let $u \in \Theta(\log (\lambda_{\max}/\lambda_{\min}) )$ be such that $2^{u}=\lambda_{\min}/\lambda_{\max}$. With the insight from previous lemma, we define $\mb_i = \ma_i + 1/n \vones \vones^\top$: observe that $\mb_i$ is positive definite and $\mb_i^{-1} = \ma_i^\pseudo + 1/n \vones \vones^\top$. Thus for any $i$, $\mb_i^{-1}$ is a nonnegative linear combination of edge Laplacians $\lap_{ij}$ and $\vones \vones^\top$ and matrix vector product with $\mb_{i}$ can be computed in $\otilde(n^2)$ time. Further, $\mH_0^{-1}=\mi$ is spectrally within a factor of $2$ of $\mb_{0}$.
Just as we did in the earlier section, we note that $\mH_0$ is also a $1/40$-spectral approximation to $\mb_1^{-1}$ and we can solve linear systems in $\mb_{1}$ with $\otilde(1)$ calls to matrix vector product to $\mb_1$ and $\mH_0$. Repeating this argument we can eventually recover $\mH_u$, a $1/20$-spectral approximation to $\mb_u^{-1}$. Let $\alpha$ be such that $\mm_u = \mH_u - \alpha \vones \vones^\top$ has the all-ones vector in its kernel. Clearly $\mm_u$ is a $1/20$-spectral approximation to $\ma_u$. But now, as $\lambda_{\min} \ma_u$ is spectrally within a factor of $2$ of $\ma$ itself,  we conclude that $\lambda_{\min} \mm_u$ can be used to solve linear systems in $\ma$. This idea is codified below and explicit running time for each iteration is provided in the proof below:

\begin{algorithm} \begin{algorithmic}[1]
    \caption{$\mathtt{LapInvSolver}$($\ma,y$)}
        \State $u \in \Theta(\log(\lambda_{\max} / \lambda_{\min}))$;
        \State $\mH_{0} = \eye$;
        \State $\mb=\ma+1/n\vones \vones^{\top}$;
        \For{$i = 1, 2, ..., u$}
        \State $\ma_i = \frac{2^i}{\lambda_{\max}} \ma + 1/n \lap_{K_n}$;
        \State $\mb_i = \ma_i + 1/n \vones \vones^\top$;
        \State $g_i(x) \gets$ linear system solver for $\mb_i$ by using $\mH_{i-1}^{-1}$ as a preconditioner;
        \State $\mH_i \gets \frac{1}{20}$-spectral approximator for $\mb_i^{-1}$ via \autoref{thm:recovery2} on basis set $\lap_{ij}$, $\vones \vones^\top$ using $g_i$;
        \EndFor
        \State $g(x) \gets$ linear system solver for $\mb$ using $\lambda_{\min} \mH_u^{-1}$ as a preconditioner;
        \State $\mH \gets (1-\epsilon)$-spectral approximator for $\mb^{-1}$ constructed via \autoref{thm:recovery2} on basis set $\lap_{ij}$, $\vones \vones^\top$ using $g$;
        \State $\alpha \gets$ value such that $\mH - \alpha \vones \vones^\top$ has the all-ones vector in its kernel;
        \State \textbf{Return} $\mH - \alpha \vones \vones^\top$.
    \label{alg:algolap}
    \end{algorithmic}
    \end{algorithm}
    
\begin{proof}[Proof of \autoref{thm:laplacian}]
This proof is essentially the same as the other proofs given in this section. Since $\mH_{i-1}$ is $1/40$-spectral approximation to $\mb_{i}^{-1}$. We can solve linear systems in $\mb_i$ using $\otilde(1)$ matrix vector product calls to $\mb_{i}$ and $\mH_{i-1}$ by \autoref{thm:precon_rich} and $g_i$ can be implemented in time $\otilde(\runtime_{\ma}+\runtime_{MV}) \in \otilde(n^2)$. Note that $\mH_{i-1}$ is a non-negative linear combination of $\lap_{ij}$ and $\vones \vones^{\top}$ which we already discussed in perturbed laplacian case. We can compute quadratic forms in all the $\lap_{ij}$ and $\vones \vones^\top$ in $\otilde(n^2)$ time, and we can also compute matrix vector product with nonnegative linear combinations of these matrices in $\otilde(n^2)$ time. Recall again from the proof of perturbed laplacian, we can also implement $\runtime_{solve}$ and $\runtime_{root}$ for a non-negative linear combination of $\lap_{ij}$ and $\vones \vones^\top$ in time $\otilde(n^2)$. Now combining all the analysis and applying \autoref{thm:recovery2}, we can recover matrix $\mH_{i}$ which is a $1/20$-spectral approximation to $\mb_{i}^{-1}$.
We can thus compute $\mH$ in $\tilde{O} ( \epsilon^{-O(1)} n^2 ).  
$ time (hiding the iteration term). $\mH$ is spectrally close to $\mb^{-1}$ and it is not hard to see that $\mH-\alpha \vones \vones^{\top}$ for $\alpha$, such that, $\mH - \alpha \vones \vones^\top$ has the all-ones vector in its kernel is spectrally close to $\ma^{\dagger}$. Further, computing $\alpha$ and $\mH - \alpha \vones \vones^\top$ can both be done in trivial $\otilde(n^2)$ time, and thus we can compute $(1-\epsilon)$-spectral approximation to Laplacian pseudo-inverse of connected graphs in that time. Additionally we can use this to solve linear systems in Laplacian pseudo-inverses by again choosing $\epsilon = 1/20$ and using the output of the algorithm as a preconditioner. 

To finally prove our claimed result, we must finally show how to deal with the case where $\ma^\pseudo$ is the Laplacian of a disconnected graph. Fortunately, this is trivial. We note that when $i$ and $j$ lie in different connected components of the graph underlying $\ma$, $\ma_{ij} = 0$. In addition, whenever $i$ and $j$ lie in the same connected component there exists a path of indices $k_1, k_2, ... , k_d$ where $\ma_{i,k_1}, \ma_{k_1, k_2}, ... \ma_{k_d, j}$ are all nonzero. With this fact, it is easy to see how we can compute the connectivity structure of the graph underlying $\ma$ in $\otilde(n^2)$ time. Once we have these connected components, proving our result is simply a matter of permuting $\ma$ such that each connected component's inverse Laplacian appears on the block diagonal and then running our algorithm on each submatrix separately-- the running time follows trivially. 
\end{proof}

\newcommand{\TwoSidedOracle}{\mathtt{TwoSidedApproximator}}

\section{Building Spectral Approximations}
\label{sec:recovery}

The main goal of this section is to provide an algorithm to solve \autoref{prob:recovery}, a fundamental spectral approximation problem introduced in \autoref{sec:results} and restated below for convenience:

\probrecovery*

The problem defined above is a special case of the more general problem of solving a mixed packing/covering semidefinite program (SDP). In a mixed packing/covering SDP problem we are guaranteed existence of $w \in \R^d_{\geq 0}$ (unknown) with $ \mb_1 \preceq \sum_{i=1}^{d}w_i \mm_{i} \preceq \mb_2$ and asked to find weights $w' \in \R^d_{\geq 0}$ such that $(1- O( \epsilon )) \mb_1 \preceq \sum_{i=1}^{d}w'_i \mm_{i} \preceq  \mb_2$. The mixed packing/covering SDP is a natural generalization of mixed packing/covering LP and nearly linear time algorithms are known for the LP case \cite{Young14}. In the SDP case, it is open how to produce such an algorithm but there are nearly linear time algorithms for pure packing SDP and covering SDP \cite{Allen-ZhuLO16,JY11,JY12,PT12} (but not for mixed packing/covering SDPs).

Nevertheless, in this section we provide an algorithm for (\autoref{prob:recovery}), which is a special case of packing covering SDPs when the left and right matrices are $\mb_1 = \gamma \mb$ and $\mb_2=\mb$ and the access to the input is restricted. The main result of this section is the following.

\thmrecmain*

Our proof of this theorem follows from careful adaptation, simplification, and extension of  \cite{ls17}. The work of \cite{ls17} was tailored towards producing linear sized spectral sparsifiers of an explicitly given matrix; however carefully viewed and modified its algorithm and analysis can be shown to solve \autoref{thm:recmain}. Since we are solving a slightly different problem then \cite{ls17}, we need to carefully modify all algorithms and analysis from \cite{ls17}. Consequently, in this section we state these modified version of algorithms and analysis from \cite{ls17}. 

The remainder of this section is organized as follows. The first three subsections, \autoref{subsec:prelim}, \autoref{subsec:oracle}, and \autoref{subsec:twosided} are dedicated to solve \autoref{prob:recovery} in the special case when matrix $\mb$ is identity. Our algorithm for this special case is called $\TwoSidedOracle$ and is same as the $\mathtt{Sparsify}$ algorithm in \cite{ls17} with few modifications. In  \autoref{subsec:prelim} we define a new potential function that is a generalization of one used in \cite{ls17} to handle our new \autoref{prob:recovery} and state important lemmas about it we use throughout. In \autoref{subsec:oracle} we provide and analyze properties of, $\oracle$, a critical subroutine invoked by $\TwoSidedOracle$ that is a modified version of the "one-sided oracle" in \cite{ls17}. In \autoref{subsec:twosided} we provide a full description of our algorithm $\TwoSidedOracle$ and prove it solves the identity case of \autoref{prob:recovery}. In \autoref{subsec:reduction} we then provide a standard reduction to solve \autoref{prob:recovery} for general $\mb$ using a solver for the identity case. This analysis, relies on an efficient implementation of $\oracle$ which we show how to achieve using Taylor approximations and Johnson-Lindenstrauss (JL) sketches in \autoref{sec:JL}.

\newcommand{\mlo}{\textbf{Y}}
\newcommand{\mlt}{\textbf{X}}
\newcommand{\mub}{\mi}
\newcommand{\mlb}{\gamma\mi}
\newcommand{\clb}{\gamma}
\renewcommand{\given}{\gamma\mi \preceq \sum_{i=1}^{d}w_{i}\mm_{i} \preceq \mi}
\newcommand{\eigenlb}{\ln^{-2} (2e^4n)}
\newcommand{\eigenlbsq}{\ln^{-1} (2e^4n)}
\renewcommand{\rt}{\tilde{O}\left(\frac{1}{\gamma^{4}\error^{O(1)}}\left(\runtime_{\mb} + \runtime_{\mb^{-1}} + \runtime_{MV} + \runtime_{QF}\right) \right)}
\subsection{Preliminaries}\label{subsec:prelim}

We begin by defining a generalization of exponential potential function introduced in \cite{ls17} and then provide several key properties of this potential function. The barrier is critical to our analysis and has several nice properties. Importantly, the fact that it blows up as the matrix approaches its upper or lower barriers is crucial to ensuring that intermediate matrices are well-conditioned. This barrier is also central to bounding the number of iterations needed for our algorithm to terminate. 

Our exponential potential function is defined as follows:
\begin{equation}\label{eq:defpotential}
\Phi_{u,\ell}(\ma)
\defeq
\Phi_{u}(\ma) + \Phi_{\ell}(\ma),
\end{equation}
where we call $\Phi_u$, and $\Phi_\ell$ the upper and lower barrier values respectively and
\[
\Phi_u(\ma)
\defeq \tr\left[\exp\left((u\mub - \ma)^{-1}\right)\right]
\text{ and }
\Phi_{\ell}(\ma)
\defeq \tr\left[\exp\left((\gamma^{-1}  \ma - \ell \mi )^{-1}\right)\right]~.
\]

The next lemmas are analogous to Lemma 3.1 and Lemma 4.3 in \cite{ls17}. We provide a general lemma, \autoref{lem:delta} and then use it to prove \autoref{lem:potentialchange} and \autoref{lem:potential2}, our central tools for understanding the potential function. These lemmas each measure change in the potential function value relative to the change in parameters $\ma, u$ and $\ell$. \autoref{lem:potentialchange} measures the change in potential due to addition of a matrix $\Delta$ to our current progress while keeping the barriers $u.\ell$ unchanged and \autoref{lem:potential2} measures the change in potential due to change in barriers while input matrix is unchanged. The following function $\phideriv : \R^{n \times n} \rightarrow \R^{n \times n}$ defined on PD matrix occurs frequently in our analysis
\[
\phideriv(\ma) \defeq \exp(\ma^{-1})  \ma^{-2} 
\]
and we use it throughout this section:

\begin{lem}
\label{lem:delta}
For $\delta \in [0,1/10]$, if $\mlt$ and $\mDelta \in \R^{n \times n}$ are symmetric PSD matrices satisfying $\mDelta\preceq\delta\mlt^{2}$ and $\mDelta \preceq \delta\mlt$, then
\[
\tr\left(\exp\left( (\mlt - \mDelta)^{-1} \right) \right) \leq  \tr\left(\exp\left( \mlt^{-1}\right) \right) +(1+2\delta) \tr\left(\phideriv(\mlt)\mDelta\right)
~.
\]
and
\[
\tr \left( \exp\left( (\mlt+\mDelta)^{-1} \right) \right)
\leq  \tr\left(\exp\left( \mlt^{-1}\right)\right) -(1-2\delta)\tr\left(\phideriv(\mlt)\mDelta\right)
~.
\]
\end{lem}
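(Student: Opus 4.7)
The strategy is a first-order Taylor expansion of the map $\phi(Y) := \tr(\exp(Y^{-1}))$ around $Y = \mlt$, with the hypotheses $\mDelta \preceq \delta \mlt$ and $\mDelta \preceq \delta \mlt^2$ used to control the remainder in terms of $\tr(\phideriv(\mlt) \mDelta)$.

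First, I would compute the derivative $\frac{d}{dt} \phi(Y + tH)$. Using the Duhamel integral expression for $\frac{d}{dt} \exp(\cdot)$ together with trace cyclicity, and the fact that $\exp(\alpha Y^{-1})$ commutes with $\exp((1-\alpha) Y^{-1})$ (both being functions of the same matrix $Y^{-1}$, so the Duhamel integral collapses), one obtains the clean identity
\[
\tfrac{d}{dt} \tr(\exp((Y + tH)^{-1}))\big|_{t=0} = -\tr(\phideriv(Y) H)
\]
for any symmetric $H$ and $Y \succ 0$. Applying this at each $s \in [0,1]$ with $H = \pm \mDelta$ and integrating by the fundamental theorem of calculus yields
\[
\tr(\exp((\mlt \pm \mDelta)^{-1})) - \tr(\exp(\mlt^{-1})) = \mp \int_0^1 \tr(\phideriv(\mlt \pm s \mDelta) \mDelta) \, ds,
\]
so both inequalities reduce to pointwise bounds, uniform in $s \in [0, 1]$:
\[
\tr(\phideriv(\mlt - s\mDelta) \mDelta) \leq (1 + 2\delta) \tr(\phideriv(\mlt) \mDelta), \qquad \tr(\phideriv(\mlt + s\mDelta) \mDelta) \geq (1 - 2\delta) \tr(\phideriv(\mlt) \mDelta).
\]

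Second, to establish these pointwise estimates, let $Y = \mlt \pm s \mDelta$. From $\mDelta \preceq \delta \mlt$ I get the multiplicative sandwich $(1 - s\delta) \mlt \preceq Y \preceq (1 + s\delta) \mlt$, whence $Y^{-1}$ is sandwiched by $\mlt^{-1}$ up to factors $(1 \pm s\delta)^{-1}$. Writing $W := \mlt^{-1/2} \mDelta \mlt^{-1/2}$, the first hypothesis gives $W \preceq \delta \mi$ and the second gives $\mlt^{-1} \mDelta \mlt^{-1} \preceq \delta \mi$. Expanding $\phideriv(Y) = \exp(Y^{-1}) Y^{-2}$ via the Taylor series of $\exp$ gives $\tr(\phideriv(Y)\mDelta) = \sum_{k \geq 0} \frac{1}{k!} \tr(Y^{-(k+2)} \mDelta)$, and I would bound each summand against $\tr(\mlt^{-(k+2)} \mDelta)$ by conjugating via $Y^{-1} = \mlt^{-1/2}(\mi \mp sW)^{-1} \mlt^{-1/2}$ and using the Neumann expansion $(\mi \mp sW)^{-1} = \mi \pm sW + O((s\delta)^2 \mi)$. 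The two inverse factors in $Y^{-2}$ produce the leading multiplicative error $(1 \pm s\delta)^{-2} = 1 \pm 2s\delta + O(\delta^2)$, which (after $s \le 1$) yields the constant $2$ in $(1 \pm 2\delta)$.

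The main technical obstacle lies in this last step: the sandwich $Y^{-1} \preceq (1 - s\delta)^{-1} \mlt^{-1}$ does not directly imply an analogous bound on higher powers $Y^{-k}$ (since squaring is not operator monotone) nor on $\exp(Y^{-1})$ (since exponentiation is not operator monotone either), so each Taylor term requires careful non-commutative bookkeeping. The two hypotheses on $\mDelta$ play complementary roles: $\mDelta \preceq \delta \mlt$ controls the exponential factor through the spectral bound on $W$, while $\mDelta \preceq \delta \mlt^2$ --- equivalently, $\mlt^{-1}\mDelta \mlt^{-1} \preceq \delta \mi$ --- is precisely what is needed to control the $Y^{-2}$ factor and the resulting second-order error terms. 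This joint use of both bounds is what produces the clean $O(\delta)$ slack rather than $O(\delta^2)$, mirroring the analogous estimate in the Lee--Sun framework being adapted.
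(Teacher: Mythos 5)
Your overall route is genuinely different from the paper's. You differentiate $\phi(Y)=\tr(\exp(Y^{-1}))$, obtain the (correct) identity $\frac{d}{dt}\tr(\exp((Y+tH)^{-1}))\big|_{t=0}=-\tr(\phideriv(Y)H)$, and reduce the lemma via the fundamental theorem of calculus to uniform pointwise bounds on $\tr(\phideriv(\mlt\pm s\mDelta)\mDelta)$. The paper never differentiates: it proves the operator inequality $(\mlt\pm\mDelta)^{-1}\preceq\mlt^{-1}\mp\frac{1}{1\pm\delta}\mlt^{-1}\mDelta\mlt^{-1}$ directly (via $\mPi=\mlt^{-1/2}\mDelta\mlt^{-1/2}\preceq\delta\mi$), then applies monotonicity of $\tr\exp(\cdot)$ together with the Golden--Thompson inequality $\tr(\exp(\ma+\mb))\le\tr(\exp(\ma)\exp(\mb))$ to split off the perturbation, and finally bounds the second factor with the scalar inequality $e^{x}\le1+x+x^{2}$ applied to the single matrix $\mlt^{-1}\mDelta\mlt^{-1}\preceq\delta\mi$ (this is where $\mDelta\preceq\delta\mlt^{2}$ enters). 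That chain requires no comparison of non-commuting matrix powers at any point.

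The gap in your proposal sits exactly where you locate "the main technical obstacle": the pointwise estimates $\tr(\phideriv(\mlt-s\mDelta)\mDelta)\le(1+2\delta)\tr(\phideriv(\mlt)\mDelta)$ and its lower-bound twin are asserted, not proven, and the mechanism you sketch does not deliver them. Expanding $\tr(\phideriv(Y)\mDelta)=\sum_{k\ge0}\frac{1}{k!}\tr(Y^{-(k+2)}\mDelta)$ and writing $Y^{-1}=\mlt^{-1/2}(\mi\mp sW)^{-1}\mlt^{-1/2}$ leaves $Y^{-(k+2)}$ as $k+2$ copies of $(\mi\mp sW)^{-1}$ interleaved with copies of $\mlt^{-1}$ that do not commute with them, so the Neumann expansion cannot be applied factor by factor while preserving the semidefinite order; and the sandwich $(1+s\delta)^{-1}\mlt^{-1}\preceq Y^{-1}\preceq(1-s\delta)^{-1}\mlt^{-1}$ gives no control on $\tr(Y^{-m}\mDelta)$ versus $\tr(\mlt^{-m}\mDelta)$ for $m\ge2$, precisely because $x\mapsto x^{m}$ is not operator monotone (the paper isolates the commuting special case as Lemma~\ref{lem:commute} exactly because the general statement fails). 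Saying each term "requires careful non-commutative bookkeeping" is naming the problem, not solving it; as written I do not see how to close this step without importing a trace inequality of Golden--Thompson or Lieb type that your argument never invokes. Either supply a rigorous proof of the pointwise perturbation bounds, or drop the integral representation and run the paper's one-shot argument: operator-bound the perturbed inverse, apply Golden--Thompson, apply $e^{x}\le1+x+x^{2}$, and finish with $\mp(1\pm\delta)^{-2}\le\mp(1\mp2\delta)$ for $\delta\in[0,1/10]$.
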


\begin{proof}
Since $\mzero \preceq \mDelta \preceq  \delta \mlt$ if we let $\mPi \defeq \mlt^{-1/2}\mDelta\mlt^{-1/2}$ then we have that $\mzero \preceq \mPi \preceq \delta \mi$. Further since $|\delta| < 1$ this implies that
\begin{align*}
(\mlt \pm \mDelta)^{-1}
&= \mlt^{-1/2}\left( \mi \pm \mPi \right)^{-1}\mlt^{-1/2}
= \mlt^{-1/2} \left( \mi \mp \mPi \left( \mi \pm \mPi \right)^{-1}   
\right) \mlt^{-1/2}
\\
&\preceq 
\mlt^{-1/2} \left( \mi \mp ( 1 \pm \delta)^{-1} \mPi   
\right) \mlt^{-1/2}
= \mx^{-1} \mp \frac{1}{1 \pm \delta} \cdot \mx^{-1} \mDelta \mx^{-1}
 ~.
\end{align*}
By the fact that $\tr(\exp(\cdot))$ is monotone under $\preceq$ and the Golden-Thompson Inequality, i.e. for symmetric $\ma,\mb$ it holds that $\tr(\exp(\ma + \mb)) \leq \tr(\exp(\ma) \exp(\mb))$, we have 
\begin{align*}
\tr \left(\exp\left( (\mlt \pm \mDelta)^{-1} \right)\right) 
& \leq \tr \left(\exp\left( \mlt^{-1} \mp \frac{1}{1 \pm \delta} \cdot \mlt^{-1}\mDelta\mlt^{-1} \right) \right) \\
& \leq \tr\left(\exp(\mlt^{-1}) \exp\left(\frac{\mp 1}{1 \pm \delta}\cdot \mlt^{-1}\mDelta\mlt^{-1}\right)\right).
\end{align*}
Since $0 \preceq \mDelta\preceq \delta \mlt^{2}$ and $\delta \leq 1/10$ by assumption, using the inequality that $e^x \leq 1 + x + x^2$ for all $x$ with $|x| < 1/2$  we have that
\[
\exp\left(\frac{\mp 1}{1 \pm \delta}\cdot \mlt^{-1}\mDelta\mlt^{-1}\right)
\preceq \mi + \left( \frac{\mp 1}{1 \pm \delta} + \frac{\delta}{(1 \pm \delta)^2} \right)\cdot \mlt^{-1}\mDelta\mlt^{-1}
\preceq \mi 
\mp 
\frac{1}{(1 \pm \delta)^2} \mlt^{-1}\mDelta\mlt^{-1}
~.
\]
Since, $\mp (1 \pm \delta)^{-2} \leq \mp (1 \mp 2\delta)$ for $\delta \in [0, 1/10]$ the result follows.
\end{proof}

\begin{lem}\label{lem:potentialchange}
Let $\ma$ be a symmetric matrix such that $\ell \mlb\prec\ma\prec u\mub$ for some $\gamma \in (0, 1)$ and $u,\ell \in \R$. If $\mDelta \in \R^{n \times n}$ is a symmetric PSD matrix with $\mDelta \preceq \delta \clb \alpha \mi$ for some $\delta \in [0, 1/10]$ and $\alpha \in (0,1]$ such that $\alpha \mi \preceq \lambda_{\min}(u \mi - \ma)^2$ and $\alpha \mi \preceq \lambda_{\min}(\clb^{-1}\ma-\ell \mi)^{2}$ then,
\[
\Phi_{u,\ell}(\ma+\mDelta)  \leq   \Phi_{u,\ell}(\ma) + (1+2\delta)\tr\left[
\phideriv(u\mub - \ma)
\mDelta\right]
-(1-2\delta)\tr\left[
\phideriv \left(\clb^{-1}\ma-\ell \mi \right)
\clb^{-1}
\mDelta
\right]
~.
\]
\end{lem}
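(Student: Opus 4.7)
The plan is to apply the one-step bound of \lemref{delta} separately to the upper and lower barrier terms in the decomposition $\Phi_{u,\ell}(\ma+\mDelta) = \Phi_u(\ma+\mDelta) + \Phi_\ell(\ma+\mDelta)$, and then add the resulting inequalities. The only work is to verify that the hypothesis $\mDelta' \preceq \delta \mlt^2$ and $\mDelta' \preceq \delta \mlt$ of \lemref{delta} are met in each case, using the spectral conditions on $\alpha$ and the bound $\mDelta \preceq \delta \gamma \alpha \mi$.

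For the upper barrier, substitute $\mlt \defeq u\mi - \ma$ (which is PD by $\ma \prec u \mi$) and note that $(u\mi - (\ma + \mDelta))^{-1} = (\mlt - \mDelta)^{-1}$, so this is the subtraction form of \lemref{delta}. To verify the hypotheses, I use $\alpha \mi \preceq (u\mi - \ma)^2 = \mlt^2$ to get $\delta \alpha \mi \preceq \delta \mlt^2$, and I use $\alpha \in (0,1]$ (so $\alpha \leq \sqrt{\alpha}$) together with $\sqrt{\alpha}\,\mi \preceq \mlt$ (obtained by taking the PSD square root of the previous inequality) to get $\delta \alpha \mi \preceq \delta \mlt$. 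Since $\mDelta \preceq \delta \gamma \alpha \mi \preceq \delta \alpha \mi$, both hypotheses hold and \lemref{delta} yields
\[
\Phi_u(\ma+\mDelta) \leq \Phi_u(\ma) + (1+2\delta)\tr\bigl[\phideriv(u\mi-\ma)\,\mDelta\bigr].
\]

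For the lower barrier, substitute $\mlt \defeq \gamma^{-1}\ma - \ell\mi$ (PD by $\ell\gamma \mi \prec \ma$) and $\mDelta' \defeq \gamma^{-1}\mDelta$, so that $(\gamma^{-1}(\ma+\mDelta) - \ell\mi)^{-1} = (\mlt + \mDelta')^{-1}$; this is the addition form of \lemref{delta}. The scaling is chosen precisely so the $\gamma^{-1}$ moves into $\mDelta'$: the assumption $\mDelta \preceq \delta \gamma \alpha \mi$ gives $\mDelta' \preceq \delta \alpha \mi$, and the hypothesis $\alpha \mi \preceq (\gamma^{-1}\ma-\ell\mi)^2 = \mlt^2$ again yields both $\mDelta' \preceq \delta \mlt^2$ and (via $\alpha \leq \sqrt{\alpha}$ and $\sqrt{\alpha}\,\mi \preceq \mlt$) $\mDelta' \preceq \delta \mlt$. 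Applying \lemref{delta} gives
\[
\Phi_\ell(\ma+\mDelta) \leq \Phi_\ell(\ma) - (1-2\delta)\tr\bigl[\phideriv(\gamma^{-1}\ma-\ell\mi)\,\gamma^{-1}\mDelta\bigr].
\]

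Summing the two displays and using $\Phi_{u,\ell} = \Phi_u + \Phi_\ell$ gives the claimed inequality. There is no real obstacle beyond bookkeeping: the role of the assumption $\alpha \leq 1$ is to convert the quadratic lower bound $\alpha \mi \preceq \mlt^2$ into the linear bound $\alpha \mi \preceq \mlt$ needed by \lemref{delta}, and the factor $\gamma$ in the hypothesis $\mDelta \preceq \delta \gamma \alpha \mi$ is exactly what is consumed when passing from $\mDelta$ to $\gamma^{-1}\mDelta$ on the lower-barrier side so that the same $\delta$ works in both applications.
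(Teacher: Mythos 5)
Your proposal is correct and follows essentially the same route as the paper: decompose $\Phi_{u,\ell}$ into upper and lower barrier terms, verify the hypotheses of \lemref{delta} for $\mlt = u\mi-\ma$ with $\mDelta$ and for $\mlt = \clb^{-1}\ma-\ell\mi$ with $\clb^{-1}\mDelta$, apply the subtraction and addition forms respectively, and sum. Your explicit justification that $\alpha\mi \preceq \mlt^2$ and $\alpha \in (0,1]$ yield $\alpha\mi \preceq \mlt$ (via $\sqrt{\alpha}\mi \preceq \mlt$ and $\alpha \leq \sqrt{\alpha}$) spells out a step the paper only asserts.
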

\begin{proof}
First note that, for $\alpha \in (0,1]$, $\alpha \mi \preceq (u \mi - \ma)^2$ implies $\alpha \mi \preceq (u \mi - \ma)$ and using $\clb \leq1$ we get $\mDelta \preceq \delta (u \mi - \ma)$ and $\mDelta \preceq \delta (u \mi - \ma)^2$. Further $\alpha \mi \preceq (\clb^{-1}\ma-\ell \mi)^{2}$ implies $\alpha \mi \preceq (\clb^{-1}\ma-\ell \mi)$ and we also get $\frac{\mDelta}{\clb}\preceq\delta(\clb^{-1}\ma-\ell \mi)$.and $\frac{\mDelta}{\clb}\preceq\delta(\clb^{-1}\ma-\ell \mi)^2$.
Now recall,
$$\Phi_{u,\ell}(\ma+\mDelta)=\Phi_{u}(\ma+\mDelta)+\Phi_{\ell}(\ma+\mDelta)=\tr\left(\exp((u\mi-(\ma+\mDelta))^{-1})\right)+\tr\left(\exp(\clb(\ma+\mDelta-\ell \clb \mi)^{-1})\right)$$
We analyze terms $\exp((u\mi-(\ma+\mDelta))^{-1})$ and $\exp(\clb(\ma+\mDelta-\ell \clb \mi)^{-1})$ individually. For the first term, $\mlt=u\mi-\ma$ and $\Delta=\mDelta$ satisfies conditions of \autoref{lem:delta} because $\mDelta\preceq\delta(u\mub-\ma)^{2}$ and $\mDelta\preceq \delta(u\mub-\ma)$, yielding
\begin{equation}\label{equpper}
\begin{split}
\tr\left(\exp((u\mi-(\ma+\mDelta)^{-1})) \right)& \leq \tr\left(\exp\left( (u\mi-\ma)^{-1}\right)\right) + (1+2\delta) \tr\left(\phideriv(u\mi-\ma)\mDelta \right)\\
&= \Phi_{u}(\ma)+ (1+2\delta) \tr\left(\phideriv(u\mi-\ma)\mDelta \right) ~.
\end{split}
\end{equation}
Similarly for the second term, $\mlt=\clb^{-1}\ma-\ell \mi$ and $\Delta=\frac{\mDelta}{\clb}$ satisfies conditions of \autoref{lem:delta} because $\frac{\mDelta}{\clb}\preceq\delta(\clb^{-1}\ma-\ell \mi)^{2}$ and $\frac{\mDelta}{\clb}\preceq \delta(\clb^{-1}\ma-\ell \mi)$, yielding,
\begin{equation}\label{eqlower}
\begin{split}
\tr\left(\exp(\clb(\ma+\mDelta-\ell \clb \mi)^{-1}) \right)& \leq \tr\left(\exp\left( \clb(\ma-\ell \clb \mi)^{-1}\right)\right) - (1-2\delta) \tr\left(\phideriv\left(\clb^{-1}\ma-\ell \mi \right)\frac{\mDelta}{\clb} \right)\\
&= \Phi_{\ell}(\ma)- (1-2\delta) \tr\left(\phideriv\left(\clb^{-1}\ma-\ell \mi\right)\frac{\mDelta}{\clb} \right)
\end{split}
\end{equation}
Our lemma statement follows by combining \autoref{equpper} and \autoref{eqlower}.
\end{proof}

\begin{lem}\label{lem:potential2}
Let $\ma$ be a symmetric PSD matrix and let $\gamma \in (0, 1)$ and $u, \ell \in \R$ satisfy $\ell \mlb\prec\ma\prec u\mub$. If $\delta_u \in [0, \delta \alpha_{u}]$ and $\delta_{\ell} \in [0, \delta \alpha_{\ell}]$ for some $\delta\leq 1/10$, $\alpha_u \in (0,1]$ and $\alpha_{\ell} \in (0,1]$ such that $\alpha_{u} \leq  \lambda_{\min}(u\mub-\ma)^2$  and $\alpha_{\ell} \leq \lambda_{\min}(\clb^{-1}\ma-\ell \mi)^2$ for $\delta\leq 1/10$, then 
\[
\Phi_{u+\delta_u,\ell+\delta_{\ell}}(\ma)  \leq  \Phi_{u,\ell}(\ma)  - (1-2\delta) \delta_{u}\cdot \tr\left[\phideriv(u\mub-\ma)\right] 
+(1+2\delta)\delta_{\ell}\cdot \tr\left[
\phideriv\left(\clb^{-1}\ma-\ell \mi\right)
\right] ~.
\]
\end{lem}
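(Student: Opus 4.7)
}

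The plan is to treat the shifts in $u$ and $\ell$ separately and, in each case, rewrite the shifted barrier as an additive perturbation of the matrix appearing inside $\exp(\cdot)^{-1}$, so that \autoref{lem:delta} applies directly. Concretely, split
\[
\Phi_{u+\delta_u,\ell+\delta_\ell}(\ma) = \tr\!\left[\exp\!\left(((u+\delta_u)\mi-\ma)^{-1}\right)\right] + \tr\!\left[\exp\!\left((\clb^{-1}\ma-(\ell+\delta_\ell)\mi)^{-1}\right)\right]
\]
and notice that $(u+\delta_u)\mi-\ma = (u\mi-\ma) + \delta_u \mi$ while $\clb^{-1}\ma-(\ell+\delta_\ell)\mi = (\clb^{-1}\ma-\ell\mi) - \delta_\ell \mi$. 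So, setting $\mlt = u\mi-\ma$, $\mDelta = \delta_u \mi$ in the upper term and $\mlt = \clb^{-1}\ma-\ell\mi$, $\mDelta = \delta_\ell \mi$ in the lower term, each of the two traces fits the left-hand side of one of the inequalities in \autoref{lem:delta}.

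The main bookkeeping step is verifying the hypotheses $\mDelta \preceq \delta\mlt^2$ and $\mDelta \preceq \delta\mlt$ of \autoref{lem:delta} in each case. For the upper term, $\delta_u \leq \delta\alpha_u$ and $\alpha_u \mi \preceq (u\mi-\ma)^2$ immediately give $\delta_u \mi \preceq \delta(u\mi-\ma)^2$; and since $\alpha_u \in (0,1]$, taking the scalar square root of $\alpha_u \leq \lambda_{\min}(u\mi-\ma)^2$ gives $\alpha_u \leq \sqrt{\alpha_u} \leq \lambda_{\min}(u\mi-\ma)$, hence $\delta_u \mi \preceq \delta(u\mi-\ma)$. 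An identical argument with $\alpha_\ell$ in place of $\alpha_u$ handles the lower term. With these verified, the second inequality of \autoref{lem:delta} applied to the upper term yields
\[
\tr\!\left[\exp\!\left(((u+\delta_u)\mi-\ma)^{-1}\right)\right] \leq \Phi_u(\ma) - (1-2\delta)\,\delta_u\,\tr\!\left[\phideriv(u\mi-\ma)\right],
\]
and the first inequality applied to the lower term yields
\[
\tr\!\left[\exp\!\left((\clb^{-1}\ma-(\ell+\delta_\ell)\mi)^{-1}\right)\right] \leq \Phi_\ell(\ma) + (1+2\delta)\,\delta_\ell\,\tr\!\left[\phideriv(\clb^{-1}\ma-\ell\mi)\right].
\]

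Adding the two bounds and using $\Phi_{u,\ell}(\ma) = \Phi_u(\ma) + \Phi_\ell(\ma)$ gives the claimed inequality. There is no substantial obstacle; the only subtlety is that the two applications of \autoref{lem:delta} use its two different (upper and lower) inequalities with opposite signs, which is precisely what produces the asymmetric $(1-2\delta)\delta_u$ and $(1+2\delta)\delta_\ell$ coefficients in the statement. The appearance of $\clb^{-1}$ (rather than $\clb$) inside $\phideriv$ in the lower term falls out directly since we take $\mlt = \clb^{-1}\ma - \ell\mi$ and do not need to rescale $\mDelta = \delta_\ell \mi$ as was done in the proof of \autoref{lem:potentialchange}.
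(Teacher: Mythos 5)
Your proposal is correct and matches the paper's own proof essentially step for step: both split the potential into its upper and lower terms, view the barrier shifts as additive perturbations $\pm\delta_u\mi$, $\mp\delta_\ell\mi$ of $u\mi-\ma$ and $\clb^{-1}\ma-\ell\mi$ respectively, verify the hypotheses of \autoref{lem:delta} via $\alpha_u\le\sqrt{\alpha_u}\le\lambda_{\min}(u\mi-\ma)$ (and likewise for $\alpha_\ell$), and apply the two inequalities of \autoref{lem:delta} with opposite signs. No gaps.
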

\begin{proof}
First note that, for $\alpha_{u} \in (0,1]$, $\alpha_{\ell} \in (0,1]$, $\alpha_{u} \leq  \lambda_{\min}(u\mub-\ma)^2$ and $\alpha_{\ell} \leq  \lambda_{\min}(\clb^{-1}\ma-\ell \mi)^2$ imply $\alpha_{u} \leq  \lambda_{\min}(u\mub-\ma)$ and $\alpha_{\ell} \leq  \lambda_{\min}(\clb^{-1}\ma-\ell \mi)$ respectively. 
Now recall,
$$\Phi_{u+\delta_{u},\ell+\delta_{\ell}}(\ma)=\exp((u\mi +\delta_{u}\mi -\ma)^{-1})+\exp(\clb(\ma-\ell \clb \mi -\delta_{\ell}\clb\mi)^{-1})$$
We analyze terms $\exp((u\mi +\delta_{u}\mi -\ma)^{-1})$ and $\exp(\clb(\ma-\ell \clb \mi -\delta_{\ell}\clb\mi)^{-1})$ individually. For the first term, $\mlt=u\mi-\ma$ and $\Delta=\delta_{u}\mi$ satisfies conditions of \autoref{lem:delta} because $\delta_{u}\mi \preceq\delta(u\mub-\ma)^{2}$ and $\delta_{u}\mi\preceq \delta(u\mub-\ma)$, yielding
\begin{equation}\label{equpper2}
\begin{split}
\exp((u\mi+\delta_{u}\mi-\ma^{-1}))& \leq \tr\left(\exp\left( (u\mi-\ma)^{-1}\right)\right) - (1-2\delta) \tr\left(\phideriv(u\mi-\ma)\delta_{u}\mi \right)\\
&= \Phi_{u}(\ma)- (1-2\delta) \delta_{u}\tr\left(\phideriv(u\mi-\ma)\right) ~.
\end{split}
\end{equation}
Similarly for the second term, $\mlt=\clb^{-1}\ma-\ell \mi$ and $\Delta=\delta_{\ell}\mi$ satisfies conditions of $\autoref{lem:delta}$ because $\delta_{\ell}\mi\preceq\delta(\clb^{-1}\ma-\ell \mi)^{2}$ and $\delta_{\ell}\mi\preceq \delta(\clb^{-1}\ma-\ell \mi)$, yielding
\begin{equation}\label{eqlower2}
\begin{split}
\exp(\clb(\ma-\ell \clb \mi -\delta_{\ell}\clb\mi)^{-1})& \leq \tr\left(\exp\left( \clb(\ma-\ell \clb \mi)^{-1}\right)\right) + (1+2\delta) \tr\left(\phideriv\left(\clb^{-1}\ma-\ell \mi \right)\delta_{\ell}\mi \right)\\
&= \Phi_{\ell}(\ma)+ (1+2\delta)\delta_{\ell} \tr\left(\phideriv\left(\clb^{-1}\ma-\ell \mi\right) \right) ~.
\end{split}
\end{equation}
Our lemma statement follows by combining \autoref{equpper2} and \autoref{eqlower2}.
\end{proof}

\subsection{Modified One-sided Oracle}
\label{subsec:oracle}

In the subsection we define and show existence of our \emph{modified one-sided oracle} which is analogous to \emph{one-sided oracle} in \cite{ls17} and the key algorithmic primitive we use to solve \autoref{prob:recovery}. In \autoref{subsec:twosided} we show how to use this oracle through an iterative procedure to solve \autoref{prob:recovery}. In \autoref{subsec:reduction} we also discuss running time to implement a solver for modified one-sided oracle. We start this subsection by defining modified one-sided oracle.
\begin{defi}[\textbf{Modified one-sided oracle}]\label{deforacle}
Let  $\constb \in [0,1]$, $\cp \succeq \mzero$, and  $\cn \succeq \mzero$ be  symmetric matrices and let $\mathcal{M}=\{\mm_{i}\}_{i=1}^{d}$ be a set of matrices such that there exists weights $w \in \R^d_{\geq 0}$ with $\given$. 
We call an algorithm $\oracle\left(\calM, \constb,\cp, \cn \right)$ a modified one-sided oracle with speed $S \in (0,1]$ and error $\epsilon>0$, if  it  outputs  a matrix $\mDelta=\sum_{i=1}^{d}\alpha_{i}\mm_{i}$ such that 
\begin{enumerate}
\item $\Delta\preceq \constb \mi$ and $\alpha \in \R^{d}_{\geq 0}$.
\item $\cp\bullet \frac{\Delta}{\clb}-\cn\bullet \Delta \geq S\cdot \constb \left[ (1-\epsilon) \tr(\cp) - (1+\epsilon)\tr(\cn) \right]  ~.$
\end{enumerate}
\end{defi}
Our definition of \emph{modified one-sided oracle} differs from \emph{one-sided oracle} definition of \cite{ls17} in following ways: 1) We don't have any restrictions on  
$\nnz(\alpha)$. 2) Condition 1 is more restrictive and is asking $\Delta$ to be less than some scalar multiple of identity instead of any arbitrary matrix. 3) Condition 2 in our definition reweights $\cp$ to handle our more general case. 4) Condition 2 holds without expectation. 5) We don't have access to vector $w$ such that $\given$. 

Note for our purposes we only require Condition 2 to hold with high probability (as is true for \cite{ls17}). The modified one-sided oracle we produce only satisfies condition 2 with high probability and to simplify writing we just omit this high probability statement for Condition 2 everywhere.

We conclude this subsection by proving existence of a modified one-sided oracle with speed $S=1$ and error $\frac{\epsilon}{2}$.

\begin{lemma}[Existence of One-sided Oracle]\label{lem:sdp} For all $\constb \in [0, 1]$, there exists a modified one-sided oracle with speed $S=1$ and error $\epsilon = 0$.
\end{lemma}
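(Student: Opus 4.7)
The plan is to give an explicit construction of the required $\Delta$ by rescaling the guaranteed feasible weights $w$ whose existence is promised in the setup of \defref{oracle} (note that for this existence result, unlike the algorithmic one used later, we are free to use $w$ directly). Specifically, I will set $\alpha_i = \constb w_i$ for each $i \in [d]$ and define $\Delta = \sum_{i=1}^d \alpha_i \mm_i = \constb \sum_{i=1}^d w_i \mm_i$.

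Verification of condition 1 is immediate: the hypothesis $\sum_{i=1}^d w_i \mm_i \preceq \mi$ gives $\Delta \preceq \constb \mi$, and $\alpha \in \R^d_{\geq 0}$ since $\constb \geq 0$ and each $w_i \geq 0$.

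For condition 2, I will use the two-sided spectral bound $\gamma \mi \preceq \sum_{i=1}^d w_i \mm_i \preceq \mi$ from the hypothesis on $w$. The lower bound gives
\[
\frac{\Delta}{\gamma} \;=\; \frac{\constb}{\gamma} \sum_{i=1}^d w_i \mm_i \;\succeq\; \constb \mi,
\]
so since $\cp \succeq \mzero$, we have $\cp \bullet \frac{\Delta}{\gamma} \geq \constb \cdot \cp \bullet \mi = \constb \tr(\cp)$. The upper bound gives $\Delta \preceq \constb \mi$, and since $\cn \succeq \mzero$, we have $\cn \bullet \Delta \leq \constb \tr(\cn)$. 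Subtracting yields
\[
\cp \bullet \frac{\Delta}{\gamma} - \cn \bullet \Delta \;\geq\; \constb \bigl[\tr(\cp) - \tr(\cn)\bigr],
\]
which is exactly condition 2 with $S = 1$ and $\epsilon = 0$.

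There is no real obstacle here: the lemma is essentially a statement that the existence of a feasible $w$ for the packing/covering constraints immediately certifies the existence of an oracle output, since $\constb w$ simultaneously keeps $\Delta$ below $\constb \mi$ and makes $\Delta/\gamma$ at least $\constb \mi$, and these two spectral inequalities separately control the two PSD inner products $\cp \bullet (\Delta/\gamma)$ and $\cn \bullet \Delta$. The subtlety (handled in later subsections) will be turning this existence statement into an efficient algorithm without direct access to $w$; this lemma itself is just a sanity check and requires only the two displayed inequalities above.
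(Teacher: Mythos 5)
Your proposal is correct and takes essentially the same approach as the paper: both choose $\Delta = \constb \sum_{i=1}^{d} w_i \mm_i$ and use the two-sided bound $\gamma \mi \preceq \sum_i w_i \mm_i \preceq \mi$ together with the positive semidefiniteness of $\cp$ and $\cn$ to bound $\cp \bullet \frac{\Delta}{\gamma}$ below by $\constb\tr(\cp)$ and $\cn \bullet \Delta$ above by $\constb\tr(\cn)$. Your write-up simply spells out the two trace inequalities that the paper's one-line display leaves implicit.
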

\begin{proof}
Choose $\Delta=\constb\cdot \sum_{i=1}^{d}w_i\mm_{i}$. As $\clb\mi \preceq \sum_{i = 1}^{d} w_i \mm_i \preceq \mi$, clearly $\Delta \preceq \constb \mi$. Furthermore, 
\begin{align*}
\cp\bullet \frac{\Delta}{\clb}-\cn\bullet \Delta 
= \constb \left[ \cp\bullet \frac{\sum_{i=1}^{d}w_i\mm_{i}}{\clb}-\cn\bullet \sum_{i=1}^{d}w_i\mm_{i} \right] 
\geq \constb \left[\tr(\cp) -\tr(\cn) \right]~.
\end{align*}
\end{proof}

\subsection{Solving Identity Case with One-Sided Oracle}
\label{subsec:twosided}
In the previous subsection we defined modified one-sided oracle and showed existence of one with speed $S=1$ and error $0$. In this section, with the help of our exponential potential function we show how to use a one-sided oracle to implement our main algorithm $\TwoSidedOracle$.

The main algorithm $\TwoSidedOracle\left( \calM,\epsilon \right)$ works as follows:

\begin{algorithm}[H] \begin{algorithmic}[1]

\caption{$\TwoSidedOracle\left( \calM,\epsilon\right)$ }

\State \textbf{Input: } $\mathcal{M}=\{\mm_{i}\}_{i=1}^{d}$ and error parameter $\epsilon$;

\State \textbf{Input: } $\oracle$ a modified one-sided oracle with speed $S=1/2$ and error $\epsilon$;

\State $j=0$, $\ma_0=\mzero$;

\State $\ell_{0}=-\frac{1}{4}$, $u_{0}=\frac{1}{4}$;

\While{$u_j - \ell_j < 1$}

\State $\cp=  (1-2\epsilon)\phideriv(\clb^{-1}\ma_{j}-\ell_{j}\mi)$;

\State $\cn=  (1+2\epsilon)\phideriv(u_{j} \mub-\ma_{j})$;

\State $\Delta_{j}=\oracle\left(\{\mm_{i}\}_{i=1}^{d},\clb \eigenlb,\cp,\cn\right)$;

\State $\ensuremath{\ma_{j+1}\leftarrow\ma_{j}+\epsilon \cdot\Delta_{j}}$;

\State $\delta_{u}=\epsilon \cdot S \cdot \clb \cdot \eigenlb \cdot\frac{1+2\epsilon}{1-4 \epsilon}$
and $\delta_{\ell}= \epsilon \cdot S \cdot \clb \cdot \eigenlb \cdot \frac{1-2\epsilon}{1+4\epsilon}$;

\State $u_{j+1}\leftarrow u_{j}+\delta_{u}$, $\ell_{j+1}\leftarrow\ell_{j}+\delta_{\ell}$;

\State $j\leftarrow j+1.$

\EndWhile

\State \textbf{Return} $\ma_{j}$.
\label{algo1}
\end{algorithmic}
 \end{algorithm}
 
The remaining part of this subsection is dedicated to analyze our algorithm $\TwoSidedOracle$. We start our analysis with \autoref{lem:potential_decreasing} that proves for all iterations $j$ the potential function does not increase and $\lambda_{\min}\left( u_j \mi - \ma_{j} \right)$ and $\lambda_{\min}\left( \gamma^{-1} \ma_{j} - \ell_j \mi \right)$ are both lower bounded by $\eigenlb$. These lower bounds are further used in analysis to show that the addition matrix $\epsilon \cdot \Delta_{j}$ in each iteration $j$ satisfies conditions of \autoref{lem:potentialchange} and \autoref{lem:potential2}.

\begin{lem}
\label{lem:potential_decreasing} In \autoref{algo1} with  $\epsilon \in [0, 1/20]$. Then, for all $j \geq 0$, $ \ell_{j} \mlb \preceq \ma_{j} \preceq u_{j} \mi$ and further it holds that
$\Phi_{u_{j+1},\ell_{j+1}}(\ma_{j+1})\leq\Phi_{u_{j},\ell_{j}}(\ma_{j})$, 
$\lambda_{\min}\left( u_j \mi - \ma_{j} \right)^2 \geq \eigenlb$,
and 
$\lambda_{\min}\left( \gamma^{-1} \ma_{j} - \ell_j \mi \right)^2 \geq \eigenlb$.
\end{lem}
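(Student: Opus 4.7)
The plan is to prove all three statements simultaneously by induction on $j$, strengthening the invariant to include the potential bound $\Phi_{u_j,\ell_j}(\ma_j)\leq 2ne^4$. The eigenvalue bounds are direct consequences of this potential bound: since $\exp(1/\lambda_{\min}(u_j\mi-\ma_j))$ appears as a summand of $\Phi_{u_j}(\ma_j)$, the potential bound forces $\lambda_{\min}(u_j\mi-\ma_j)\geq 1/\ln(2ne^4)=\sqrt{\eigenlb}$, and analogously for the lower barrier; strict containment $\ell_j\gamma\mi\prec\ma_j\prec u_j\mi$ then follows immediately from positivity of those minimum eigenvalues. The base case $j=0$ is immediate: $\ma_0=\mzero$, $u_0=-\ell_0=1/4$, each trace evaluates to $ne^4$, and $(1/4)^2\geq\eigenlb$ for all $n\geq 1$.

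For the inductive step I bound $\Phi_{u_{j+1},\ell_{j+1}}(\ma_{j+1})$ in two phases. First, I apply \autoref{lem:potentialchange} to the matrix update $\ma_j\to\ma_j+\epsilon\Delta_j$ at barriers $(u_j,\ell_j)$ with $\delta=\epsilon$ and $\alpha=\eigenlb$; the oracle's $\Delta_j\preceq\gamma\eigenlb\mi$ together with the inductive eigenvalue bounds supply the required hypotheses. Recognising $\cn=(1+2\epsilon)\phideriv(u_j\mi-\ma_j)$ and $\cp=(1-2\epsilon)\phideriv(\gamma^{-1}\ma_j-\ell_j\mi)$, the lemma yields $\Phi_{u_j,\ell_j}(\ma_{j+1})-\Phi_{u_j,\ell_j}(\ma_j)\leq\epsilon[\cn\bullet\Delta_j-\cp\bullet(\Delta_j/\gamma)]$, which the modified one-sided oracle guarantee then bounds by $\tfrac{\epsilon\gamma\eigenlb}{2}[(1+\epsilon)\tr(\cn)-(1-\epsilon)\tr(\cp)]$. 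Second, I apply \autoref{lem:potential2} at $\ma_{j+1}$ to account for the barrier shifts $\delta_u,\delta_\ell$; a Weyl-type perturbation using $\epsilon\|\Delta_j\|_2\leq\epsilon\gamma\eigenlb$ shows $\lambda_{\min}(u_j\mi-\ma_{j+1})^2$ and $\lambda_{\min}(\gamma^{-1}\ma_{j+1}-\ell_j\mi)^2$ remain within a $(1-O(\epsilon))$ factor of $\eigenlb$, which suffices to validate the hypotheses for the specific $\delta_u,\delta_\ell$ chosen by the algorithm. Because $\Delta_j\succeq 0$ and the scalar function $x\mapsto\exp(1/x)/x^2$ is decreasing, the $\phideriv$-traces at $\ma_{j+1}$ can be replaced by those at $\ma_j$ in the direction needed (the upper-barrier trace appears with a negative sign and is larger at $\ma_{j+1}$; the lower-barrier trace appears with a positive sign and is smaller), so $\ma_{j+1}$ may be swapped for $\ma_j$ throughout the bound.

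Combining the two phases, everything reduces to verifying $C_u\,\tr[\phideriv(u_j\mi-\ma_j)]+C_\ell\,\tr[\phideriv(\gamma^{-1}\ma_j-\ell_j\mi)]\leq 0$ for explicit scalar coefficients $C_u,C_\ell$. The values of $\delta_u,\delta_\ell$ in the algorithm are engineered exactly so that $C_u\leq 0$ and $C_\ell\leq 0$; the sign conditions reduce to the two scalar inequalities $(1+\epsilon)(1-4\epsilon)\leq 1-2\epsilon$ and $(1-\epsilon)(1+4\epsilon)\geq 1+2\epsilon$, both valid for $\epsilon\leq 1/20$. The resulting potential non-increase completes the induction, and the eigenvalue and containment claims at step $j+1$ then follow from the potential-to-eigenvalue conversion noted at the start. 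The main obstacle I anticipate is the verification of the hypotheses of \autoref{lem:potential2} at $\ma_{j+1}$: these concern minimum eigenvalues at the updated matrix, which the inductive hypothesis does not directly control, and so require a careful Weyl-type estimate combining the oracle's operator-norm bound on $\Delta_j$ with a choice of $\alpha_u,\alpha_\ell$ slightly below $\eigenlb$. Everything else is the algebraic bookkeeping dictated by the precise tuning of $\delta_u$ and $\delta_\ell$.
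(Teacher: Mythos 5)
Your proposal follows essentially the same route as the paper's proof: a strong induction carrying the potential bound $\Phi_{u_j,\ell_j}(\ma_j)\leq 2e^4 n$, extraction of the eigenvalue bounds from that potential bound, a two-phase update analysis via \autoref{lem:potentialchange} (with $\delta=\epsilon$, $\alpha=\eigenlb$) and \autoref{lem:potential2} evaluated at $\ma_{j+1}$ with slightly degraded $\alpha_u,\alpha_\ell$, the monotonicity swap of $\ma_{j+1}$ for $\ma_j$ in the $\phideriv$-traces, and a final scalar sign check tied to the choice of $\delta_u,\delta_\ell$. This matches the paper's argument in structure and in all the key steps, so the proposal is correct in the same sense and to the same level of rigor as the paper's own proof.
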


\begin{proof}
For notational convenience, for all $j \geq 0$ define
\[
\phi_j \defeq \Phi_{u_{j},\ell_{j}}(\ma_{j})
\text{ , }
\mm_{j}^{u} \defeq (u_j \mi - \ma_{j})^2
\text{ , and }
\mm_{j}^{\ell} \defeq (\clb^{-1}\ma_{j}-\ell_{j}\mi)^2  ~.
\]
We prove by strong induction that following statements hold for all $j \geq 0$:
\begin{equation}
\label{eq:induct}
\ell_{j}\mlb \preceq \ma_{j} \preceq u_{j} \mi
\text{ , }
\lambda_{\min}(\mm_j^u) \geq \eigenlb
\text{ , }
\lambda_{\min}(\mm_j^\ell) \geq \eigenlb
\text{ , and }
\phi_j \leq \phi_{j - 1} 
\end{equation}
where we trivially define $\ma_{-1} = \ma_{0}$, $u_{-1} = u_0$, and $\ell_{-1} = \ell_{0}$.

For the base case, $j = 0$, we trivially have $\phi_{j} = \phi_{j - 1}$ by our definitions. Further, since $\ma_{0} = \mzero$, $l_0 = -1/4$, $u_0 = 1/4$, and $\gamma \geq 0$ we have that $\ell_{0}\mlb \preceq \ma_{0} \preceq u_{0} \mi$ and $\mm_{0}^{\ell} = \mm_{0}^u = \frac{1}{16} \mi$. Since $1/16 \geq \ln^{-1}(2 e^4 n)$ this completes the proof of \eqref{eq:induct} for $j = 0$. 

Now, suppose \eqref{eq:induct} hold for all $i \in [0, j]$ we show it holds for $j +1$. First we show $\phi_{j+1}\leq \phi_{j}$. Since \eqref{eq:induct} hold for all $i \in [0, j]$ this implies that in each iteration $j$ of the $\TwoSidedOracle$ algorithm, the addition matrix $\epsilon\cdot\Delta_{j}$ with parameter $\delta=\epsilon, \mDelta=\epsilon \Delta_{j}$ and $\mlt= \ma_{j}$ satisfy the conditions of \autoref{lem:potentialchange} for $\alpha = \eigenlb$ because $\Delta_{j} \leq \clb\eigenlb \mi$ and $\lambda_{\min}(\mm_j^u) \geq \eigenlb$ and $\lambda_{\min}(\mm_j^\ell) \geq \eigenlb$ by the inductive hypothesis and we have:
\begin{equation}\label{eq:11}
\begin{split}
\Phi_{u_j,\ell_j}(\ma_j+\epsilon\Delta_j)  \leq  & \Phi_{u_j,\ell_j}(\ma_j) + (1+2\epsilon)\tr\left( \phideriv(u_j\mub-\ma_j)\epsilon\Delta_j\right) -(1-2\epsilon)\tr\left(\phideriv(\clb^{-1}\ma_{j}-\ell_{j} \mi)\epsilon \frac{\Delta_j}{\clb}\right)\\
 &=\Phi_{u_j,\ell_j}(\ma_j) -\left( \tr\left( \cp\epsilon\frac{\Delta_j}{\clb}\right)-\tr\left(\cn\epsilon\Delta_j\right)\right)\\
    & \leq \Phi_{u_j,\ell_j}(\ma_j) - \epsilon \cdot S \cdot \eigenlb \cdot \clb \cdot \left( (1-2\epsilon) \tr(\cp)-(1+2\epsilon)\tr(\cn) \right)~.
 \end{split}
\end{equation}

Similarly parameters $\delta=2\epsilon, \ma=\ma_{j+1}, u=u_j, \ell=\ell_j, \delta_{u}=\delta_{u}, \delta_{l}=\delta_{\ell}$ satisfy the conditions of \autoref{lem:potential2} for $\alpha_{u}=\alpha_{\ell}=(1-\epsilon)^2\eigenlb$ because $\eigenlb \leq \lambda_{\min}(\mm_j^u) $ which further combined with $\epsilon \Delta_{j} \leq \epsilon \clb \eigenlb$ implies 
\[
\lambda_{\min}(u \mi -\ma_{j+1})^{2} =\lambda_{\min}(u \mi -(\ma_{j}+\epsilon \Delta_j))^{2} \geq (1-\epsilon \clb)^2 \eigenlb \geq \alpha_u
\]
Similarly 
\[
\lambda_{\min}(\clb^{-1}\ma_{j+1}-\ell_{j} \mi)^2) \geq \lambda_{\min}(\mm_{j}^{\ell}) \geq  \eigenlb \geq \alpha_{\ell} ~.
\]
Further,   $\epsilon \cdot S \cdot \clb \cdot\frac{1+2\epsilon}{1-4 \epsilon} (1-\epsilon)^{-2} \leq  2\epsilon=\delta$ for $\epsilon \leq 1/20$ and $S\leq 1$ therefore $\delta_{u} \leq \delta \alpha_{u}$ and similarly $\epsilon \cdot S \cdot \clb \cdot\frac{1-2\epsilon}{1+4 \epsilon} (1-\epsilon)^{-2} \leq  2\epsilon$ imply $\delta_{\ell} \leq \delta \alpha_{\ell}$. Now applying 
\autoref{lem:potential2} we get,
\begin{equation}\label{eq:22}
\begin{split}
\Phi_{u_j+\delta_{u},\ell_j+\delta_{\ell}}(\ma_{j+1})  \leq  & \Phi_{u_j,\ell_j}(\ma_{j+1})  - (1-4\epsilon) \delta_{u}\tr\left(\phideriv(u_j\mi-\ma_{j+1})\right)
  +(1+4\epsilon)\delta_{\ell} \tr\left(\phideriv(\clb^{-1}\ma_{j+1}-\ell_{j} \mi)\right)\\
& \leq   \Phi_{u_j,\ell_j}(\ma_{j+1})  - (1-4\epsilon) \delta_{u} \tr\left(\phideriv(u_j\mub-\ma_{j})\right)
  +(1+4\epsilon)\delta_{\ell} \tr\left(\phideriv(\clb^{-1}\ma_{j}-\ell_{j} \mi)\right)\\
& \leq   \Phi_{u_j,\ell_j}(\ma_{j+1})  + \epsilon \cdot S \cdot \clb\cdot \eigenlb \cdot \left( (1-2\epsilon) \tr(\cp)-(1+2\epsilon)\tr(\cn) \right)~.
 \end{split}
\end{equation}
The second inequality follows because $ \ma_{j} \preceq \ma_{j+1}$ and $\ma_{j+1} \preceq u_j \mi$ (because $\epsilon\cdot \Delta_{j} \preceq \epsilon \clb \eigenlb \mi \preceq \eigenlbsq \mi \preceq u_{j}\mi-\ma_{j}$) that further implies:
\[
\tr\left(\phideriv\left(\clb^{-1}\ma_{j+1}-\ell_{j} \mi\right)\right)\leq \tr\left(\phideriv\left(\clb^{-1}\ma_{j}-\ell_{j} \mi\right)\right)\text{ and } \tr\left(\phideriv(u_j\mi-\ma_{j+1})\right) \geq \tr\left(\phideriv(u_j\mub-\ma_{j})\right) ~.
\] In the last inequality we just substituted values for $\delta_{\ell}$, $\delta_{u}$ and wrote in terms of $\cp,\cn$.

\noindent Combining \eqref{eq:11} and \eqref{eq:22} we have $ \phi_{j+1}  \leq \phi_j$. Further, combined with the inductive hypothesis this implies $\phi_{j+1} \leq \phi_0$. However, note that: 
\[
\phi_0=\tr\exp\left(\left(\frac{1}{4}\mi\right)^{-1}\right) + \tr\exp\left(\left(-\frac{-1}{4} \mi\right)^{-1}\right)=2e^4n ~.
\]
Consequently, $\phi_{j+1}=\exp((\mm_{j + 1}^{u})^{-1/2}) +\exp((\mm_{j + 1}^{\ell})^{-1/2}) \leq 2e^4n$ imply $\lambda_{\min}(\mm_{j+1}^{u}) \geq \eigenlb$ and $\lambda_{\min}(\mm_{j+1}^{\ell}) \geq \eigenlb$ respectively. 

Finally it remains to show that $\ell_{j+1}\clb \mi \preceq \ma_{j+1} \preceq u_{j+1}\mi$. We already showed $\ma_{j+1} \preceq u_{j}\mi$ that implies $\ma_{j+1} \preceq u_{j+1}\mi$. For the other side note that $\delta_{\ell}\clb \mi \leq \delta_{\ell}\mi=\epsilon \cdot S \cdot \clb \cdot \eigenlb \cdot \frac{1-2\epsilon}{1+4\epsilon}\mi \leq \clb \cdot \eigenlbsq \mi \leq \clb \cdot \lambda_{\min}(\frac{\ma_{j}-\ell_{j}\clb \mi}{\clb}) \mi\leq \ma_{j}-\ell_{j} \clb \mi$ and this further implies $\ell_{j+1}\clb \mi\leq \ma_{j} \leq \ma_{j+1}$.
\end{proof}

In the lemma above we presented important properties of our potential function and next we present an application of these properties. Our next lemma upper bounds the number of iterations of our algorithm $\TwoSidedOracle$.
\begin{lem}\label{lem:reductionlem}
Let $\epsilon \in [0, 1/20]$. The algorithm
$\TwoSidedOracle\left( \calM,\epsilon \right)$ outputs a $\left(1+O(\epsilon)\right)$-spectral sparsifier in time $O(\alpha \cdot \beta)$, where $\alpha = O(\frac{\log^{2}n}{\epsilon^{2} \cdot \lowerb \cdot S}) $ is the number of iterations of the algorithm and $\beta$ is the cost per iteration of the algorithm.
\end{lem}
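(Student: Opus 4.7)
The plan is to show correctness and iteration count by tracking the barriers $u_j, \ell_j$ across iterations and invoking \autoref{lem:potential_decreasing} to ensure that the inductive invariant $\ell_j \gamma \mi \preceq \ma_j \preceq u_j \mi$ is maintained throughout. First, I would observe that by construction the returned matrix $\ma_J$ (where $J$ denotes the final iteration index) is a non-negative linear combination of $\{\mm_i\}_{i=1}^d$, since each $\Delta_j$ output by $\oracle$ is such a combination and the updates $\ma_{j+1} = \ma_j + \epsilon \Delta_j$ preserve this structure. Moreover \autoref{lem:potential_decreasing} yields that $\ell_J \gamma \mi \preceq \ma_J \preceq u_J \mi$ upon termination.

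Next, I would bound the iteration count by directly computing the growth of the gap $u_j - \ell_j$. Starting from $u_0 - \ell_0 = 1/2$ and terminating when $u_J - \ell_J \geq 1$, each iteration grows the gap by exactly
\[
\delta_u - \delta_\ell = \epsilon \cdot S \cdot \gamma \cdot \ln^{-2}(2e^4 n) \cdot \left[ \frac{1+2\epsilon}{1-4\epsilon} - \frac{1-2\epsilon}{1+4\epsilon} \right].
\]
A routine simplification shows that the bracketed expression is $\Theta(\epsilon)$ for $\epsilon \in [0, 1/20]$, so $\delta_u - \delta_\ell = \Theta(\epsilon^2 \cdot S \cdot \gamma / \log^2 n)$. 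Therefore the algorithm terminates within $J = O(\log^2 n / (\epsilon^2 \gamma S))$ iterations, giving the claimed $\alpha$. Since each iteration consists of a single oracle call, a $\phideriv$ evaluation, and a constant-time update, the total work is $O(\alpha \cdot \beta)$.

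Finally, to obtain the $(1+O(\epsilon))$-spectral approximation, I would normalize the output by $u_J$ and argue that $\ma_J / u_J$ satisfies the required two-sided bound. Concretely, at termination $u_J = 1/4 + J \delta_u$, and plugging in $J = \Theta(\log^2 n / (\epsilon^2 \gamma S))$ together with $\delta_u = \Theta(\epsilon \gamma S / \log^2 n)$ yields $u_J = \Theta(1/\epsilon)$. Since $u_J - \ell_J \in [1, 1 + (\delta_u - \delta_\ell)]$, it follows that $\ell_J / u_J = 1 - 1/u_J \cdot (u_J - \ell_J) \geq 1 - O(\epsilon)$. Combining with the invariant from \autoref{lem:potential_decreasing} gives
\[
(1 - O(\epsilon)) \gamma \mi \preceq \frac{\ma_J}{u_J} \preceq \mi,
\]
so the weights $w'_i$ obtained by dividing the accumulated coefficients of $\ma_J$ by $u_J$ yield the desired spectral approximation to $\mi$.

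The main technical obstacle I anticipate is ensuring that the constants align so that the bracket $\left[ \frac{1+2\epsilon}{1-4\epsilon} - \frac{1-2\epsilon}{1+4\epsilon} \right]$ is correctly shown to be $\Theta(\epsilon)$ and that $u_J$ is $\Theta(1/\epsilon)$ rather than some other inconvenient magnitude; this drives both the iteration bound and the quality of the final spectral approximation. Everything else is bookkeeping on top of \autoref{lem:potential_decreasing}, which already carries the heavy work of maintaining the barrier invariants.
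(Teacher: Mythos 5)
Your proposal is correct and follows essentially the same route as the paper: track the growth $\delta_u-\delta_\ell=\Theta(\epsilon^2 S\gamma/\log^2 n)$ of the gap from $u_0-\ell_0=1/2$ to $1$ to get the iteration count, note $u_J=\Theta(1/\epsilon)$ at termination, and conclude $\ell_J/u_J\geq 1-O(\epsilon)$ so that rescaling by $u_J$ gives the sparsifier. Your explicit verification that the bracketed term is $\Theta(\epsilon)$ and the explicit normalization step are just slightly more detailed versions of what the paper does.
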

\begin{proof}
Note $\delta_{u}-\delta_{\ell} = \Theta \left( \epsilon^{2} \cdot S \cdot \clb \cdot \eigenlb)\right)$ and $u_0-\ell_0=1/2$. Therefore condition $u_j-\ell_j>1$ is achieved in $\Theta\left(\frac{\log^{2}n}{\clb \epsilon^{2} \cdot S}\right)$ many iterations and at termination $u_j =\Omega(\frac{1}{\epsilon})$ because $u_{j}=\frac{1}{4} + j \cdot \delta_{u}$. All that remains to show is that this condition implies that $\TwoSidedOracle\left( \calM,\epsilon \right)$ returns a $\left(1-\epsilon,\clb\right)$-spectral sparsifier, i.e. that 
\[
\frac{u_j}{\ell_j}=\left(1-\frac{u_j-\ell_j}{u_j}\right)^{-1} \leq  1+O(\epsilon)
\]
Therefore, it suffices to show that $\frac{u_j-\ell_j}{u_j}=O(\epsilon)$. However, this follows easily from the termination condition of the algorithm as $\frac{u_j-\ell_j}{u_j}=\frac{u_{j-1}-\ell_{j-1}+ \delta_{u}-\delta_{\ell}}{u_j}$ and $u_{j-1}-\ell_{j-1}<1$, $\delta_{u}-\delta_{\ell}=O(\epsilon^2\clb)$ and $u_j =\Omega(\frac{1}{\epsilon})$.
\end{proof}
The previous lemma upper bounds running time of $\TwoSidedOracle$ in terms of time to run $\oracle$. In the next subsection we reduce our \autoref{prob:recovery} from general case to the identity case and in \autoref{lem:onesided} we give running time to implement $\oracle$ in terms of parameters of this general case that combined with previous lemma proves our main result of this section.

\subsection{Reduction to Identity Case}\label{subsec:reduction}
In this subsection we give an algorithm to solve \autoref{prob:recovery} in a more general setting, that is when matrix $\mb$ is not identity. We do this by applying a very standard trick and reduce the problem of solving general case to the identity case. We describe this standard trick next.

Recall in \autoref{prob:recovery} we are given existence of a vector $w \in \R^d_{\geq 0}$ such that $\lowerb \mb \preceq \sum_{i=1}^{d}w_i \mm'_{i} \preceq  \mb$, and asked to find a vector $w' \in \R^d_{\geq 0}$ such that $(1- O( \epsilon))\lowerb \mb \preceq \sum_{i=1}^{d}w'_i \mm'_{i} \preceq  \mb$. Multiplying $\mb^{-1/2}$ on both sides results in the following equivalent formulation: Given existence of a vector $w \in \R^d_{\geq 0}$ such that $\lowerb \mi \preceq \sum_{i=1}^{d}w_i \mb^{-1/2}\mm'_{i}\mb^{-1/2} \preceq  \mi$, find a vector $w' \in \R^d_{\geq 0}$ such that $(1- O( \epsilon ) )\lowerb \mi \preceq \sum_{i=1}^{d}w'_i \mb^{-1/2}\mm'_{i}\mb^{-1/2} \preceq  \mi$. This new formulation falls under identity setting of \autoref{subsec:twosided}  by choosing $\mm_{i}=\mb^{-1/2}\mm'_{i}\mb^{-1/2}$ and consequently, by \autoref{lem:reductionlem} algorithm $\TwoSidedOracle$ solves this problem in time $O\left(\frac{\log^{2}n}{\epsilon^{2} \cdot \lowerb \cdot S} \cdot \beta \right)$, where $\beta$ is time per iteration. In next several  lemmas we upper bound the time per iteration for this general case. For these results we make critical use of the following result.

\begin{thm}
[\cite{Allen-ZhuLO16}]\label{thm:solve_SDP} Given a SDP 
\[
\mathsf{OPT} = \max_{x\geq0}c^{\rot}x\text{ subject to }\sum_{i=1}^{m}x_{i}\ma_{i}\preceq\mb
\]
with $\ma_{i}\succeq\mzero$, $\mb\succeq\mzero$ and $c\in\R^{m}$. Suppose that we are given a direct access to the vector $c\in\Rset^m$ and an indirect access to $\ma_{i}$ and $\mb$ via an oracle $\mathcal{O}_{L,\delta}$ which inputs a vector $x\in\R^{m}$ and outputs a vector $v\in\R^{m}$ such that
\[
v_{i}\in\left(1\pm\frac{\delta}{2}\right)\left[\ma_{i}\bullet\mb^{-1/2}\exp\left(L\cdot\mb^{-1/2}\left(\sum_{i}x_{i}\ma_{i}-\mb\right)\mb^{-1/2}\right)\mb^{-1/2}\right]
\]
in $\mathcal{W}_{L,\delta}$ time for any $x$ such that $x_i\geq0$
and $\sum_{i=1}^{m}x_{i}\ma_{i}\preceq2\mb$. Then, we can output
$x$ such that 
\[
\E\left[c^\top x \right]\geq(1-O(\delta))\mathsf{OPT}\quad\text{with}\quad\sum_{i=1}^{m}x_{i}\ma_{i}\preceq\mb
\]
 in $O\left(\mathcal{W}_{L,\delta}\log m\cdot\log\left(nm/\delta\right)/\delta^{3}\right)$ time where $L=(4/\delta)\cdot\log(nm/\delta)$.
\end{thm}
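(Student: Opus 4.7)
The plan is to apply matrix multiplicative weights (MWU) with the matrix exponential as a smoothed barrier for the spectral constraint. After the whitening transformation $\tilde{\ma}_i = \mb^{-1/2}\ma_i\mb^{-1/2}$, the SDP reduces to $\max c^\top x$ subject to $\sum_i x_i \tilde{\ma}_i \preceq \mi$ with $x \geq 0$. Fix $L = \Theta(\delta^{-1}\log(nm/\delta))$, so that $\frac{1}{L}\log \tr\exp(L(\sum_i x_i \tilde{\ma}_i - \mi))$ is a $\delta$-additive smoothing of $\lambda_{\max}(\sum_i x_i \tilde{\ma}_i - \mi)$; keeping the un-normalized trace $\tr\exp(\cdot) = O(nm/\delta)$ throughout then guarantees $\sum_i x_i \tilde{\ma}_i \preceq (1 + O(\delta))\mi$. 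The key observation is that $\tilde{\ma}_i \bullet \exp(L(\sum_j x_j\tilde{\ma}_j - \mi))$ — which is exactly what the oracle $\mathcal{O}_{L,\delta}$ returns up to a $(1\pm\delta/2)$ multiplicative factor — is proportional to $\partial_{x_i}$ of this log-trace-exp, so the oracle supplies precisely the coordinate gradients needed to run MWU.

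Using this, I would run a width-reduced coordinate MWU. At each step, invoke the oracle to obtain the vector $v$, then select a coordinate $i^*$ (by sampling proportional to $c_{i^*}/v_{i^*}$, which explains the expectation in the statement) and increment $x_{i^*}$ by $\eta = \Theta(\delta/L)$. The step size is chosen small enough that the width condition $\eta\tilde{\ma}_{i^*} \preceq (1/L)\mi$ holds, whence Golden–Thompson shows the trace-exp rises by at most a $(1+(1+O(\delta))L\eta v_{i^*})$ factor per step, and the iterate stays inside $\sum_i x_i \ma_i \preceq 2\mb$ as required by the oracle interface. A standard primal–dual argument comparing to any optimal $x^*$ — $c^\top x^* \leq \sum_i x_i^* v_i / \tr\exp(\cdot)$ by weak duality — shows that after $T = O(\log m \cdot \log(nm/\delta)/\delta^3)$ iterations the expected cumulative $c^\top x$ reaches $(1-O(\delta))\mathsf{OPT}$, and rescaling the final iterate by $1/(1+O(\delta))$ restores exact feasibility $\sum_i x_i \ma_i \preceq \mb$ at only $O(\delta)$ objective loss.

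The main obstacle I foresee is cleanly accounting for the two independent $\delta$-errors — the multiplicative $\delta/2$ from every oracle return $v_i$, and the additive $\delta$ from the smoothing parameter $L$ — so that neither compounds across the $\tilde{O}(\delta^{-3})$ iterations. The standard fix is to absorb the oracle error into an effective per-step gain $(1\pm O(\delta))\eta c_i/v_i$: because the primal gain in $c^\top x$ and the dual rise in $\log\tr\exp(\cdot)$ both scale by the same $(1\pm O(\delta))$ factor, the primal–dual comparison still goes through with only $O(\delta)$ overall loss. The smoothing error is handled just once at the end, by the final rescaling. Multiplying the iteration count $T$ by the per-iteration oracle cost $\mathcal{W}_{L,\delta}$ then yields the claimed runtime $O(\mathcal{W}_{L,\delta}\log m \cdot \log(nm/\delta)/\delta^3)$.
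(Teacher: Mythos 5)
This statement is not proved in the paper at all: it is imported verbatim as a black-box citation of Allen-Zhu, Lee, and Orecchia's positive-SDP solver, and the paper's only obligation is to supply the oracle $\mathcal{O}_{L,\delta}$ (which it does in \autoref{sec:JL}). So there is no in-paper proof to compare against; your task was effectively to reconstruct the cited work.

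Your sketch correctly identifies the framework — whitening by $\mb^{-1/2}$, the soft-max potential $\frac{1}{L}\log\tr\exp(L(\sum_i x_i\tilde{\ma}_i-\mi))$ with $L=\Theta(\delta^{-1}\log(nm/\delta))$, the oracle's output as the gradient of that potential, and a Golden--Thompson-controlled primal--dual comparison. But the argument has a genuine gap exactly where the cited theorem is hard: \emph{width-independence}. Your per-step trace bound requires $\exp(L\eta\tilde{\ma}_{i^*})\preceq \mi+(1+O(\delta))L\eta\tilde{\ma}_{i^*}$, which forces $L\eta\lambda_{\max}(\tilde{\ma}_{i^*})=O(\delta)$; since $\lambda_{\max}(\tilde{\ma}_{i^*})$ is a priori unbounded, the fixed additive step $\eta=\Theta(\delta/L)$ does not satisfy your own width condition, and making $\eta$ small enough to satisfy it would make the iteration count scale with the width rather than the claimed $O(\log m\cdot\log(nm/\delta)/\delta^3)$. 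The cited solver avoids this with multiplicative updates on $x$ and a mirror-descent/regret analysis that truncates the contribution of large eigendirections; none of that appears in your sketch, and the iteration count is asserted rather than derived. The weak-duality inequality you write ($c^\top x^*\leq\sum_i x_i^* v_i/\tr\exp(\cdot)$) is also not dimensionally a duality statement as written — the correct certificate compares $c_i$ against $\tilde{\ma}_i\bullet\mvar{Y}$ for the normalized Gibbs matrix $\mvar{Y}=\exp(\cdot)/\tr\exp(\cdot)$, using feasibility of $x^*$ against $\lambda_{\max}\leq 1$. As a high-level orientation your proposal is on target; as a proof it omits the step that constitutes the theorem's actual content.
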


\begin{lemma}\label{lem:onesided}
$\oracle$ with speed $\speed$ and error $\epsilon$ can be implemented in time 
\[
\rt ~.
\]
\end{lemma}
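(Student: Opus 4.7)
The plan is to cast the search for $\Delta = \sum_i \alpha_i \mm_i$ as a packing SDP and invoke the Allen-Zhu--Liao--Orecchia solver (\autoref{thm:solve_SDP}), then implement the required oracle $\mathcal{O}_{L,\delta}$ using Johnson--Lindenstrauss sketching plus truncated Taylor approximation of the matrix exponential. Concretely, set $c_i \defeq \cp \bullet \mm_i / \clb - \cn \bullet \mm_i$ and consider the packing SDP $\max_{\alpha \geq 0}\ c^\top \alpha$ subject to $\sum_i \alpha_i \mm_i \preceq \constb \mi$. The feasibility witness $w$ guaranteed by \autoref{prob:recovery} shows that $\alpha = \constb \cdot w$ is feasible and attains objective at least $\constb[\tr(\cp) - \tr(\cn)]$ (cf.\ the proof of \autoref{lem:sdp}), so any $(1-\epsilon/4)$-approximate solution obtained from \autoref{thm:solve_SDP} satisfies Condition~2 of \defref{oracle} with speed $S = 1/2$ and error $\epsilon$, and the constraint trivially gives Condition~1.

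To invoke \autoref{thm:solve_SDP} I need to build, in the regime $\alpha \geq 0$, $\sum_i \alpha_i \mm_i \preceq 2\constb \mi$, an oracle that returns $(1\pm \delta/2)$-multiplicative estimates of the vector entries
\[
v_i = \mm_i \bullet \tfrac{1}{\constb} \exp\!\left(\tfrac{L}{\constb}\bigl(\textstyle\sum_j \alpha_j \mm_j - \constb \mi\bigr)\right)
\]
for every $i \in [d]$, with $L = \tilde O(1/\delta)$. Because the argument of the exponential is bounded in spectral norm by $O(L)$ on the feasible region, a degree-$\tilde O(L)$ Taylor polynomial in $\mn \defeq L\constb^{-1}(\sum_j \alpha_j \mm_j - \constb \mi)$ gives an entrywise $1 \pm \delta/4$ approximation of $\exp(\mn)$ in the spectral sense; applying this polynomial to a vector uses $\tilde O(1/\delta)$ calls to $\runtime_{MV}$. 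To turn this into an estimate of the traces $v_i$ I use JL: draw $r = O(\log(n)/\delta^2)$ Gaussian vectors $g_1,\dots,g_r$, form $y_t = \exp(\mn/2) g_t$ via the polynomial, and use
\[
\widetilde v_i = \tfrac{1}{r\constb}\sum_{t=1}^r y_t^\top \mm_i y_t,
\]
which by standard Hanson--Wright / JL concentration is a $(1\pm \delta/2)$ estimate of $v_i$ simultaneously for all $i$ with high probability. Crucially, each batch $\{y_t\}$ uses only matrix-vector products with $\sum_j \alpha_j \mm_j$, while computing $y_t^\top \mm_i y_t$ for \emph{every} $i$ requires a single $\runtime_{QF}$ call per sample.

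Since in the identity reduction of \autoref{subsec:reduction} the matrices $\mm_i$ are of the form $\mb^{-1/2} \mm'_i \mb^{-1/2}$, each matrix-vector product with $\sum_j \alpha_j \mm_j$ costs $\runtime_{\mb^{-1/2}} + \runtime_{MV}$ and each quadratic form evaluation costs $\runtime_{\mb^{-1/2}} + \runtime_{QF}$. Putting this together, one oracle call runs in $\tilde O(\epsilon^{-O(1)}(\runtime_{\mb} + \runtime_{\mb^{-1/2}} + \runtime_{MV} + \runtime_{QF}))$ time, and \autoref{thm:solve_SDP} wraps only $\tilde O(\epsilon^{-O(1)})$ calls around it, yielding the claimed bound. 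I will defer the detailed JL concentration and the degree-versus-accuracy analysis of the truncated Taylor expansion to \autoref{sec:JL}, since these are standard but tedious.

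The principal obstacle I anticipate is controlling the Taylor truncation error uniformly across all iterations of the SDP solver: a naive argument loses factors in the spectral range of $\mn$, which in turn depends on how large $\sum_j \alpha_j \mm_j$ becomes during the inner loop. The resolution is exactly the promise $\sum_j \alpha_j \mm_j \preceq 2\mb = 2\constb\mi$ built into \autoref{thm:solve_SDP}, which caps $\|\mn\|_2 \leq 2L$, so a polynomial of degree $\tilde O(L) = \tilde O(1/\delta)$ suffices. The other subtlety is ensuring JL estimates are \emph{multiplicative} rather than additive; this is handled because the summands $y_t^\top \mm_i y_t$ are nonnegative and can be amplified to the required accuracy with $O(\log(d)/\delta^2)$ samples, absorbed into the $\epsilon^{-O(1)}$ factor.
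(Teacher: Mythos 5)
Your proposal follows essentially the same route as the paper: reduce $\oracle$ to the packing SDP $\max_{\alpha\geq 0}\cs\bullet\sum_i\alpha_i\mm_i$ subject to $\sum_i\alpha_i\mm_i\preceq \constb\mi$, certify feasibility via the witness $w$ as in \autoref{lem:sdp}, solve it with \autoref{thm:solve_SDP}, and implement the required trace/exponential oracles by truncated Taylor expansion plus JL sketching, deferring those details to \autoref{sec:JL}. The only detail the paper treats more carefully is that the objective vector cannot be formed exactly: since $c_i = \cp\bullet\mm_i/\clb - \cn\bullet\mm_i$ is a difference of separately-approximated nonnegative terms (with possible cancellation), the paper feeds the solver the one-sided surrogate $(1-\error)c_i'/\clb-(1+\error)d_i'$ so that the achieved value still lower-bounds the true objective.
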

\begin{proof}
Our proof follows along the similar lines as in \cite{ls17}. An implementation for $\oracle$ with speed $S=1/2$ and error $\epsilon$ is equivalent to approximately solving the following optimization problem 
\begin{equation}\label{eq:SDP}
\max_{\alpha_i\geq0}\cs\bullet\left(\sum_{i=1}^{d}\alpha_{i}\mm_{i}\right)\text{ subject to }\Delta=\sum_{i=1}^{d}\alpha_{i}\mm_{i}\preceq \constb \mi
\end{equation}
where $\cs=\frac{1}{\clb}\cp - \cn$ and we recall definitions of $\cp$ and $\cn$ below:
$$\cp=  (1-2\epsilon)\phideriv(\clb^{-1}\ma_{j}-\ell_{j} \mi) \text{ and } \cn=  (1+2\epsilon)\phideriv(u_{j}\mi-\ma_{j})~.$$ 
We show how to invoke the packing SDP result from \autoref{thm:solve_SDP} to solve it faster.

Let $\error=\epsilon/4$. Suppose for each iteration $j$ the following two conditions hold: \\
\\
{\bf (1)} we have access to $c_i'$ and $d_{i}'$ that are multiplicative $(1\pm \error)$ approximation to $c_i=\cp\bullet \mm_{i} $ and $d_i=\cn\bullet \mm_{i} $ respectively, then solve optimization problem \eqref{eq:SDP} with parameter $\mb=\constb \mi$ (constraint matrix) and objective value $\sum_{i=1}^{d}\alpha_{i}((1-\error)\frac{c_{i}'}{\clb}-(1+\error)d_{i}')$ instead of the original $\sum_{i=1}^{d}\alpha_{i}(\frac{c_{i}}{\clb}-d_{i})$, then $\Delta=\constb \cdot \sum_{i=1}^{d}w_i\mm_{i}$ is a feasible solution with objective value greater than $\constb \left[ (1-\error)^2\tr (\cp) - (1+\error)^2\tr(\cn)\right]$ and the optimum solution to optimization problem \eqref{eq:SDP} with $c_i'$ and $d_{i}'$ parameters is 
\[
\mathsf{OPT}'  \geq \constb \left[ (1-\error)^2\tr (\cp) - (1+\error)^2\tr(\cn)\right] \geq \constb \left[ (1-\epsilon)\tr (\cp) - (1+\epsilon)\tr(\cn)\right]
\]\\
\\
{\bf (2)} we have access to vector $v\in\R^{d}$ such that: 
\[
v_{i}\in\left(1\pm\frac{\delta}{2}\right)\left[\mm_{i}\bullet\constb\exp\left(L\cdot\constb\left(\sum_{i}x_{i}\mm_{i}-\constb\mi\right)\right)\right]
\]
for any $x$ such that $x_i\geq0$ and $\sum_{i=1}^{d}x_{i}\mm_{i}\preceq2\constb\mi$. 

Under these two conditions we can use \autoref{thm:solve_SDP} and find vector $\alpha \in \R_{\geq 0}^{d}$ such that $\sum_{i=1}^{d}\alpha_{i}\mm_{i} \preceq\constb\mi$ and
\begin{align*}
\cs\bullet \left[\sum_{i \in [d]}\alpha_{i}\mm_{i} \right] & =\sum_{i \in [d]}(\frac{c_i}{\clb}-d_{i}) \alpha_{i} \geq \sum_{i \in [d]}((1-\error)\frac{c_i'}{\clb}-(1+\error)d_{i}') \alpha_{i}\\
& \geq (1-O(\delta)) \mathsf{OPT}'  \qquad (\text{By Theorem } \ref{thm:solve_SDP})\\
& \geq \frac{1}{2}\cdot \constb \left[ (1-\epsilon)\tr (\cp) - (1+\epsilon)\tr(\cn)\right]
\end{align*}
In the last step we chose $O(\delta) =1- S=1/2$. Consequently, proving this theorem  boils down to approximating $c_i$ and $d_{i}$ to multiplicative $(1\pm\error)$ and $v_{i}$ to multiplicative $\Theta(1)$. Recall our matrix $\mm_{i}= \mb^{-1/2}\mm'_{i}\mb^{-1/2}$ and we use a standard and slightly general reasoning of \cite{ls17, SpielmanS08, ALO15, LeeS15a, PengS14, KyngLPSS16, NN13, LiMP13, CohenLMMPS14} to approximate these quantities by constructing a Johnson-Lindenstrauss (JL) sketch. In \autoref{sec:JL} we show that the complete vectors $c$, $d$ and $v$ can be written down in time $\rt$ in \autoref{lem:JL} completing the proof.
\end{proof}
\autoref{lem:onesided} gives an upper bound on running time of $\oracle$ and combining it with \autoref{lem:reductionlem} proves our main \autoref{thm:recmain}.

\section{Acknowledgments}

We thank Moses Charikar and Yin Tat Lee for helpful discussions.

\bibliographystyle{plain}
\bibliography{recovery}

\appendix
\section{Approximation to Matrix Square Root}
\label{sec:square_root}

In this section, we provide approximation to square root of a matrix. This result was already known in \cite{CCLPT14} and we use the same proof to prove a slightly general theorem. We start by restating a lemma from \cite{CCLPT14} and in our next result we give polynomial approximation result to square root of a matrix.
\begin{restatable}[Lemma 4.1 in \cite{CCLPT14}]{lemma}{restateTaylor}
\label{lem:taylorscalar}
 Fix $p \in [-1, 1]$ and $\delta \in (0, 1)$, for any error tolerance $\epsilon > 0$, there exists a $t$-degree polynomial $T_{p,t} : \R \rightarrow \R$ with $t \le \frac{\log(1/(\epsilon (1-\delta)^2)}{1-\delta}$, such that for all $\lambda \in [1-\delta, 1+\delta]$,
\begin{align}
\exp(-\epsilon) \lambda^{p} \leq T_{p,t}(\lambda) \leq  \exp(\epsilon) \lambda^{p}.
\end{align}
\end{restatable}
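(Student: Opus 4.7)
The plan is to build $T_{p,t}$ directly from the generalized binomial (Taylor) expansion of $\lambda^p$ around $\lambda = 1$, and calibrate the truncation degree $t$ to produce the desired multiplicative error. Specifically, since every $\lambda \in [1-\delta, 1+\delta]$ lies in the radius-$1$ domain of convergence of the expansion at $\lambda=1$, I would define
\[
T_{p,t}(\lambda) \;\defeq\; \sum_{k=0}^{t} \binom{p}{k} (\lambda-1)^{k},
\qquad\text{where}\qquad \binom{p}{k}=\frac{p(p-1)\cdots(p-k+1)}{k!},
\]
so that $\lambda^{p} - T_{p,t}(\lambda) = \sum_{k>t} \binom{p}{k}(\lambda-1)^{k}$. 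This reduces the problem to a clean tail bound for a geometric-type series, and the rest is bookkeeping.

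The key inequality that makes the plan go through is that $|\binom{p}{k}| \le 1$ for every $p \in [-1,1]$ and every $k \ge 0$. I would prove this by examining the ratio $|\binom{p}{k+1}/\binom{p}{k}| = |p-k|/(k+1)$ and noting that for $p \in [-1,1]$ and $k \ge 0$ we have $|p-k| \le k+1$; since $\binom{p}{0}=1$, the bound then follows by induction. With this in hand, for $\lambda \in [1-\delta,1+\delta]$,
\[
\bigl|\lambda^{p} - T_{p,t}(\lambda)\bigr| \;\le\; \sum_{k=t+1}^{\infty} \delta^{k} \;=\; \frac{\delta^{t+1}}{1-\delta}.
\]

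Next I would convert this additive bound into the multiplicative form required by the lemma. A short case analysis (the minimum of $\lambda^{p}$ over $[1-\delta,1+\delta]$ for $p\in[-1,1]$ is achieved at an endpoint, and in both $p\ge 0$ and $p\le 0$ cases is at least $1-\delta$) gives $\lambda^{p} \ge 1-\delta$. Hence
\[
\frac{|\lambda^{p}-T_{p,t}(\lambda)|}{\lambda^{p}} \;\le\; \frac{\delta^{t+1}}{(1-\delta)^{2}}.
\]
Choosing $t$ so that $\delta^{t+1}/(1-\delta)^{2} \le \min(1-e^{-\epsilon},\, e^{\epsilon}-1)$ then yields $e^{-\epsilon}\lambda^{p} \le T_{p,t}(\lambda) \le e^{\epsilon}\lambda^{p}$ (possibly at the cost of a harmless constant factor on $\epsilon$, which I would absorb).

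Finally I would solve for $t$. Taking logs, it suffices to have $(t+1)\log(1/\delta) \ge \log(1/(\epsilon(1-\delta)^{2}))$. Using the elementary inequality $\log(1/x) \ge 1-x$ on $(0,1]$ with $x=\delta$ gives $\log(1/\delta) \ge 1-\delta$, so the condition is guaranteed by the cleaner requirement $t+1 \ge \log(1/(\epsilon(1-\delta)^{2}))/(1-\delta)$. Rounding up then yields the stated degree bound. The main technical obstacle I foresee is merely being careful in the multiplicative-versus-additive conversion and in tracking constants so that the final bound matches the statement cleanly; the binomial-coefficient bound and the tail geometric sum are routine once set up.
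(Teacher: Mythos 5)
The paper does not actually prove this lemma --- it is imported verbatim as Lemma~4.1 of \cite{CCLPT14} --- so there is no in-paper argument to compare against; but your route (truncated binomial series at $\lambda=1$) is exactly the standard one used in that reference, and the substance of your argument is correct. The coefficient bound $|\binom{p}{k}|\le 1$ via the ratio $|p-k|/(k+1)\le 1$ is right, the geometric tail $\delta^{t+1}/(1-\delta)$ is right, the pointwise lower bound $\lambda^p\ge 1-\delta$ on $[1-\delta,1+\delta]$ for $p\in[-1,1]$ is right, and the final passage from $\log(1/\delta)$ to $1-\delta$ in the denominator of the degree bound is the correct way to match the stated form.

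The one place you gesture rather than argue is the additive-to-multiplicative conversion at the end, and it is worth being precise there because the inequality goes the wrong way for the lower bound: relative error at most $\epsilon$ gives $T_{p,t}(\lambda)\ge(1-\epsilon)\lambda^p$, but $1-\epsilon\le e^{-\epsilon}$, so this does \emph{not} imply $T_{p,t}(\lambda)\ge e^{-\epsilon}\lambda^p$ (and is vacuous for $\epsilon\ge 1$). The clean fix is to target relative error $\epsilon'=1-e^{-\epsilon}$ from the start; since $1+\epsilon'\le e^{\epsilon}$ as well, both sides of the conclusion then follow. Because $\epsilon'\ge \epsilon e^{-\epsilon}$ (and $\epsilon'\ge\epsilon/2$ for $\epsilon\le 1$), this costs only an additive $O(1)/\log(1/\delta)$ in the degree, which is absorbed by the slack $\log(1/\delta)\ge 1-\delta$ except in the degenerate limit $\delta\to 1$; whether the \emph{exact} stated degree bound survives in every parameter regime is pure constant-chasing and is immaterial to how the paper uses the lemma (only an $O(\frac{1}{\alpha}\log(1/(\epsilon\alpha)))$ degree is needed in \autoref{lem:precond}). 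With that conversion made explicit, your proof is complete.
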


\lemprecond*

\begin{proof}
Note $\alpha \M{Z} \M{Z}^{\top} \preceq \MM^{-1} \preceq \M{Z} \M{Z}^{\top}$ implies $ (\M{Z} \M{Z}^{\top})^{-1} \preceq \MM \preceq1/\alpha (\M{Z} \M{Z}^{\top})^{-1}$, further multiplying $\M{Z}^{\top}$ by ;eft and $\M{Z}$ by right we get $\M{I} \preceq \M{Z}^{\top} \MM \M{Z} \preceq 1/\alpha \M{I}$.
This implies that $\kappa(\M{Z}^{\top} \MM \M{Z}) = \alpha$, which we can scale $\M{Z}^{\top} \MM \M{Z}$ so that its eigenvalues lie in $[1-\delta,1+\delta]$ for $\delta=1-\alpha>0$.

By applying \autoref{lem:taylorscalar} on its eigenvalues, there is an $O({\frac{1}{\alpha}\log(1/(\epsilon\alpha))})$ degree polynomial $T_{-\Half, t}(\cdot)$ that approximates the inverse square root of $\M{Z}^{\top} \MM \M{Z}$, that is,
\begin{align}
  \left(T_{-\Half, \OO{\log(1/\epsilon)}} \left(\M{Z}^{\top} \MM \M{Z}\right)\right)^2 \approx_\epsilon \left(\M{Z}^{\top} \MM \M{Z}\right)^{-1}.
\end{align}
Here, we can rescale $\M{Z}^{\top} \MM \M{Z}$ back inside $T_{-\Half, t}(\cdot)$, which does not affect the multiplicative error. Then we have,
\begin{align}
  \M{Z} \left(T_{-\Half, \OO{\log(1/\epsilon)}}\left(\M{Z}^{\top} \MM \M{Z}\right)\right)^2 \M{Z}^{\top} \approx_\epsilon \M{Z} \left(\M{Z}^{\top} \MM \M{Z}\right)^{-1} \M{Z}^{\top} = \MM^{-1}.
\end{align}
So if we define the linear operator
  $\tilde{\M{C}} = \M{Z} \left(T_{-\Half, \OO{\log(1/\epsilon)}}\left(\M{Z}^{\top} \MM \M{Z}\right) \right)$,
  $\tilde{\M{C}}$ satisfies the claimed properties.
\end{proof}
\newcommand{\mpp}{\textbf{P}}
\newcommand{\me}{\textbf{E}}
\newcommand{\mq}{\textbf{Q}}
\newcommand{\mda}{\textbf{D}_{\ma}}
\newcommand{\mdb}{\textbf{D}_{\mb}}
\renewcommand{\rt}{\tilde{O}\left(\frac{1}{\gamma^{4}\error^{O(1)}}\left(\runtime_{\mb} + \runtime_{\mb^{-1}} + \runtime_{MV} + \runtime_{QF} + \runtime_{SQ}\right) \right)}
\section{Fast Implementation of $\oracle$}\label{sec:JL}
In this section we show how to approximate vector $c \in \R^d$ used in subroutine $\oracle$ and a similar analysis can be done to approximate vector $v \in \R^d$ as well. The main theme of this section is to use the famous Johnson Lindenstrauss (JL) sketch to compute these vectors quickly. Below is our main lemma of this section; the proof is adapted from \cite{ls17}.

\begin{lemma}\label{lem:JL}
For any iteration $j$ of \autoref{algo1}, the vector $c$, $d$ and $v$ needed for subroutine $\oracle$ can be computed in time $\rt$, where $\error=\frac{\epsilon}{4}$.
\end{lemma}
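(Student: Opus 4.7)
The plan is to compute each of $c$, $d$, and $v$ via a Johnson--Lindenstrauss (JL) sketch, which turns each trace into a small number of quadratic form evaluations, combined with Taylor polynomial truncations that apply matrix exponentials and inverse square roots through sequences of matrix-vector products and linear-system solves. I will describe the argument for $c$; the cases of $d$ and $v$ are entirely analogous.

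Let $\mathbf{L} \defeq \gamma^{-1}\ma_j - \ell_j\mi$ and factor $\cp = (1-2\epsilon)\exp(\mathbf{L}^{-1})\mathbf{L}^{-2} = \mq_{+}\mq_{+}^\top$ where $\mq_{+} \defeq \sqrt{1-2\epsilon}\,\exp(\tfrac{1}{2}\mathbf{L}^{-1})\mathbf{L}^{-1}$. Using the identity $\mm_i = \mb^{-1/2}\mm'_i\mb^{-1/2}$ supplied by the reduction in \autoref{subsec:reduction},
\[
c_i \;=\; \tr(\mq_{+}^\top \mm_i \mq_{+}) \;=\; \tr\!\bigl((\mb^{-1/2}\mq_{+})^\top \mm'_i (\mb^{-1/2}\mq_{+})\bigr).
\]
I would draw $K = \Theta(\log(dn)/\error^2)$ iid standard Gaussian vectors $g_1,\ldots,g_K \in \R^n$ and set $y_k \defeq \mb^{-1/2}\mq_{+} g_k$. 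Applying the standard JL norm-preservation guarantee simultaneously to the at most $O(dn)$ eigenvectors of all $\mm'_i$'s yields, with high probability,
\[
\frac{1}{K}\sum_{k=1}^K y_k^\top \mm'_i y_k \;\in\; \bigl[(1-\error)c_i,\,(1+\error)c_i\bigr] \qquad \text{for every } i \in [d].
\]
Once the $y_k$ are in hand, all $d$ quadratic forms are produced by $K$ invocations of $\runtime_{QF}$, contributing $\tilde{O}(\error^{-2}\runtime_{QF})$ total time for this stage.

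Producing each $y_k$ reduces to applying, in sequence, (i)~$\mathbf{L}^{-1}$, (ii)~$\exp(\tfrac{1}{2}\mathbf{L}^{-1})$, and (iii)~$\mb^{-1/2}$. By \autoref{lem:potential_decreasing}, throughout Algorithm~\ref{algo1} we have $\lambda_{\min}(\mathbf{L})^2 \geq \eigenlb$ and $\lambda_{\max}(\mathbf{L}) = \tilde{O}(\gamma^{-1}\error^{-1})$, so $\mathbf{L}$ has condition number $\tilde{O}(\gamma^{-1}\error^{-1})$. A single matrix-vector product with $\mathbf{L}$ costs $\tilde{O}(\error^{-O(1)}(\runtime_\mb + \runtime_{\mb^{-1}} + \runtime_{SQ} + \runtime_{MV}))$, because $\ma_j$ takes the form $\mb^{-1/2}\ma'_j\mb^{-1/2}$ for a fixed non-negative combination $\ma'_j = \sum_i \alpha_i\mm'_i$ and because $\mb^{-1/2}$ can be applied to $\error$-accuracy via \autoref{lem:precond} using the square-root oracle $\mc$ and $\runtime_{\mb^{-1}}$. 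Preconditioned Chebyshev iteration then inverts $\mathbf{L}$ to polynomial accuracy in $\tilde{O}(\gamma^{-O(1)}\error^{-O(1)})$ such products. For step (ii), since $\|\mathbf{L}^{-1}\| \leq 1/\sqrt{\eigenlb} = O(\log n)$, truncating the Taylor series of $\exp$ at degree $\tilde{O}(\log(1/\error))$ already gives multiplicative $\error$-accuracy, with each degree costing one extra application of $\mathbf{L}^{-1}$. Step (iii) is subsumed by the bound above.

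The argument for $d_i = \cn \bullet \mm_i$ is identical after substituting $\mathbf{U} \defeq u_j\mi - \ma_j$ for $\mathbf{L}$, since \autoref{lem:potential_decreasing} supplies the analogous bound $\lambda_{\min}(\mathbf{U})^2 \geq \eigenlb$. For the SDP-oracle entries $v_i$, one must apply $\exp\!\bigl(L\mb^{-1/2}(\sum_i x_i\mm_i - \mb)\mb^{-1/2}\bigr)$ against Gaussian vectors; the packing constraint $\sum_i x_i\mm_i \preceq 2\mb$ built into \autoref{thm:solve_SDP} bounds the exponent in operator norm by $O(L) = \tilde{O}(1/\delta)$, so a Taylor truncation of degree $\tilde{O}(L)$ suffices, each term costing matrix-vector products with $\mb$ and $\sum_i x_i\mm_i$ together with $\mb^{-1/2}$ applications handled exactly as in (iii). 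Aggregating the $K = \tilde{O}(\error^{-2})$ JL samples with the $\tilde{O}(\gamma^{-O(1)}\error^{-O(1)})$ cost per sketch vector and a single $\runtime_{QF}$ invocation per sketch yields the claimed
\[
\tilde{O}\bigl(\gamma^{-4}\error^{-O(1)}(\runtime_\mb + \runtime_{\mb^{-1}} + \runtime_{MV} + \runtime_{QF} + \runtime_{SQ})\bigr).
\]
The main obstacle is the joint bookkeeping of the condition-number-dependent solve cost for $\mathbf{L}$ and $\mathbf{U}$ together with the Taylor degrees for the two matrix exponentials, each of which picks up its own factor of $\gamma^{-1}$ and $\error^{-1}$; the uniform lower bounds $\lambda_{\min}(\mathbf{L}), \lambda_{\min}(\mathbf{U}) \geq \sqrt{\eigenlb}$ coming from \autoref{lem:potential_decreasing} are exactly what keep every degree polylogarithmic in $n$ and polynomial in $(\gamma\error)^{-1}$, which is what allows the final $\gamma^{-4}$ dependence.
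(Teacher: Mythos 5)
Your high-level plan (estimate each trace by a Johnson--Lindenstrauss/Gaussian sketch and realize the matrix function by low-degree polynomials applied through matrix-vector products) is the same as the paper's. But there is a genuine gap at the point where you apply $\mb^{-1/2}$ as an operator. The oracle model of \autoref{prob:recovery} provides $\mb x$, $\mb^{-1}x$, and $\mc x$ with $\mc\mc^{\top}=\mb$ --- it does \emph{not} provide $\mb^{-1/2}x$, and \autoref{lem:precond} does not supply it either: that lemma produces a (generally non-symmetric) factor $\tilde{\mc}$ with $\tilde{\mc}\tilde{\mc}^{\top}\approx_{\epsilon}\mm^{-1}$, which can be any $\mm^{-1/2}U$ for orthogonal $U$ and is not an approximation of the symmetric inverse square root as a linear map. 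Your pipeline needs $\mb^{-1/2}$ in two places --- inside $\mathbf{L}=\gamma^{-1}\ma_j-\ell_j\mi$ (since $\ma_j=\mb^{-1/2}\hat{\mm}\mb^{-1/2}$) and as the outer factor in $y_k=\mb^{-1/2}\mq_+g_k$ --- and you have not shown how to implement either with the available oracles. The paper's proof exists precisely to dodge this: it writes the whole approximant as a polynomial $p(u\mi-\ma)$, uses the conjugation identity $p(u\mi-\ma)=\mb^{-1/2}q(\hat{\mm}\mb^{-1})\mb^{1/2}$, so that $\mb^{-1/2}p(u\mi-\ma)=\mb^{-1}q(\hat{\mm}\mb^{-1})\mb^{1/2}$ and the residual $\mb^{1/2}$ factors pair up inside the trace into $\mb=\mc\mc^{\top}$; the resulting sketch vectors $\mb^{-1}q(\hat{\mm}\mb^{-1})\mc\mq^{\top}$ use only the oracles actually given. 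One could in principle repair your argument by substituting a consistent factor $\mathbf{N}$ with $\mathbf{N}\mathbf{N}^{\top}=\mb^{-1}$ everywhere and arguing the traces are invariant under the hidden orthogonal rotation, but that is exactly the bookkeeping you have omitted, and with \autoref{lem:precond} the factorization is only $\epsilon$-approximate, so the error must additionally be propagated through $\exp(\tfrac{1}{2}\mathbf{L}^{-1})$.

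That error propagation is the second, smaller gap: since $\|\mathbf{L}^{-1}\|$ can be as large as $\Theta(\log n)$, a spectral perturbation of size $\delta$ to the exponent becomes a multiplicative factor $e^{O(\delta\log n)}$ after exponentiation, so the inner accuracies of your Chebyshev solve for $\mathbf{L}^{-1}$ and of the approximate square-root factor must be taken as $O(\error/\log n)$ rather than $\error$; this is fixable but unaddressed. The paper sidesteps the composition issue by approximating the single scalar function $f(x)=x^{-1}\exp(x^{-1})$ with one Taylor polynomial carrying a multiplicative guarantee (\autoref{thm:taylor_expand}) and using \autoref{lem:commute} to square it, which is also where the stated $\gamma^{-4}$ degree dependence comes from. (Minor: your claimed condition number $\tilde{O}(\gamma^{-1}\error^{-1})$ for $\mathbf{L}$ should be $\tilde{O}(\gamma^{-2})$, from $\lambda_{\max}(\mathbf{L})\leq\gamma^{-2}$ and $\lambda_{\min}(\mathbf{L})=\Omega(1/\log n)$.)
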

In the remaining part of this section we prove this lemma. Recall at each coordinate $i \in [d]$, our vectors $c$ and $d$ take values $c_i=\cp\bullet \mm_{i} $ and $d_i=\cn\bullet \mm_{i} $ respectively and approximating them within multiplicative $(1\pm \error)$ error suffices to implement our subroutine $\oracle$. Below we give a more general formulation which captures the problem of finding these multiplicative $(1\pm \error)$ estimates of $\cp \bullet \mm_{i}  $ and $\cn \bullet \mm_{i}  $ as special cases: For each iteration $j$ we wish to
\begin{itemize}
\item find $(1\pm \error)$ multiplicative approximation to $\tr(\mx \mb^{-1/2} \phideriv(u\mi-\ma) \mb^{-1/2})$ for any $\mx \succeq \mzero$ and
\item find $(1\pm \error)$ multiplicative approximation to $\tr(\mx \mb^{-1/2} \phideriv\left(\clb^{-1}\ma - \ell \mi\right) \mb^{-1/2})$ for any $\mx \succeq \mzero$.
\end{itemize}
In the formula above we discarded the subscripts with respect to $j$ for $\ma_{j},u_j$ and $\ell_{j}$. Here we are using notations from \autoref{subsec:reduction} where $\mm_{i}=\mb^{-1/2}\mm'_{i}\mb^{-1/2}$ and each of these $\mm'_{i} \in \mathcal{M}$. In this notation  our matrix $\ma$ is of the form $\ma = \sum_{i=1}^{d}\alpha_i \mm_{i}$ for some vector $\alpha \in \R^d_{\geq 0}$ and we define $\hat{\mm} \defeq \sum_{i=1}^{d}\alpha_i \mm'_{i}$. Further our subroutine $\oracle$ guarantees two basic properties for matrix $\ma$: 
\begin{itemize}
	\item $\ell \clb \mi \preceq \ma \preceq u\mi$ and 
	\item $\lambda_{\min} (u\mi-\ma)$, $\lambda_{\min} (\clb^{-1}\ma-\ell\mi)$ are both $\Omega(\log^{-1} n)$. 
\end{itemize}
Now by the Lowner-Heinz inequality, we get
\[
\Omega(\log^{-1} n)\mi \preceq (u\mi-\ma) \preceq (u\mi - \ell\clb \mi) \preceq\frac{1}{\clb}(u\clb \mi - \ell \mi)\leq \frac{1}{\clb}(u\mi - \ell \mi)\leq\frac{1}{\clb}  \mi
\]
and
\[
\Omega(\log^{-1} n)\mi \preceq (\clb^{-1}\ma-\ell \mi) \preceq \frac{1}{\clb}(u\mi - \ell\clb  \mi) \preceq \frac{1}{\clb^2} \mi
\]
In the above inequalities we used $u - \ell \leq 1$ that is guaranteed by our algorithm $\TwoSidedOracle$.
Before we proceed to our result we first prove two important and technical lemmas that will be very crucial to our future analysis. The first lemma is a basic matrix inequality that will help prove our later lemma. The second lemma talks about approximating function $x^{-1} \exp(x^{-1})$ by a low degree polynomial that in general are much nicer to work with.
\begin{lemma}
\label{lem:commute}
For all integers $i \geq 0$. If $\mzero \preceq \ma \preceq \mb$ and $\ma\mb = \mb \ma$, then $\ma^{i} \preceq \mb^{i}$. 
\end{lemma}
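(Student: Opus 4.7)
The plan is to reduce to the scalar case via simultaneous diagonalization. Because $\ma, \mb$ are PSD (hence symmetric) and commute, a standard linear algebra fact gives an orthogonal matrix $\mq$ and diagonal matrices $\mda, \mdb$ with $\ma = \mq \mda \mq^\top$ and $\mb = \mq \mdb \mq^\top$. The hypothesis $\mzero \preceq \ma \preceq \mb$ then becomes $\mzero \preceq \mda \preceq \mdb$, i.e.\ for every index $j$ the eigenvalues satisfy $0 \leq (\mda)_{jj} \leq (\mdb)_{jj}$.

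From here the inequality $a^i \leq b^i$ for nonnegative reals with $a \leq b$ (trivial by monotonicity of $x \mapsto x^i$ on $[0,\infty)$) yields $\mda^i \preceq \mdb^i$ entrywise. Conjugating back by $\mq$ preserves the PSD ordering, so
\[
\ma^i \;=\; \mq\, \mda^i\, \mq^\top \;\preceq\; \mq\, \mdb^i\, \mq^\top \;=\; \mb^i,
\]
where I used that $\ma^i = (\mq \mda \mq^\top)^i = \mq \mda^i \mq^\top$ since $\mq^\top \mq = \mi$, and similarly for $\mb^i$. The base case $i = 0$ gives $\mi \preceq \mi$ and is handled separately (or simply by the convention $\ma^0 = \mi = \mb^0$).

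There is no real obstacle here; the only subtlety is invoking simultaneous diagonalization, which requires both the symmetry of $\ma, \mb$ (guaranteed by PSD, per the paper's preliminaries where $\succeq$ is only defined for symmetric matrices) and the commuting hypothesis $\ma \mb = \mb \ma$. An alternative inductive proof is possible — assume $\ma^i \preceq \mb^i$ and multiply by $\ma$ or $\mb$ — but without commutativity this argument is delicate, so the diagonalization route is cleanest and matches the way the lemma will be applied later when $\ma$ is a polynomial in $\mb$ (hence automatically commuting with it).
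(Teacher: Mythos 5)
Your proof is correct and follows essentially the same route as the paper's: simultaneous orthogonal diagonalization of the commuting symmetric matrices, reduction to the scalar inequality $0 \leq a \leq b \Rightarrow a^i \leq b^i$ on the diagonal entries, and conjugation back. No gaps.
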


\begin{proof} We know that a set of matrices commute iff they are simultaneously 
diagonalizable. Thus there exists $\mpp$ such that $\mpp^{-1} \ma \mpp = \mda$ and $\mpp^{-1} \mb \mpp = \mdb$. Since $\ma$ and $\mb$ are symmetric, $\mpp$ must be orthogonal. 
Thus, $\mzero \preceq \ma \preceq \mb$ implies $\mzero \preceq \mda \preceq \mdb$. This clearly implies $\mda^{i} \preceq \mdb^{i}$ and we have $\ma^{i} = \mpp \mda \mpp^{-1} \mpp \mda \mpp^{-1}\dots \mpp \mda \mpp^{-1}  = \mpp \mda^i \mpp^{-1} \preceq \mpp \mdb^{i} \mpp^{-1} = \mpp \mdb \mpp^{-1} \mpp \mdb \mpp^{-1}\dots \mpp \mdb \mpp^{-1} = \mb^i$.
\end{proof}
Note that in general $\ma \preceq \mb$ does not imply $\ma^2 \preceq \mb^2$. However,  the previous lemma shows that if the matrices are PSD and commute then $\ma \preceq \mb$ does imply $\ma^2 \preceq \mb^2$. With this in mind, we provide our next technical result \autoref{thm:taylor_expand} using a standard result \autoref{thm:cauchy}.

\begin{theorem}\label{thm:cauchy}
[\textbf{Cauchy's Estimates}]\label{thm:cauchy_estimate} Suppose $f$ is holomorphic
on a neighborhood of the ball $B(s)\triangleq\{z\in\mathbb{C}\ :\ \left|z-s\right|\leq r\}$,
then we have that
\[
\left|f^{(k)}(s)\right|\leq\frac{k!}{r^{k}}\sup_{z\in B(s)}\left|f(z)\right|.
\]
\end{theorem}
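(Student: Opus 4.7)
The plan is to derive the bound directly from Cauchy's integral formula for derivatives. Since $f$ is holomorphic on a neighborhood of the closed ball $B(s)$, Cauchy's integral formula gives, for every $k \geq 0$,
\[
f^{(k)}(s) = \frac{k!}{2\pi i} \oint_{|z-s|=r} \frac{f(z)}{(z-s)^{k+1}}\, dz,
\]
where the contour is traversed once counterclockwise. This is the starting point I would invoke; its derivation (by differentiating the $k=0$ Cauchy integral formula under the integral sign, which is justified by uniform convergence on the circle) can be taken as standard complex analysis.

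Next I would apply the standard ML-estimate (the ``estimation lemma'') to this contour integral. On the circle $|z-s|=r$ the integrand has modulus at most $|f(z)|/r^{k+1}$, and the length of the contour is $2\pi r$. Combining these,
\[
|f^{(k)}(s)| \leq \frac{k!}{2\pi}\cdot \frac{\sup_{|z-s|=r}|f(z)|}{r^{k+1}}\cdot 2\pi r = \frac{k!}{r^k}\sup_{|z-s|=r}|f(z)|.
\]
Finally, since the supremum over the circle $\{|z-s|=r\}$ is bounded above by the supremum over the full closed ball $B(s)=\{|z-s|\leq r\}$, the claimed inequality follows.

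There is no real obstacle here: the only subtlety is ensuring the Cauchy integral formula for the $k$-th derivative is available, which requires $f$ to be holomorphic on an open neighborhood of the \emph{closed} ball $B(s)$ (this is why the hypothesis is stated with the closed ball). Given that, the proof is a one-line application of Cauchy's integral formula followed by the ML-estimate. No separate handling of $k=0$ is needed, since the formula specializes correctly.
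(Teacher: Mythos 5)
Your proof is correct and is exactly the standard argument (Cauchy's integral formula for the $k$-th derivative followed by the ML-estimate); the paper itself states this as a known fact without proof, so there is nothing to compare against and nothing further to add.
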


\begin{theorem}\label{thm:taylor_expand}
Let $t\geq 1$ and $0 < x \leq t$, and set $f(x) = x^{-1} \exp(x^{-1})$. Then,
\[
\left|f(x)-\sum_{k=0}^{d}\frac{1}{k!}f^{(k)}(1)(x-1)^{k}\right|\leq 4t(d+1)\cdot \mathrm{e}^{\frac{2t-1}{x}-\log x-x(d+1)/t} f(x).
\]
\end{theorem}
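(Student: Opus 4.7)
The plan is to combine Taylor's theorem with Cauchy's estimate (\autoref{thm:cauchy_estimate}) applied on a contour chosen to avoid the essential singularity of $f$ at $0$. Since $f(z) = z^{-1} e^{1/z}$ is holomorphic on $\mathbb{C}\setminus\{0\}$, I would use the Hermite contour form of the Taylor remainder,
\[
f(x) - \sum_{k=0}^{d}\frac{f^{(k)}(1)}{k!}(x-1)^{k} \;=\; \frac{(x-1)^{d+1}}{2\pi i}\oint_{\gamma}\frac{f(z)}{(z-1)^{d+1}(z-x)}\,dz,
\]
valid for any simple closed contour $\gamma\subset\mathbb{C}\setminus\{0\}$ enclosing both $1$ and $x$. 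This identity is obtained from Cauchy's integral formula for $f(x)$ and each $f^{(k)}(1)$, together with the algebraic expansion $\frac{1}{z-x} = \sum_{k=0}^{d}\frac{(x-1)^{k}}{(z-1)^{k+1}} + \frac{(x-1)^{d+1}}{(z-1)^{d+1}(z-x)}$, so the statement reduces to bounding the resulting contour integral.

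The heart of the proof is choosing $\gamma$. On $\gamma$ we need to bound $|f(z)|$ from above and $|(z-1)^{d+1}(z-x)|$ from below. Writing $|f(z)| = |z|^{-1} e^{\operatorname{Re}(1/z)}$, the first requirement calls for $\operatorname{dist}(\gamma,0)$ to be of order $x/(2t)$, which yields $|f(z)| \lesssim (2t/x)\,e^{2t/x}$ and, after dividing through by $f(x) = x^{-1} e^{1/x}$, produces the advertised $e^{(2t-1)/x - \log x}$ factor. Because $x$ may be as large as $t$, a circle around $1$ of radius less than $1$ cannot enclose $x$; the natural replacement is the boundary of an $r$-tubular neighborhood of the real segment $[\min(1,x),\max(1,x)]$ (a ``stadium''), or equivalently an ellipse with foci $1$ and $x$, with $r \asymp x/(2t)$. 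A computation of the minimum of $|z-1|^{d+1}|z-x|$ on this contour (attained at the semicircular caps) contributes a factor $((x-1)/r)^{d+1}$, which becomes the $e^{-x(d+1)/t}$ factor once the chosen $r$ is substituted; multiplying by the length of $\gamma$, which is $O(|x-1| + r)$, and by the $2t$ coming from the bound on $|f(z)|$ produces the prefactor $4t(d+1)$.

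The main obstacle will be the simultaneous optimization of the contour parameters so that the exponent lands exactly at $\frac{2t-1}{x} - \log x - \frac{x(d+1)}{t}$ rather than off by a constant, and verifying that the same contour works uniformly in the two regimes $x\leq 1$ and $x\geq 1$, since in the former case the segment $[x,1]$ has its left endpoint near $0$ and the stadium's minimum distance to the origin is governed by $x$ rather than $1$. Once the contour is fixed, the remaining estimates reduce to standard integrals over circular arcs and line segments and should proceed mechanically from the geometry, with \autoref{thm:cauchy_estimate} invoked in its classical form on each individual component of $\gamma$.
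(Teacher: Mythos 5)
Your reduction to the Hermite contour form of the remainder is standard and fine, but the contour you propose defeats the purpose of the formula, and there is a second, more structural problem. The integrand contains $(z-1)^{-(d+1)}$, so the absolute-value bound you plan to take produces the factor $\sup_{z\in\gamma}\bigl(|x-1|/|z-1|\bigr)^{d+1}$, which must be \emph{small}; equivalently $\gamma$ must stay at distance strictly greater than $|x-1|$ from the expansion point. Your stadium of radius $r\asymp x/(2t)$ does the opposite: on the semicircular cap around $1$ you have $|z-1|=r\ll|x-1|$, so the factor you yourself identify, $((x-1)/r)^{d+1}=\bigl(2t(x-1)/x\bigr)^{d+1}$, is exponentially \emph{large} in $d$ whenever $x-1>x/(2t)$, not the decaying $e^{-x(d+1)/t}$ you need. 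Worse, no contour can rescue the $1$-centered expansion for all $x\le t$: a simple closed curve enclosing $1$ while avoiding the open disk $|z-1|<|x-1|$ must enclose that entire disk, which contains the origin once $x>2$, where $f$ is not holomorphic. This is not an artifact of the method: the Taylor series of $f$ at $1$ has radius of convergence $1$ (the distance to the essential singularity at $0$), so for $x>2$ the partial sums diverge in $d$ while the claimed right-hand side tends to $0$ --- the inequality as literally stated fails in that regime. What the paper's proof actually establishes (its statement and proof disagree on the expansion point) is the bound for the expansion centered at $t$, i.e.\ with $f^{(k)}(t)(x-t)^{k}$, for which $|x-t|<t=\mathrm{dist}(t,0)$ always holds.

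Once the center is moved to $t$, your strategy does go through, but with a large circle rather than a thin stadium: take $\gamma=\{|z-t|=t-\tfrac{x}{2t}\}$, which encloses both $t$ and $x$ and excludes $0$. Then $\mathrm{dist}(\gamma,0)=\tfrac{x}{2t}$ gives $\sup_{\gamma}|f|\le\tfrac{2t}{x}e^{2t/x}$, the decay factor is $\bigl(\tfrac{t-x}{t-x/(2t)}\bigr)^{d+1}\le e^{-x(d+1)/(2t)}$, and $|z-x|\ge x(1-\tfrac{1}{2t})\ge x/2$ on $\gamma$, reproducing the paper's bound up to constants in the exponent. The paper itself avoids any global contour: it writes the remainder in the real integral form $\tfrac{1}{d!}\int_{t}^{x}f^{(d+1)}(s)(x-s)^{d}\,ds$ and applies Cauchy's estimate (\autoref{thm:cauchy}) separately at each real $s\in[x,t]$ on a disk of the $s$-dependent radius $s-\tfrac{x}{2t}$; the ratio $\tfrac{s-x}{s-x/(2t)}$ is then uniformly at most $e^{-\Omega(x/t)}$ and integrating this geometric decay over $[x,t]$ yields the stated estimate. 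Either route is acceptable, but you must correct both the expansion point and the contour.
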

\begin{proof}
By the formula of the remainder term in Taylor series, we have that
\[
f(x)=\sum_{k=0}^{d}\frac{1}{k!}f^{(k)}(t)(x-t)^{k}+\frac{1}{d!}\int_{t}^{x}f^{(d+1)}(s)(x-s)^{d}ds.
\]
For any $s\in[x,t]$, we define $D(s)=\{z\in\mathbb{C}\ :\ \left|z-s\right|\leq s-\frac{x}{2t}\}$. Note that since $f(z)$ is decreasing for all $z > 0$,  $\left|f(z)\right|\leq(x/2t)^{-1}\exp(2t/x)$ on $z\in D(s)$ because $z\geq\frac{x}{2t}$.
Cauchy's remainder theorem now shows that
\[
\left|f^{(d+1)}(s)\right|\leq\frac{(d+1)!}{(s-\frac{x}{2t})^{d+1}}\sup_{z\in B(s)}\left|f(z)\right|\leq\frac{(d+1)!}{(s-\frac{x}{2t})^{d+1}}\frac{2t}{x}\exp\left(\frac{2t}{x}\right).
\]
Hence, we have that
\begin{align*}
\left|f(x)-\sum_{k=0}^{d}\frac{1}{k!}f^{(k)}(t)(x-t)^{k}\right|\leq & \frac{1}{d!}\left|\int_{t}^{x}\frac{(d+1)!}{(s-\frac{x}{2t})^{d+1}}\frac{2t}{x}\exp\left(\frac{2t}{x}\right)(x-s)^{d}\mathrm{d}s\right|\\
= & \frac{2t(d+1)\mathrm{e}^{\frac{2t}{x}}}{x}\int_{x}^{t}\frac{(s-x)^{d}}{(s-\frac{x}{2t})^{d+1}}\mathrm{d}s\\
\leq & \frac{4t(d+1)\mathrm{e}^{\frac{2t}{x}}}{x^2}\int_{x}^{t}\frac{(s-x)^{d}}{(s-\frac{x}{2t})^{d}}\mathrm{d}s\\
= & \frac{4(d+1)\mathrm{e}^{\frac{2t}{x}}}{x^2}\int_{x}^{t}\Big(1- \frac{(1-1/2t)x}{s-\frac{x}{2t}}\Big)^d \mathrm{d}s\\
\leq & \frac{4(d+1)\mathrm{e}^{\frac{2t}{x}}}{x^2}\int_{x}^{t}\Big(1- \frac{x}{t}\Big)^d \mathrm{d}s\\
= & \frac{4t(d+1)\mathrm{e}^{\frac{2t}{x}}}{x^2}\Big(1- \frac{x}{t}\Big)^{d+1} \\
\leq & \frac{4t(d+1)\mathrm{e}^{\frac{2t}{x}}}{x^2} e^{-x(d+1)/t}\\
\leq & 4t(d+1)\cdot \mathrm{e}^{\frac{2t-1}{x}-\log x-x(d+1)/t} f(x).
\end{align*}
In light of the above fact, we see that if $d \geq \frac{ct^2}{x^2} \log(\frac{1}{xt\error})$,
\[
\left|f(x)-\sum_{k=0}^{d}\frac{1}{k!}f^{(k)}(1)(x-1)^{k}\right|\leq \error f(x).
\]
\end{proof}

We are now ready to give proof for our main result of this section. 
\begin{proof}[Proof of \autoref{lem:JL}]
Define $g\defeq \Omega(\log^{-1}n \cdot \clb^2)$, $\me \defeq 2(u\mi-\ma)$ and note that $2\Omega(\log^{-1}n)\mi \preceq \me \preceq \frac{1}{\clb^2}\mi$. If we let $p(\me)$ be the Taylor approximation polynomial with degree $\frac{c}{g^2} \log(\frac{1}{g \error})$, for some constant $c$, then
\[
(1-\error) f(\me) \preceq p(\me) \preceq (1+\error) f(\me).
\]
Now $\me$ commutes with $f(\me)$ and therefore $p(\me)$ commutes with $f(\me)$ and by \autoref{lem:commute} we have:
\begin{equation}\label{eq:square}
(1-3\error) f(\me)^{2} \preceq p(\me)^{2} \preceq (1+3\error) f(\me)^{2}~.
\end{equation}
Next we simplify some expressions based on our previous discussions:
\[
\phideriv(u\mi-\ma)=\phideriv \left(\frac{\me}{2} \right)
= 4 \cdot \me^{-1} \exp(\me^{-1}) \exp(\me^{-1}) \me^{-1}
= 4 \cdot f(\me)^2
\]

By \autoref{eq:square} $f(\me)^2 \approx_{3\error} p(\me)^2$. Combining all this, for any matrix $\mx \succeq 0$ we have:
$$ \tr (\mx \mb^{-1/2} \phideriv(u\mi-\ma) \mb^{-1/2}) \approx_{3\error} 4 \tr (p(u\mi-\ma) \mb^{-1/2}  \mx \mb^{-1/2} p(u\mi-\ma)) $$
in the above expression we replaced $\phideriv(u\mi-\ma)$ by $p(u\mi-\ma)^2$ from our earlier discussion. Now recall $\ma=\mb^{-1/2}\hat{\mm} \mb^{-1/2}$ and $p(u\mi-\ma)$ is essentially equal to $\mb^{-1/2}q(\hat{\mm}\mb^{-1})\mb^{1/2}$ for some other polynomial $q(\cdot)$ of same degree. Further $\mb^{-1/2} p(u\mi-\ma) $ simplifies to $\mb^{-1}q(\hat{\mm}\mb^{-1})\mb^{1/2}$ and we get the following,
\begin{align*}
\tr (\mx \mb^{-1/2} \phideriv(u\mi-\ma) \mb^{-1/2}) \approx_{3\error} 4 \tr \left( \left(\mb^{-1}q(\hat{\mm}\mb^{-1})\mb^{1/2} \right)^\top \mx \left(\mb^{-1}q(\hat{\mm}\mb^{-1})\mb^{1/2}  \right) \right), 
\end{align*}
 Rewriting the right hand side of the above equation we get, 
 \[
\tr (\mx \mb^{-1/2} \phideriv(u\mi-\ma) \mb^{-1/2}) \approx_{3\error} \tr \left( \left(q(\hat{\mm}\mb^{-1})\mb^{-1}\mx^{1/2}  \right)^\top \mb \left(q(\hat{\mm}\mb^{-1})\mb^{-1}\mx^{1/2}  \right) \right)
\]
The matrix $\mb$ can now be replaced with matrix $\mc \mc^{\top}$ (Recall $\mc \mc^{\top}=\mb$) and we get,
 \[
\tr (\mx \mb^{-1/2} \phideriv(u\mi-\ma) \mb^{-1/2})  \approx_{3\error} \tr \left( \left(q(\hat{\mm}\mb^{-1})\mb^{-1}\mx^{1/2}  \right)^\top \mc \mc^{\top} \left(q(\hat{\mm}\mb^{-1})\mb^{-1}\mx^{1/2}  \right) \right)
\]
Combining all the equations and rewriting the top equation we get,
\begin{align*}
\tr (\mx \mb^{-1/2} \phideriv(u\mi-\ma) \mb^{-1/2}) \approx_{3\error} 4 \tr \left( \left(\mb^{-1}q(\hat{\mm}\mb^{-1})\mc \right)^\top \mx \left(\mb^{-1}q(\hat{\mm}\mb^{-1})\mc  \right) \right), 
\end{align*}
Now we invoke JL on RHS of previous expression to produce the following:
\begin{align*}
\tr (\mx \mb^{-1/2} \phideriv(u\mi-\ma) \mb^{-1/2}) \approx_{4\error} 4 \tr \left( \left(\mb^{-1}q(\hat{\mm}\mb^{-1})\mc \mq^\top \right)^\top \mx \left(\mb^{-1}q(\hat{\mm}\mb^{-1})\mc \mq^\top \right) \right), 
\end{align*}
where in the last two inequalities $\mq$ is a JL transformation matrix with $\tilde{O}(\frac{1}{\error^2})$ rows. We can compute $\mb^{-1}q(\hat{\mm}\mb^{-1})\mc \mq^\top$ in $\tilde{O}\left(\frac{1}{g^2\error^{O(1)}}\left(\runtime_{\mb}+\runtime_{\mb^{-1}} + \runtime_{MV} + \runtime_{SQ}\right)\right)$ time, and once we have these $\tilde{O}(\frac{1}{\error^2})$ precomputed vectors we can compute the trace product as the sum of $\tilde{O}(\frac{1}{\error^2})$ quadratic forms of $\mx$. Observe that for our purposes matrix $\mx$ takes values only from $\mathcal{M}=\{\mm'_{i}\}_{i=1}^{d}$ and we can write the whole vector $c$ in time $\rt$. The $(\clb^{-1}\ma-\ell\mi)$ case is virtually identical and also the analysis for computing vector $v$ needed inside the SDP solver is similar. Combining everything we get the proof of \autoref{lem:JL}.
\end{proof}
\section{Structure of Symmetric Inverse $M$-Matrices}
\label{sec:inv_dense}

Here we provide simple structural theorems about $M$-matrices that allow us to show that if we can solve arbitrary inverse $M$-matrix is $\tilde{O}(n^2)$ time then this suffices to solve inverse $M$-matrices in nearly linear time.

First we show that irreducible invertible $M$-matrices are dense, where recall that a matrix $\mm \in \R^{n \times n}$ is irreducible if there does not exist $S \subseteq [n]$ with $S \notin \{\emptyset, [n]\}$ such that $\mm_{ij} = 0$ for all $i \in S$ and $j \notin S$.

\begin{lemma}[\textbf{Density of Irreducible Invertible Symmetric $M$-Matrices}]
\label{lem:posminv} 
If $\mm \in \R^{n \times n}$ is an irreducible invertible symmetric $M$-matrix then $\mm_{ij} > 0$ for all $i,j \in [n]$.
\end{lemma}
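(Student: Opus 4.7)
The plan is to prove the stronger (and presumably intended) statement that every entry of $\mm^{-1}$ is strictly positive, as this is what the section heading and the surrounding discussion (density of inverse $M$-matrices) demand. I will exploit the Neumann series representation of $\mm^{-1}$ together with the combinatorial characterization of irreducibility for nonnegative matrices.

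First I would unpack the definition: by hypothesis $\mm = s\mi - \ma$ with $s > 0$, $\ma \in \R^{n\times n}_{\geq 0}$, and $\rho(\ma) < s$. Since $\rho(\ma/s) < 1$, the Neumann series converges and gives
\[
\mm^{-1} \;=\; \frac{1}{s}\sum_{k=0}^{\infty} \left(\frac{\ma}{s}\right)^{k}.
\]
Every term on the right is entrywise nonnegative because $\ma \geq 0$, so $(\mm^{-1})_{ij} \geq 0$ for all $i,j$, and strict positivity follows as soon as some power $(\ma^k)_{ij}$ is nonzero. The diagonal entries are already handled by the $k = 0$ term, which contributes $\tfrac{1}{s}\mi$.

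Next I would translate irreducibility of $\mm$ into irreducibility of $\ma$. Since $\mm_{ij} = -\ma_{ij}$ for $i \neq j$, the off-diagonal zero pattern of $\mm$ coincides with that of $\ma$, so $\mm$ irreducible is equivalent to $\ma$ irreducible. Recall that a nonnegative matrix $\ma$ is irreducible if and only if the directed graph $G_\ma$ on $[n]$ with an arc $i \to j$ whenever $\ma_{ij} > 0$ is strongly connected. Hence for any pair $i,j \in [n]$ there is a directed walk from $i$ to $j$ in $G_\ma$ of some length $k \geq 1$; along that walk the product of the corresponding entries of $\ma$ is a strictly positive contribution to $(\ma^k)_{ij}$, and since $\ma \geq 0$ no cancellation is possible. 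Therefore $(\ma^k)_{ij} > 0$.

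Combining the two ingredients, for every $i,j$ (with $i \neq j$ we use the walk above; with $i = j$ we use the $k=0$ term of the Neumann series),
\[
(\mm^{-1})_{ij} \;\geq\; \frac{1}{s^{k+1}}\,(\ma^{k})_{ij} \;>\; 0,
\]
which gives the claim. The only potential subtlety — and it is minor — is the equivalence between irreducibility of the $M$-matrix $\mm$ and irreducibility of its off-diagonal companion $\ma$; this reduces to the trivial observation that changing diagonal entries does not affect the zero/nonzero structure relevant to the definition. Everything else is a direct invocation of the Neumann series and the standard graph-theoretic characterization of irreducibility.
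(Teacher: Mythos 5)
Your proposal is correct and follows essentially the same route as the paper's own proof: both expand $\mm^{-1} = \frac{1}{s}\sum_{k\geq 0}(\ma/s)^k$ via the Neumann series and use irreducibility to guarantee, for each pair $i,j$, a walk in the graph of $\ma$ giving a strictly positive entry of some power $\ma^k$. You also correctly read the lemma as asserting positivity of $\mm^{-1}$ (the statement has a typo), and your explicit handling of the diagonal via the $k=0$ term is a minor tidiness improvement over the paper's version.
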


\begin{proof}
Recall that $\mm$ is an invertible $M$-matrix if and only if $\mm = s \mi- \ma$ where $s > 0$, $\ma \in \R^{n \times n}_{\geq 0}$ and $\rho(\ma) < s$. In this case
\[
\left[ \mm \right]^{-1}_{ij} 
= \frac{1}{s} \left[\mi - \frac{1}{s} \ma \right]_{ij}^{-1}
= \frac{1}{s} \sum_{k = 0}^{\infty} \left[\frac{1}{s} \ma\right]_{ij}^k ~.
\]
Now, consider the undirected graph $G$ for which $\ma$ is its adjacency matrix, i.e. $\ma_{ij}$ is the weight of an edge from $i$ to $j$ whenever $\ma_{ij} \neq 0$. Now $[\ma]_{ij}^k > 0$ if and only if there is a path of length $k$ from $i$ to $j$ in $G$. However, by the assumption that $\mm$ is irreducible we have that $G$ is connected and therefore there is a path between any two vertices in the graph and the result follows.
\end{proof}

Now partitioning a symmetric matrix into irreducible components can easily be done in nearly linear time by computing connected components in the graph induced by the sparsity pattern of the matrix. Furthermore, to solve a linear system in a symmetric matrix it suffices to solve the linear system induced by each of its irreducible components independently. However, since by the previous lemma each of these components is dense we see that if we apply an $\otilde(n^2)$ time solver on each $n \times n$ block that this ultimately yields a nearly linear time algorithm for solving the inverse of $M$-matrices.

\section{M-Matrix and SDD Facts}
\label{sec:appproofs}

Here we prove omitted facts about $M$-matrices and SDD matrices from \autoref{sec:solvers}. In particular we restate and prove \autoref{lemma:mfact1} and \autoref{lemma:mfact2}

\lemmamfactone*
\begin{proof}
$\mx\mm\mx$ is trivially symmetric and therefore it suffices to show that (1)  $e_{i}\mx\mm\mx e_{j} <0$ for all  $i\neq j$ and (2) $\mx\mm\mx \vones \geq 0$.

For (1) note for all $i\in [n]$, $\mx_{ii}= e_i \mm^{-1} \vones \geq 0$ as $\mm^{-1}$ is nonnegative by \autoref{lem:posminv}  and $\mx e_{i}=\mx_{ii} e_{i}$ as  $\mx$ is diagonal matrix. Using these two equalities for all $i\neq j$ we get, $e_{i}\mx\mm\mx e_{j}=(\mx_{ii}\mx_{jj}) \cdot e_{i}\mm e_{j} \leq 0$ as  $\mm_{ij}\leq 0$ by definition of a $M$-matrix.

For (2) note $\mx \vones=\mm^{-1} \vones$ and $\mx\mm\mx \vones=\mx\mm\mm^{-1} \vones=\mx \vones=\mm^{-1} \vones \geq 0$ where again, in the last inequality we used $\mm^{-1}$ is entrywise nonnegative by \autoref{lem:posminv}.
\end{proof}

\lemmamfacttwo*

\begin{proof}
$\mb$ is clearly symmetric and therefore to show that $\mb$ is SDD with non-positive off-diagonal if suffices to show that (1) $e_{i}\mb e_{j} \leq 0$ for all $i\neq j$ and (2) $\mb \vones \geq 0$. 

Now since the claim is trivial when $\alpha = 0$ we can assume without loss of generality that $\alpha > 0$. To prove the inequality we use that by the Woodbury matrix identity it holds that
\[
\mb =
\alpha^{-1} \mi - \alpha^{-2}
\left(\ma + \alpha^{-1} \mi\right)^{-1}
=
\ma - \ma(\alpha^{-1}\mi + \ma)^{-1} \ma ~.
\]
Further, we  use that $ \ma + \alpha^{-1} \mi$ is an $M$-matrix by definition and therefore  $\left(\ma + \alpha^{-1} \mi\right)^{-1}$ is an inverse $M$-matrix that is entrywise nonnegative entries by \autoref{lem:posminv}.

Now for (1) by these two claims we have that for all $i \neq j$ it is the case that 
\[
e_{i}\mb e_{j}^\top 
= e_{i}^\top
\left(
\alpha^{-1} \mi - \alpha^{-2}
\left(\ma + \alpha^{-1} \mi\right)^{-1}
\right)
e_{j}
=
\frac{1}{\alpha^2} e_i^\top \left(\ma + \alpha^{-1} \mi\right)^{-1}  e_j \leq 0
~.
\]
For (2) we use the other Woodbury matrix equality and see that
\[
\mb \vones=\left[(\alpha^{-1}\mi + \ma) -  \ma\right](\alpha^{-1}\mi + \ma)^{-1}\ma\vones=\alpha^{-1}(\alpha^{-1}\mi + \ma)^{-1}\ma\vones \geq 0 ~.
\]
This final inequality follows from the fact that $\ma\vones\geq 0$ because $\ma$ is a SDD matrix and $(\alpha^{-1}\mi + \ma)^{-1}(\ma\vones) \geq 0$ because $(\alpha^{-1}\mi + \ma)^{-1}$ is entrywise nonnegative as discussed. 
\end{proof}

\end{document}